
\documentclass[a4paper,10pt]{article}
\usepackage{amsthm,amssymb, amsmath}
\usepackage{graphicx}
\usepackage{graphics}
\usepackage{times}
\usepackage{amsmath}
\usepackage{amsfonts}
\usepackage{color}

\hoffset -1in \voffset -1in \textwidth 500pt \textheight 735pt

\linespread{1.45}

\newtheorem{Prop}{Proposition}
\newtheorem{Def}{Definition}
\newtheorem{Thm}{Theorem}
\newtheorem{Lem}{Lemma}
\newtheorem{Cor}{Corollary}

\newcommand{\mb}{\mathbf}
\newcommand{\lsb}{\left (}
\newcommand{\rsb}{\right )}
\newcommand{\lmb}{\left \{}
\newcommand{\rmb}{\right \}}

\newcommand\independent{\protect\mathpalette{\protect\independenT}{\perp}}
\def\independenT#1#2{\mathrel{\rlap{$#1#2$}\mkern2mu{#1#2}}}

\begin{document}

\title{~\textbf{Optimality of Binning for Distributed Hypothesis Testing}~}

\date{}

\author{Md. Saifur Rahman and Aaron B. Wagner$^{\footnote{Both authors are with the School of Electrical and Computer Engineering, Cornell University, Ithaca, NY, 14853, USA. (Email: mr534@cornell.edu, wagner@ece.cornell.edu.)}}$}

\maketitle

\begin{abstract}
We study a hypothesis testing problem in which data is compressed distributively and sent to a detector that seeks to decide between two possible distributions for the data. The aim is to characterize all achievable encoding rates and exponents of the type 2 error probability when the type 1 error probability is at most a fixed value. For related problems in distributed source coding, schemes based on random binning perform well and often optimal. For distributed hypothesis testing, however, the use of binning is hindered by the fact that the overall error probability may be dominated by errors in binning process. We show that despite this complication, binning is optimal for a class of problems in which the goal is to ``test against conditional independence.'' We then use this optimality result to give an outer bound for a more general class of instances of the problem.
\end{abstract}

{\textbf{Keywords:}} distributed hypothesis testing, binning, test against conditional independence, Quantize-Bin-Test scheme, Gaussian many-help-one hypothesis testing against independence, Gel`fand and Pinsker hypothesis testing against independence, rate-exponent region, outer bound.

\section{Introduction}
Consider the problem of measuring the traffic on two links
in a communication network and inferring whether
the two links are carrying any common traffic~\cite{He, Ameya}. Evidently,
this inference cannot be made by inspecting the measurements from one
of the links alone, except in the extreme situation in which that
link carries no traffic at all. Thus it is necessary to transport
the measurements from one of the links to the other, or to
transport both measurements to a third location. The measured data is
potentially high-rate, however, so this transportion may require that
the data be compressed. This raises the question of how to compress
data when the goal is not to reproduce it \emph{per se}, but rather
to perform inference. A similar problem arises when inferring
the speed of a moving vehicle from the times that it passes
certain waypoints.

These problems can be modeled mathematically by the setup
depicted in Fig. \ref{fig:Fig1}, which we call the $L$-encoder general hypothesis testing problem. A vector source $(X_1,\dots,X_L, Y)$ has different joint distributions $P_{X_1,\dots,X_L,Y}$ and $Q_{X_1,\dots,X_L,Y}$ under two hypotheses $H_0$ and $H_1$, respectively. Encoder $l$ observes an i.i.d.\ string distributed according to $X_l$ and sends a message to the detector at a finite rate of $R_l$ bits per observation using a noiseless channel. The detector, which has access to an i.i.d.\ string distributed according to $Y$, makes
a decision between the hypotheses. The detector may make two types of error: the type 1 error ($H_0$ is true but the detector decides otherwise) and the type 2 error ($H_1$ is true but the detector decides otherwise). The type 1 error probability is upper bounded by a fixed value. The type 2 error probability decreases exponentially fast, say with an exponent $E$, as the length of the i.i.d.\ strings increases. The goal is to characterize the rate-exponent region of the problem, which is the set of all achievable rate-exponent vectors $(R_1,\dots,R_L,E)$, in the regime in which the type 1 error probability is small. This problem was first introduced by Berger~\cite{Berger} (see also \cite{Han}) and arises naturally in many applications. Yet despite these applications, the theoretical understanding of this problem is far from complete, especially when compared with its sibling, distributed source coding, where random binning has been shown to be a key ingredient in many optimal schemes.
\begin{figure}[htp]
\centering
\includegraphics[width=3.5in, totalheight=0.2\textheight,viewport=10 60 730 500,clip]{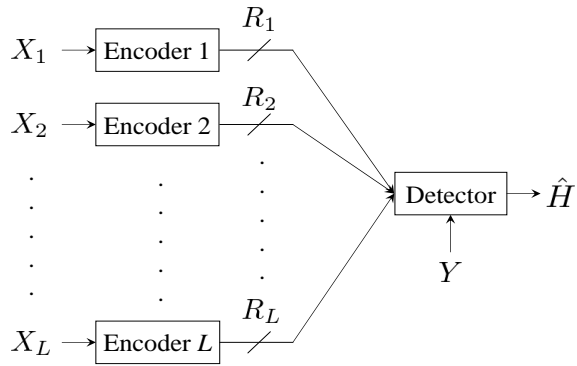}
\caption{$L$-encoder general hypothesis testing}\label{fig:Fig1}
\end{figure}

Note that if one of the variables in the set $(X_1,\ldots,X_L,Y)$ has a different marginal distribution under $P_{X_1,\dots,X_L,Y}$ and $Q_{X_1,\dots,X_L,Y}$, then one of the terminals can detect the underlying hypothesis with an exponentially-decaying type 2 error probability, even without receiving any information from the other terminals, and could communicate this decision to other terminals by broadcasting a single bit. Motivated by the applications mentioned above, we shall focus our attention on the case in which the variables $X_1,\ldots,X_L,Y$ have the same marginal distibutions under both hypotheses.

Ahlswede and Csisz\'{a}r \cite{Ahl} studied a special case of this problem in which $L=1$. They presented a scheme in which the encoder sends a quantized value of $X_1$ to the detector which uses it to perform the test with the help of $Y$. They showed that their scheme is optimal for a ``test against independence.'' Their scheme was later improved by Han~\cite{Han1} and Shimokawa-Han-Amari~\cite{Han2}. In the latter improvement, the encoder first quantizes $X_1$, then bins the quantized value using a Slepian and Wolf encoder \cite{Slepian}. The detector first decodes the quantized value with the help of $Y$ and then performs a likelihood ratio test. In this scheme, type 2 errors can occur in two different ways: the binning can fail so that the receiver decodes the wrong codeword and therefore makes an incorrect decision, or the true codeword can be decoded correctly yet be atypically distributed with $Y$, again resulting in an incorrect decision. Moreover, there is a tension between these two forms of error. If the codeword is a high fidelity representation of $X_1$, then binning errors are likely, yet the detector is relatively unlikely to make an incorrect decision if it decodes the codeword correctly. If the codeword is a low fidelity representation, then binning errors are unlikely, but the detector is more likely to make an incorrect decision when it decodes correctly.

\begin{figure}[htp]
\centering
  \includegraphics[width=3.8in]{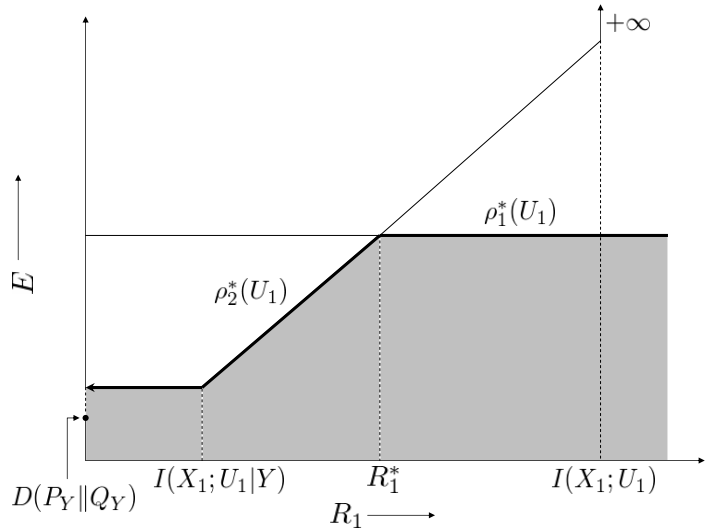}
\caption{Shimokawa-Han-Amari achievable region for a fixed channel $P_{U_1|X_1}$} \label{fig:Fig2}
\end{figure}
Fig. \ref{fig:Fig2} illustrates this tradeoff for a fixed test channel $P_{U_1|X_1}$ used for quantization. All mutual information quantities are computed with respect to $P$. $\rho_2^{*}(U_1)$ and $\rho_1^{*}(U_1)$ are the exponents associated with type 2 errors due to binning errors and assuming correct decoding of the codeword, respectively. Formulas for each are available in ~\cite{Han}. For low rates, binning errors are common and $\rho_2^{*}(U_1)$ dominates the overall exponent. For high rates, binning errors are uncommon and $\rho_1^{*}(U_1)$ dominates the overall exponent. To achieve the overall performance, the test channel should be chosen so that these two exponents are equal; if they are not, then making the test channel slightly more or less noisy will yield better performance. A similar tradeoff arises in the analysis of error exponents of binning-based schemes for the Wyner-Ziv problem ~\cite{Ben, Ben2, Ben3,Kochman} and in the design of short block-length codes for Wyner-Ziv or joint source-channel coding. Evidently the benefit accrued from binning is reduced when one considers error exponents, as opposed to when the design criterion is vanishing error probability or average distortion, because the error exponent associated with the binning process itself may dominate the overall performance.

The Shimokawa-Han-Amari scheme uses random, unstructured binning. It is known from the lossless source coding literature that structured binning schemes can strictly improve upon unstructured binning schemes in terms of the error exponents~\cite{Csiszar, Ben1, Csiszar1}. Thus, two questions naturally arise:
\begin{enumerate}
\item Is the tradeoff depicted in Fig. \ref{fig:Fig2} fundamental to the problem or an artifact of a suboptimal scheme?
\item Can the scheme be improved by using structured binning?
\end{enumerate}

We conclusively answer both questions and show that unstructured binning is optimal in several important cases. We begin by considering a special case of the problem that we call $L$-encoder hypothesis testing against conditional independence. Here $Y$ is replaced by a three-source $(X_{L+1},Y,Z)$ such that $Z$ induces conditional independence between $(X_1,\dots,X_L,X_{L+1})$ and $Y$ under $H_1$. In addition, $(X_1,\dots,X_L,X_{L+1},Z)$ and $(Y,Z)$ have the same distributions under both hypotheses. This problem is a generalization of the single-encoder test against independence studied by Ahlswede and Csisz\'{a}r \cite{Ahl},

For this problem we provide an achievable region, based on a scheme we call Quantize-Bin-Test, that reduces to the Shimokawa-Han-Amari region for $L = 1$ yet is significantly simpler. We also introduce an outer bound similar to the outer bound for the distributed rate-distortion problem given by Wagner and Anantharam \cite{Wagner3}. The idea is to introduce an auxiliary random variable that induces conditional independence between the sources. This technique of obtaining an outer bound has been used to prove results in many distributed source coding problems \cite{Wagner3, Wagner1, Wagner, Wagner2, Ozarow, Wang}.

The inner (achievable) and outer bounds are shown to match in three examples. The first is the case in which there is only one encoder ($L = 1$). Although this problem is simply the conditional version of the test against independence studied by Ahlswede and Csisz\'{a}r~\cite{Ahl}, the conditional version is much more complicated due to the necessary introduction of binning. It follows that the Shimokawa-Han-Amari scheme is optimal for $L = 1$, providing what appears to be the first nontrivial optimality result for this scheme. This problem arises in detecting network flows in the presence of common cross-traffic that is known to the detector. Here $X_1$ represents the network traffic measured at a remote location, $Y$ is the traffic measured at the detector, and $Z$ represents the cross-traffic. The goal is to detect the presence of common traffic beyond $Z$, i.e., to determine whether $Z$ captures all of the dependence between $X_1$ and $Y$.

The second is a problem inspired by a result of Gel`fand and Pinsker~\cite{Gelfand}. We refer to this as the Gel`fand and Pinsker hypothesis testing against independence problem, the setup of which is shown in Fig. \ref{fig:Fig3}. Here $X_{L+1}$ and $Z$ are deterministic and there is a source $X$ which under $H_0$ is the minimum sufficient statistic for $Y$ given $(X_1,\dots,X_L)$ such that $X_1,\dots,X_L, Y$ are conditionally independent given $X$. We characterize the set of rate vectors $(R_1,\dots,R_L)$ that achieve the centralized exponent $I(X;Y)$. We show that the Quantize-Bin-Test scheme is optimal for this problem.

\begin{figure}[htp]
\centering
  \includegraphics[width=3.5in]{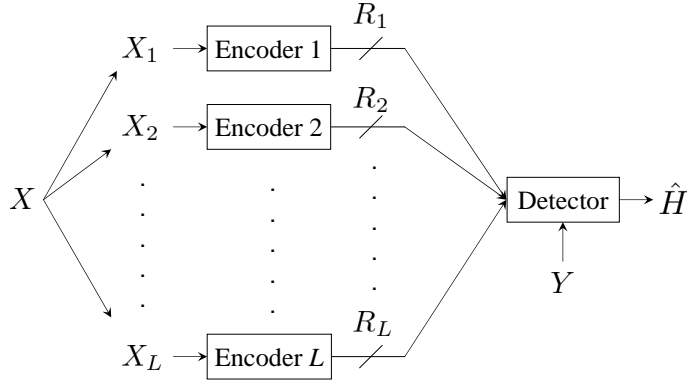}
\caption{Gel`fand and Pinsker hypothesis testing against independence}\label{fig:Fig3}
\end{figure}
The third is the Gaussian many-help-one hypothesis testing against independence problem, the setup of which is shown in Fig. \ref{fig:Fig4}. Here the sources are jointly Gaussian and there is another scalar Gaussian source ${X}$ observed by the main encoder which sends a message to the detector at a rate $R$. The encoder observing $X_l$ is now referred to as the helper $l$. We characterize the rate-exponent region of this problem in a special case when $X_1,\dots,X_L,Y$ are conditionally independent given $X$. We use results on related source coding problem by Oohama \cite{Oohama2005} and Prabhakaran \emph{et al.} \cite{Vinod} to obtain an outer bound, which we show is achieved by the Quantize-Bin-Test scheme.
\begin{figure}[htp]
\centering
  \includegraphics[width=3.5in]{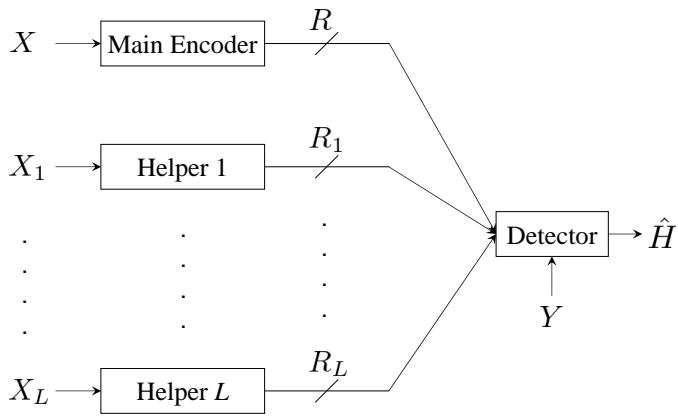}
\caption{Gaussian many-help-one hypothesis testing against independence}\label{fig:Fig4}
\end{figure}

For all three examples, we obtain the solution by observing that the relevant error exponent takes the form of a mutual information, and thereby relate the problem to a source-coding problem. This correspondence was first observed by Ahlswede and Csisz\'{a}r \cite{Ahl}. Tian and Chen later applied it in the context of successive refinement~\cite{Tian}. These three conclusive results enable us to answer both of the above questions. Because the Shimokawa-Han-Amari scheme is optimal for $L=1$, the tradeoff that it entails, depicted in Fig. \ref{fig:Fig2}, must be fundamental to the problem. Moreover, as both the Shimokawa-Han-Amari and Quantize-Bin-Test schemes do not use structured binning, we conclude that it is not necessary for this problem, at least in the special case considered here.

As a byproduct of our results, we obtain an outer bound for a more general class of instances of the distributed hypothesis testing problem. This is the first nontrivial outer bound for the problem, and numerical experiments show that it is quite close to the existing achievable regions in many cases.

The rest of the paper is organized as follows. In Section 2, we introduce the notation used in the paper. We give the mathematical formulation of the $L$-encoder general hypothesis testing problem in Section 3. Section 4 is devoted to the $L$-encoder hypothesis testing against conditional independence problem. Section 5 is on the special case in which there is only one encoder. The Gel`fand and Pinsker hypothesis testing against independence problem is studied in Section 6. The Gaussian many-help-one hypothesis testing against independence problem is studied on Section 7. Finally, we present the outer bound for a class of the general problem in Section 8.

\section{Notation}
We use upper case to denote random variables and vectors. Boldface is used to distinguish vectors from scalars. Arbitrary realizations of random variables and vectors are denoted in lower case. For a random variable $X$, $X^n$ denotes an i.i.d. vector of length $n$, $X^n(i)$ denotes its \emph{i}th component, $X^n(i : j)$ denotes the \emph{i}th through \emph{j}th components, and $X^n(i^c)$ denotes all but the \emph{i}th component. For random variables $X$ and $Y$, we use $\sigma^2_X$ and $\sigma^2_{X|Y}$ to denote the variance of $X$ and the conditional variance of $X$ given $Y$, respectively. The closure of a set $\mathcal{A}$ is denoted by $\overline{\mathcal{A}}$. $|f|$ denotes the cardinality of the range of a function $f$. $1_A$ denotes the indicator function of an event $A$. The determinant of a matrix $\mb{K}$ is denoted by $\det (\mb{K})$. The notation $x^{+}$ denotes $\max(x, 0)$. All logarithms are to the base 2. $\mathbb{R}_{+}^L$ is used to denote the positive orthant in $L$-dimensional Euclidean space. The notation $X \leftrightarrow Y \leftrightarrow Z$ means that $X, Y,$ and $Z$ form a Markov chain in this order. For $0 \le p \le 1$, $H_b(p)$ denotes the binary entropy function defined as
\[
H_b(p) \triangleq - p \log p - (1-p) \log (1-p).
\]
All entropy and mutual information quantities are under the null hypothesis, $H_0$, unless otherwise stated.
\section{$L$-Encoder General Hypothesis Testing}
\subsection{Problem Formulation}
Let $\lsb X_1, \dots, X_L,Y\rsb$ be a generic source taking values in $\prod_{l=1}^{L} \mathcal{X}_l \times \mathcal{Y}$, where $\mathcal{X}_1, \dots, \mathcal{X}_L, $ and $\mathcal{Y}$ are alphabet sets of ${X}_1, \dots, {X}_L, $ and ${Y}$, respectively. The distribution of the source is $P_{{X_1\dots X_LY}}$ under the null hypothesis $H_0$ and is $Q_{{X_1\dots X_LY}}$ under the alternate hypothesis $H_1$, i.e.,
\begin{align*}
H_0: \hspace{0.05in}& P_{{X_1\dots X_LY}} \\
H_1: \hspace{0.05in}& Q_{{X_1\dots X_LY}}.
\end{align*}
Let $\lmb \lsb{X}^n_1(i), \dots, {X}^n_L(i),{Y}^n(i)\rsb \rmb_{i=1}^{n}$ be an i.i.d. sequence of random vectors with the distribution at a single stage same as that of $(X_{1}, \dots, X_{L}, Y)$. We use $\mathcal{L}$ to denote the set $\{1,\dots,L\}$. For $S \subseteq \mathcal{L}$, $S^c$ denotes the complement set $\mathcal{L} \setminus S$ and $\mb{X}^n_{S}(i)$ denotes $({X}^n_l(i))_{l \in S}$. When $S = \mathcal{L}$, we simply write $\mb{X}^n_{\mathcal{L}}(i)$ as $\mb{X}^n(i)$. Likewise when $S = \{l\}$, we write $\mb{X}^n_{\{l\}}(i)$ and $\mb{X}^n_{\{l\}^c}(i)$ as $X^n_l(i)$ and $\mb{X}^n_{l^c}(i)$, respectively. Similar notation will be used for other collections of random variables.

As depicted in Fig. \ref{fig:Fig1}, the encoder $l$ observes $X^n_l$, then sends a message to the detector using an encoding function
\begin{align*}
f_{l}^{(n)} : \mathcal{X}_l^{n} \mapsto \left \{1,\dots,M_{l}^{(n)} \right \}.
\end{align*}
$Y^n$ is available at the detector, which uses it and the messages from the encoders to make a decision between the hypotheses based on a decision rule
\[
g^{(n)}\lsb m_1, \dots, m_L,{y}^n \rsb = \left\{
\begin{array}{l l}
  H_0 & \quad \mbox{if $\lsb m_1, \dots, m_L,{y}^n \rsb \mbox{is in } A$}\\
  H_1 & \quad \mbox{otherwise,}\\ \end{array} \right.
\]
where
\[
A \subseteq \prod_{l=1}^L \lmb 1,\dots,M_{l}^{(n)} \rmb  \times \mathcal{Y}^{n}
\]
is the acceptance region for $H_0$. The encoders $f_l^{(n)}$ and the detector $g^{(n)}$ are such that the type 1 error probability does not exceed a fixed $\epsilon$ in $(0,1)$, i.e.,
\[
P_{\left (f_l^{(n)} \left ({X}_l^n \right)\right)_{l \in \mathcal{L}}{Y}^n}(A^c) \leq \epsilon, \\
\]
and the type 2 error probability does not exceed $\eta$, i.e.,
\[
Q_{\left (f_l^{(n)} \left ({X}_l^n \right)\right)_{l \in \mathcal{L}} {Y}^n}(A) \leq \eta.
\]

\begin{Def}
A rate-exponent vector
\[
(\mb{R},E) = (R_1,\dots,R_L,E)
\]
is \emph{achievable} for a fixed $\epsilon$ if for any positive $\delta$ and sufficiently large $n$, there exists encoders $f^{(n)}_l$ and a detector $g^{(n)}$ such that
\begin{align*}
\frac{1}{n}\log M^{(n)}_l &\leq R_l + \delta \hspace {0.05in} \textrm{for all} \hspace {0.05in} l  \hspace {0.05in} \textrm{in} \hspace {0.05in} \mathcal{L}, \hspace {0.05in} \textrm{and}\\
-\frac{1}{n}\log \eta &\geq E- \delta.
\end{align*}
Let $\mathcal{R}_{\epsilon}$ be the set of all achievable rate-exponent vectors for a fixed $\epsilon$. The \emph{rate-exponent region} $\mathcal{R}$ is defined as
\begin{align*}
\mathcal{R} \triangleq \bigcap_{\epsilon > 0} \mathcal{R}_{\epsilon}.
\end{align*}
\end{Def}

Our goal is to characterize the region $\mathcal{R}$.

\subsection{Entropy Characterization of the Rate-Exponent Region}
We start with the entropy characterization of the rate-exponent region. We shall use it later in the paper to obtain inner and outer bounds. Define the set
\begin{align*}
\mathcal{R}_{*} \triangleq \bigcup_n \bigcup_{\left(f_l^{(n)}\right)_{l \in \mathcal{L}}} \mathcal{R}_{*}\left(n,\left(f_l^{(n)}\right)_{l \in \mathcal{L}}\right),
\end{align*}
where
\begin{align}
\mathcal{R}_{*}\left(n,\left(f_l^{(n)}\right)_{l \in \mathcal{L}}\right) \triangleq \Biggr \{ (\mb{R},E) :
R_l &\ge \frac{1}{n} \log \left|f_l^{(n)}\left({X}_l^n\right)\right| \hspace {0.05 in} \textrm{for all \emph{l} in} \hspace{0.05in} \mathcal{L}, \hspace {0.05 in} \textrm{and} \nonumber\\
E &\le \frac{1}{n} D\biggr(P_{\left (f_l^{(n)} \left ({X}_l^n \right)\right)_{l \in \mathcal{L}}{Y}^n} \Bigr\| Q_{\left (f_l^{(n)} \left ({X}_l^n \right)\right)_{l \in \mathcal{L}}{Y}^n}\biggr) \Biggr \}.
\end{align}
We have the following Proposition.
\begin{Prop}
$\mathcal{R} = \overline{\mathcal{R}_{*}}$.
\end{Prop}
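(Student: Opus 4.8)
The plan is to prove the two inclusions $\overline{\mathcal{R}_*}\subseteq\mathcal{R}$ and $\mathcal{R}\subseteq\overline{\mathcal{R}_*}$ separately, in both cases exploiting the observation that, once the encoders $(f_l^{(n)})_{l\in\mathcal{L}}$ are fixed, the detector faces an ordinary binary hypothesis test between the induced joint laws $P_{\mb{W}Y^n}$ and $Q_{\mb{W}Y^n}$ of the pair $(\mb{W},Y^n)$, where I write $\mb{W}=(f_l^{(n)}(X_l^n))_{l\in\mathcal{L}}$ for the message tuple. The divergence appearing in $\mathcal{R}_*$ is then exactly the quantity governed by the Chernoff--Stein lemma, so the Proposition amounts to a careful multiterminal bookkeeping of that lemma together with the $\epsilon$--$\delta$ slack in the definition of achievability. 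I would first record that $\mathcal{R}$ is closed: if a sequence of achievable vectors converges, then for any $\delta$ one selects a member within $\delta/2$ and reuses its codes with slack $\delta/2$, so the limit is achievable. Hence for the achievability direction it suffices to prove the simpler inclusion $\mathcal{R}_*\subseteq\mathcal{R}$.

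For achievability, fix a point $(\mb{R},E)\in\mathcal{R}_*(m,(f_l^{(m)})_{l\in\mathcal{L}})$ for some block length $m$. I would use the length-$m$ encoders repeatedly: on a block of length $n=km$, apply $f_l^{(m)}$ to each of the $k$ consecutive sub-blocks, producing $k$ i.i.d.\ super-symbols $(\mb{W},Y^m)$ distributed as $P_{\mb{W}Y^m}$ under $H_0$ and as $Q_{\mb{W}Y^m}$ under $H_1$. The per-symbol rate of encoder $l$ is then $\frac1m\log|f_l^{(m)}(X_l^m)|\le R_l$ for every $k$, while the direct part of the Chernoff--Stein lemma, applied to these i.i.d.\ super-symbols, furnishes a detector whose type 1 error is at most $\epsilon$ and whose type 2 error $\eta$ obeys $-\frac{1}{k}\log\eta\to D(P_{\mb{W}Y^m}\|Q_{\mb{W}Y^m})$, so that $-\frac1n\log\eta\to\frac1m D(P_{\mb{W}Y^m}\|Q_{\mb{W}Y^m})\ge E$. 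Since this holds for every $\epsilon\in(0,1)$ and every $\delta>0$ once $k$ is large, $(\mb{R},E)\in\mathcal{R}_\epsilon$ for all $\epsilon$, i.e.\ $(\mb{R},E)\in\mathcal{R}$.

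For the converse, take $(\mb{R},E)\in\mathcal{R}$ and fix small $\epsilon,\delta>0$. By achievability there are, for all large $n$, encoders and an acceptance region $A$ with $P_{\mb{W}Y^n}(A^c)\le\epsilon$, $Q_{\mb{W}Y^n}(A)\le\eta$, $\frac1n\log|f_l^{(n)}(X_l^n)|\le R_l+\delta$, and $-\frac1n\log\eta\ge E-\delta$. Applying the data-processing inequality for relative entropy to the binary statistic $1_A(\mb{W},Y^n)$, and then lower-bounding the resulting binary divergence using $P_{\mb{W}Y^n}(A)\ge1-\epsilon$ and $Q_{\mb{W}Y^n}(A)\le\eta$, yields
\[
\frac1n D\big(P_{\mb{W}Y^n}\big\|Q_{\mb{W}Y^n}\big)\ \ge\ \frac1n\Big[(1-\epsilon)\log\tfrac1\eta-H_b(\epsilon)\Big]\ \ge\ (1-\epsilon)(E-\delta)-\frac{H_b(\epsilon)}{n}.
\]
Thus the vector with rate coordinates $R_l+\delta$ and exponent coordinate $(1-\epsilon)(E-\delta)-H_b(\epsilon)/n$ lies in $\mathcal{R}_*$, since each $\mathcal{R}_*(n,\cdot)$, and hence their union, is upward closed in every $R_l$ and downward closed in $E$. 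Letting $\epsilon\downarrow0$, $\delta\downarrow0$, and $n\to\infty$ drives this vector to $(\mb{R},E)$, so $(\mb{R},E)\in\overline{\mathcal{R}_*}$.

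The routine parts are the blocking construction and the elementary estimate of the binary divergence. The step demanding the most care is the converse inequality: one must pass from the detector's decision on the high-dimensional pair $(\mb{W},Y^n)$ to a statement about the block divergence without losing more than the factor $(1-\epsilon)$ and the additive term $H_b(\epsilon)/n$, and then verify that these losses vanish in the joint limit $\epsilon,\delta\to0$, $n\to\infty$ --- which is precisely what forces the appearance of the closure $\overline{\mathcal{R}_*}$ rather than $\mathcal{R}_*$ itself. I note finally that both the Chernoff--Stein lemma and the data-processing inequality hold for general (including continuous, e.g.\ Gaussian) alphabets, so no separate treatment of the Gaussian instances is needed here.
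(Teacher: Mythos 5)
Your proof is correct and takes essentially the same route as the paper's: the paper omits the argument, citing Theorem~1 of Ahlswede and Csisz\'{a}r, whose proof is precisely your combination of blocking into i.i.d.\ super-symbols plus the Chernoff--Stein lemma for the direct part, and the data-processing bound $D\bigl(P_{\mb{W}Y^n}\bigl\|Q_{\mb{W}Y^n}\bigr)\ge (1-\epsilon)\log(1/\eta)-H_b(\epsilon)$ for the converse, with the closure absorbing the $\epsilon,\delta$ slack. The only glossed (routine) point is that the direct part must produce codes for every sufficiently large $n$, not just multiples of $m$; using $\lfloor n/m\rfloor$ sub-blocks and discarding the remainder perturbs the rates and the exponent by a vanishing amount, so the argument stands.
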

The proof of Proposition 1 is a straight-forward generalization of that of Theorem 1 in \cite{Ahl} and is hence omitted. Ahlswede and Csisz\'{a}r \cite{Ahl} showed that for $L=1$, the strong converse holds, i.e., $\mathcal{R}_{\epsilon}$ is independent of ${\epsilon}$. Thus, $\overline{\mathcal{R}_{*}}$ is essentially a characterization for both $\mathcal{R}$ and $\mathcal{R}_{\epsilon}$. While we expect this to hold for the problem under investigation too, we shall not investigate it here. We next study a class of instances of the problem before returning to the general problem in Section 8.

\section{$L$-Encoder Hypothesis Testing against Conditional Independence}
We consider a class of instances of the general problem, referred to as the $L$-encoder hypothesis testing against conditional independence problem, and obtain inner and outer bounds to the rate-exponent region. These bounds coincide and characterize the region completely in some cases. Moreover, the outer bound for this problem can be used to give an outer bound for a more general class of problems, as we shall see later.

Let $X_{L+1}$ and $Z$ be two generic sources taking values in alphabet sets $\mathcal{X}_{L+1}$ and $\mathcal{Z}$, respectively such that $(\mb{X},X_{L+1})$ and $Y$ are conditionally independent given $Z$ under $H_1$, and the distributions of $(\mb{X},X_{L+1},Z)$ and $(Y,Z)$ are the same under both hypotheses, i.e.,
\begin{align*}
H_0: \hspace{0.05in}& P_{{\mb{X}X_{L+1}Y|Z}}P_Z \\
H_1: \hspace{0.05in}& P_{{\mb{X}X_{L+1}|Z}}P_{Y|Z}P_Z.
\end{align*}
The problem formulation is the same as before with $Y$ replaced by $(X_{L+1},Z,Y)$ in it. The reason for focusing on this special case is that the relative entropy in (1) becomes a mutual information, which simplifies the analysis. Let $\mathcal{R}^{CI}$ be the rate-exponent region of this problem. Here ``$\emph{CI}$" stands for conditional independence. Let
\begin{align*}
\mathcal{R}_{*}^{CI} \triangleq \bigcup_n \bigcup_{\left(f_l^{(n)}\right)_{l \in \mathcal{L}}} \mathcal{R}_{*}^{CI}\left(n,\left(f_l^{(n)}\right)_{l \in \mathcal{L}}\right),
\end{align*}
where
\begin{align*}
\mathcal{R}_{*}^{CI}\left(n,\left(f_l^{(n)}\right)_{l \in \mathcal{L}}\right) \triangleq \Biggr \{ (\mb{R},E) :
R_l &\ge \frac{1}{n} \log \left|f_l^{(n)}\left({X}_l^n\right)\right| \hspace{0.05in} \textrm{for all \emph{l} in} \hspace{0.05in} \mathcal{L}, \hspace {0.05 in} \textrm{and} \\
E &\le \frac{1}{n} I\biggr(\left (f_l^{(n)} \left ({X}_l^n \right)\right)_{l \in \mathcal{L}}, {X}_{L+1}^n;{Y}^n\Bigr|Z^n\biggr) \Biggr \}.\nonumber
\end{align*}
We have the following corollary as a consequence of Proposition 1.
\begin{Cor}
$\mathcal{R}^{CI} = \overline{\mathcal{R}_{*}^{CI}}$.
\end{Cor}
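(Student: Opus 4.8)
The plan is to invoke Proposition 1 for this instance of the general problem, in which the detector's side information $Y$ is replaced by the triple $(X_{L+1}, Z, Y)$, and then to show that the relative-entropy exponent appearing in $\mathcal{R}_{*}$ collapses to the conditional mutual information that defines $\mathcal{R}_{*}^{CI}$. It suffices to establish this identity for each fixed $n$ and each fixed tuple of encoders $(f_l^{(n)})_{l \in \mathcal{L}}$, since then $\mathcal{R}_{*}\lsb n, (f_l^{(n)})_{l \in \mathcal{L}} \rsb = \mathcal{R}_{*}^{CI}\lsb n, (f_l^{(n)})_{l \in \mathcal{L}} \rsb$, and the claim follows by taking unions and closures.

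First I would abbreviate $\mb{W} = (f_l^{(n)}(X_l^n))_{l \in \mathcal{L}}$, noting that $\mb{W}$ is a deterministic function of $\mb{X}^n$. I would then establish the product form of the $H_1$ law of $(\mb{W}, X_{L+1}^n, Z^n, Y^n)$. Because the source is i.i.d.\ and single-letter conditional independence holds under $H_1$, the $n$-letter law factors so that, given $Z^n$, the block $(\mb{X}^n, X_{L+1}^n)$ is independent of $Y^n$; since $\mb{W}$ is a function of $\mb{X}^n$, the same holds with $\mb{X}^n$ replaced by $\mb{W}$. Combining this with the marginal-matching hypothesis — that $(\mb{X}, X_{L+1}, Z)$ and $(Y, Z)$ have the same laws under $H_0$ and $H_1$, which after applying the deterministic map yields $Q_{\mb{W} X_{L+1}^n Z^n} = P_{\mb{W} X_{L+1}^n Z^n}$ and $Q_{Y^n Z^n} = P_{Y^n Z^n}$ — I would conclude that
\[
Q_{\mb{W} X_{L+1}^n Z^n Y^n} = P_{\mb{W} X_{L+1}^n | Z^n}\, P_{Y^n | Z^n}\, P_{Z^n}.
\]

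With this product form in hand, the remaining step is a routine expansion of the relative entropy under $P$. Writing the $H_0$ law as $P_{\mb{W} X_{L+1}^n Y^n | Z^n} P_{Z^n}$ and cancelling the common factor $P_{Z^n}$ against the above, the log-likelihood ratio reduces to $\log\lsb P_{\mb{W} X_{L+1}^n Y^n | Z^n} / (P_{\mb{W} X_{L+1}^n | Z^n} P_{Y^n | Z^n}) \rsb$, whose expectation under $P$ is precisely
\[
D\lsb P_{\mb{W} X_{L+1}^n Z^n Y^n} \,\big\|\, Q_{\mb{W} X_{L+1}^n Z^n Y^n} \rsb = I\lsb \mb{W}, X_{L+1}^n ; Y^n \,\big|\, Z^n \rsb.
\]
This is exactly the exponent constraint in the definition of $\mathcal{R}_{*}^{CI}$, and the rate constraints in the two definitions are already identical, so the two sets agree. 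I expect the only delicate point to be the careful passage from the single-letter conditional independence under $H_1$ to the $n$-letter statement, together with the verification that both the product factorization and the marginal matching survive after replacing $\mb{X}^n$ by its image $\mb{W}$; both are straightforward consequences of the i.i.d.\ structure and of the fact that a deterministic function of one side of a conditionally independent pair preserves that conditional independence.
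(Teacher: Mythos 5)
Your proposal is correct and is exactly the paper's (implicit) argument: the paper states Corollary 1 as a direct consequence of Proposition 1 applied with $Y$ replaced by $(X_{L+1},Z,Y)$, the point being precisely that the factorization $Q_{\mb{W}X_{L+1}^nZ^nY^n}=P_{\mb{W}X_{L+1}^n|Z^n}P_{Y^n|Z^n}P_{Z^n}$ (which follows from i.i.d.\ tensorization, the matching marginals of $(\mb{X},X_{L+1},Z)$ and $(Y,Z)$, and preservation of conditional independence under the deterministic maps $f_l^{(n)}$) turns the divergence in (1) into $I\lsb(f_l^{(n)}(X_l^n))_{l\in\mathcal{L}},X_{L+1}^n;Y^n\,\big|\,Z^n\rsb$. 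Your verification of the $n$-letter product form and of the marginal matching after applying $\mb{W}$ fills in the routine details the paper omits, with no gaps.
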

With mutual information replacing relative entropy, the problem can be analyzed using techniques from distributed rate-distortion. In particular, both inner and outer bounds for that problem can be applied here.
\subsection{Quantize-Bin-Test Inner Bound}
Our inner bound is based on a simple scheme which we call the Quantize-Bin-Test scheme. In this scheme, encoders, as in the Shimokawa-Han-Amari scheme, quantize and then bin their observations, but the detector now performs the test directly using the bins. The inner bound obtained is similar to the generalized Berger-Tung inner bound for distributed source coding \cite{Berger1, Tung, Gastper}. Let $\Lambda_i$ be the set of finite-alphabet random variables $\lambda_i = (U_1,\dots, U_L, T)$ satisfying
\begin{enumerate}
\item[(C1)] $T$ is independent of $(\mb{X},X_{L+1},Y,Z)$, and
\item[(C2)] ${U}_l \leftrightarrow ({X}_l,T) \leftrightarrow (\mb{U}_{l^c}, \mb{X}_{l^c},X_{L+1},Y,Z)$ for all $l$ in $\mathcal{L}$.
\end{enumerate}
Define the set
\begin{align*}
\mathcal{R}_i^{CI}(\lambda_i) \triangleq \biggr \{ (\mb{R},E) :
\sum_{l \in S}R_l &\ge I\left(\mb{X}_S; \mb{U}_S|\mb{U}_{S^c},X_{L+1},Z,T \right)\hspace{0.05in} \textrm{for all} \hspace{0.05in}  S \subseteq \mathcal{L}, \hspace {0.05 in} \textrm{and} \\
E &\le I \left(Y;\mb{U},X_{L+1}|Z,T \right) \biggr \}\nonumber
\end{align*}
and let
\begin{align*}
\mathcal{R}_i^{CI} \triangleq \bigcup_{\lambda_i \in \Lambda_i} \mathcal{R}_i^{CI}(\lambda_i).\nonumber
\end{align*}
The following lemma asserts that $\mathcal{R}^{CI}_i$ is computable and closed.
\begin{Lem}
\begin{enumerate}
\item[(a)] $\mathcal{R}_i^{CI}$ remains unchanged if we impose the following cardinality bound on $(\mb{U},T)$ in $\Lambda_i$
\begin{align*}
|\mathcal{U}_l| &\le |\mathcal{X}_l| + 2^{L} - 1 \hspace{0.05in}\textrm{for all l in} \hspace{0.05in} \mathcal{L}, \hspace {0.05 in} \textrm{and} \\
|\mathcal{T}| &\le 2^{L}.
\end{align*}
\item[(b)] $\mathcal{R}_i^{CI}$ is closed.
\end{enumerate}
\end{Lem}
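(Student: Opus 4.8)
The plan is to establish both parts with standard convexity and continuity arguments once the auxiliary alphabets are controlled. For part (a) I would invoke the Fenchel--Eggleston--Carath\'{e}odory strengthening of the support lemma twice, first to bound $|\mathcal{T}|$ and then to bound each $|\mathcal{U}_l|$ in turn; for part (b) I would use the cardinality bounds from (a) to realize $\mathcal{R}_i^{CI}$ as a continuous image of a \emph{compact} parameter set, whence closedness follows. The whole argument hinges on the observation that the region $\mathcal{R}_i^{CI}(\lambda_i)$ is cut out by exactly $2^L$ information quantities: the $2^L-1$ partial-sum rate bounds $I(\mb{X}_S;\mb{U}_S|\mb{U}_{S^c},X_{L+1},Z,T)$ indexed by the nonempty $S\subseteq\mathcal{L}$, together with the single exponent bound $I(Y;\mb{U},X_{L+1}|Z,T)$.

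To bound $|\mathcal{T}|$, I would note that, because $T$ is independent of the source by (C1), the joint law of $(\mb{X},X_{L+1},Y,Z)$ does not vary with $t$, so each of the $2^L$ defining quantities is an average over $T$ of a continuous functional of the conditional law given $T=t$. Applying the support lemma to the distribution of $T$ while preserving these $2^L$ averages produces a random variable with $|\mathcal{T}|\le 2^L$ that yields the same region point and still satisfies (C1) and (C2). Equivalently, since a shared $T$ merely convexifies the union over constant-$T$ choices, a boundary point of $\mathcal{R}_i^{CI}$ lies in the convex hull of a connected family of constraint-value vectors in $\mathbb{R}^{2^L}$, and Fenchel--Eggleston--Carath\'{e}odory caps the number of required atoms at $2^L$. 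I would then fix such a $T$ and bound $|\mathcal{U}_l|$ one index at a time: holding $T$ and $\mb{U}_{l^c}$ fixed, the channel $P_{U_l|X_l,T}$ enters the joint law only through $X_l$, so I treat the conditional distributions $P_{X_l|U_l=u}$ as the objects indexed by $u$ and preserve (i) the marginal $P_{X_l}$, which costs $|\mathcal{X}_l|-1$ functionals and keeps the source law intact, and (ii) the $2^L$ information quantities above. The support lemma then furnishes a replacement with $|\mathcal{U}_l|\le(|\mathcal{X}_l|-1)+2^L=|\mathcal{X}_l|+2^L-1$; because the new $U_l$ is still a channel output of $(X_l,T)$ alone, (C2) is automatically maintained, and (C1) is untouched.

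For part (b), the bounds from (a) let me restrict attention to joint distributions on the fixed finite alphabets $\prod_l\mathcal{U}_l\times\mathcal{T}\times\prod_l\mathcal{X}_l\times\mathcal{X}_{L+1}\times\mathcal{Y}\times\mathcal{Z}$ that have the prescribed source marginal and obey (C1) and (C2). This set $\mathcal{P}$ is closed and bounded in a finite-dimensional simplex, hence compact, and the maps $P\mapsto I(\mb{X}_S;\mb{U}_S|\mb{U}_{S^c},X_{L+1},Z,T)$ and $P\mapsto I(Y;\mb{U},X_{L+1}|Z,T)$ are continuous on it. Given a sequence $(\mb{R}^{(k)},E^{(k)})\to(\mb{R},E)$ in $\mathcal{R}_i^{CI}$ with witnessing distributions $P^{(k)}\in\mathcal{P}$, I would extract a convergent subsequence $P^{(k_j)}\to P^*\in\mathcal{P}$ and pass to the limit in the inequalities $\sum_{l\in S}R_l^{(k_j)}\ge I(\mb{X}_S;\mb{U}_S|\cdots)$ and $E^{(k_j)}\le I(Y;\mb{U},X_{L+1}|Z,T)$ using continuity; the limit point is then witnessed by $P^*$, proving $(\mb{R},E)\in\mathcal{R}_i^{CI}$.

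The main obstacle, and the step requiring the most care, is the bookkeeping in part (a): I must verify that reducing the supports of $T$ and of each $U_l$ genuinely leaves the Markov/independence constraints (C1) and (C2) in force, and that the preserved functionals simultaneously fix both the source marginal and all $2^L$ region-defining quantities, so that the reduced auxiliaries land in $\Lambda_i$ and reproduce the same $(\mb{R},E)$. The remaining delicacy is logical ordering: part (b) is not self-contained and must be run after (a), since only the resulting finite alphabets make the parameter set $\mathcal{P}$ compact and thereby permit the subsequence extraction.
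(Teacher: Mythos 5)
Your proposal is correct and follows essentially the same route as the paper's Appendix A: the Fenchel--Eggleston--Carath\'{e}odory support lemma preserving the source marginal ($|\mathcal{X}_l|-1$ functionals) together with the $2^L$ region-defining information quantities for part (a), and compactness of the reduced parameter set plus continuity of conditional mutual information for part (b). The only difference is immaterial ordering --- you reduce $|\mathcal{T}|$ first and then each $|\mathcal{U}_l|$ conditionally on $T=t$, whereas the paper reduces the $U_l$'s for deterministic $T$, applies that per atom of $T$, and reduces $T$ last --- and your per-$t$ reading of the $U_l$ reduction (needed so that preserving only $P_{X_l}$ suffices and (C1) survives) yields exactly the paper's counting.
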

The proof of Lemma 1 is presented in Appendix A. Although the cardinality bound is exponential in the number of encoders, one can obtain an improved bound by exploiting the contra-polymatroid structure of $\mathcal{R}_i^{CI}$ \cite{Chen,Viswanath}. We do not do so here because it is technically involved and we just want to prove that $\mathcal{R}_i^{CI}$ is closed. The following theorem gives an inner bound to the rate-exponent region.
\begin{Thm}
$\mathcal{R}_i^{CI} \subseteq \mathcal{R}^{CI}.$
\end{Thm}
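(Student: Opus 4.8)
**

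The plan is to prove $\mathcal{R}_i^{CI} \subseteq \mathcal{R}^{CI}$ by constructing, for each fixed $\lambda_i = (U_1,\dots,U_L,T) \in \Lambda_i$ and each rate-exponent vector in $\mathcal{R}_i^{CI}(\lambda_i)$, a sequence of Quantize-Bin-Test schemes achieving it. Since Corollary 1 gives $\mathcal{R}^{CI} = \overline{\mathcal{R}_{*}^{CI}}$ and the latter is a closed set built from relative-entropy (here, conditional mutual information) expressions, it suffices to show that the codes I construct produce messages $\bigl(f_l^{(n)}(X_l^n)\bigr)_{l\in\mathcal{L}}$ whose induced conditional mutual information $\tfrac{1}{n} I\bigl((f_l^{(n)}(X_l^n))_{l\in\mathcal{L}}, X_{L+1}^n; Y^n \mid Z^n\bigr)$ is asymptotically at least $I(Y;\mb{U},X_{L+1}\mid Z,T)$, while the message rates meet the stated $\sum_{l\in S} R_l$ constraints and the type~1 error stays below $\epsilon$. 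The time-sharing variable $T$ is handled in the standard way (condition all codebook generation on a fixed typical $T^n$ string), so I will largely suppress it and reinstate it at the end.

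First I would fix $T$ and generate, for each encoder $l$, a codebook of roughly $2^{n I(X_l; U_l \mid T)}$ sequences $U_l^n$ drawn i.i.d.\ from $P_{U_l\mid T}$, partition each codebook uniformly into $2^{nR_l}$ bins, and have encoder $l$ perform joint-typicality quantization of its source $X_l^n$ against its codebook, transmitting only the bin index. This is the generalized Berger--Tung / Wyner--Ziv coding strategy, and the covering lemma guarantees that with high probability each encoder finds a jointly typical codeword provided the codebook sizes exceed the covering thresholds. Next I would argue that the detector, holding the bin indices together with $(X_{L+1}^n, Y^n, Z^n)$, can reconstruct the tuple $(\mb{U}^n, X_{L+1}^n)$ reliably under $H_0$: this is a Slepian--Wolf/Berger--Tung decoding step whose success probability tends to one exactly when the rate constraints $\sum_{l\in S} R_l \ge I(\mb{X}_S; \mb{U}_S \mid \mb{U}_{S^c}, X_{L+1}, Z, T)$ hold for all $S \subseteq \mathcal{L}$. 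Crucially, for this inner bound the detector's decision rule need not involve a likelihood-ratio test that could suffer binning errors contributing to the exponent: because the target exponent is itself a conditional mutual information, I only need to lower-bound the genuine conditional mutual information of the transmitted messages, not to design an explicit acceptance region with a delicate exponent tradeoff.

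The central computation is to show that the messages carry enough information about $Y^n$ given $Z^n$. Here I would invoke the entropy characterization: the achievable exponent is $\tfrac{1}{n} I(W_1,\dots,W_L, X_{L+1}^n; Y^n \mid Z^n)$ where $W_l = f_l^{(n)}(X_l^n)$ is the bin index. Using the fact that the reconstructed $(\mb{U}^n, X_{L+1}^n)$ is a function of $(W_1,\dots,W_L, X_{L+1}^n, Y^n, Z^n)$ that succeeds with vanishing error probability, together with a standard single-letterization via the chain rule and the Markov/conditional-independence structure (C1)--(C2), I can show that the per-symbol conditional mutual information concentrates around $I(Y;\mb{U}, X_{L+1}\mid Z, T)$, up to $o(n)$ corrections handled by continuity of entropy (Fano-type and typical-average-lemma arguments). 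Reinstating $T$ and taking the union over $\lambda_i \in \Lambda_i$ then places the whole of $\mathcal{R}_i^{CI}$ inside $\overline{\mathcal{R}_{*}^{CI}} = \mathcal{R}^{CI}$.

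The main obstacle I anticipate is making the step from ``the detector decodes $\mb{U}^n$ correctly with high probability'' to ``the conditional mutual information of the raw bin indices is at least $I(Y;\mb{U},X_{L+1}\mid Z,T) - o(1)$'' fully rigorous. The subtlety is that the exponent in $\mathcal{R}_{*}^{CI}$ is computed from the \emph{actual} messages $W_l$, not from the idealized reconstructions $U_l^n$, so I must control the gap $I(\mb{U}^n, X_{L+1}^n; Y^n \mid Z^n) - I(W_1,\dots,W_L, X_{L+1}^n; Y^n \mid Z^n)$. I would bound this gap using the data-processing inequality in one direction and, in the other direction, a Fano-type inequality driven by the vanishing decoding error probability, so that the two mutual informations agree to within $o(n)$. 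Establishing that this error probability decays fast enough not to spoil the exponent — and doing so simultaneously for all subset constraints $S$ — is where the careful work lies; everything else follows the well-understood Berger--Tung template.
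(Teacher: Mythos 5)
Your proposal is correct and follows essentially the same route as the paper's Appendix~B proof: Berger--Tung quantize-and-bin encoding, the entropy characterization of Corollary~1 in place of an explicit acceptance region, a decodability lemma under the subset rate constraints, Fano plus data processing to transfer mutual information from the bin indices to the codewords, a joint-typicality (Han-type) bound on $\frac{1}{n}H\left(Y^n \mid \mb{U}^n, X_{L+1}^n, Z^n\right)$, and time sharing to handle $T$. The only refinement worth noting is that your closing worry is unnecessary: the decoding error probability need not decay fast, since Fano's per-symbol penalty is proportional to $p_e \sum_{l=1}^{L} \log |\mathcal{U}_l|$, so any small constant $p_e \le \delta$ suffices, exactly as in the paper's Lemma~6.
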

Theorem 1 is proved in Appendix B.

\emph{Remark 1:} Although our inner bound is stated for the special case of the test against conditional independence, it can be extended to the general case. But, the inner bound thus obtained will be quite complicated, with competing exponents, and it is not needed in this paper.

It is worth pointing out that the Quantize-Bin-Test scheme is in general suboptimal for problems in which encoders' observations have common randomness, i.e., there exists deterministic functions of encoders' observations that is common to encoders. However, it is straightforward to generalize this scheme by using the idea from the common-component scheme for distributed source coding problems \cite{Wagner4}.

\subsection{Outer Bound}
The outer bound is similar to the outer bound for the distributed rate-distortion problem given by Wagner and Anantharam \cite{Wagner3}. Let $\Lambda_o$ be the set of finite-alphabet random variables $\lambda_o = (\mb{U}, W, T)$ satisfying
\begin{enumerate}
\item[(C3)] $(W,T)$ is independent of $(\mb{X},X_{L+1},Y,Z)$, and
\item[(C4)] ${U}_l \leftrightarrow ({X}_l,W,T) \leftrightarrow (\mb{U}_{l^c}, \mb{X}_{l^c},X_{L+1},Y,Z)$ for all $l$ in $\mathcal{L}$,
\end{enumerate}
and let $\chi$ be the set of finite-alphabet random variable $X$ such that $X_1, \dots, X_L, X_{L+1}, Y$ are conditionally independent given $(X, Z)$. Note that $\chi$ is nonempty because it contains $(\mb{X},X_{L+1})$. For a given $X$ in $\chi$ and $\lambda_o$ in $\Lambda_o$, the joint distribution of $X$, $(\mb{X},X_{L+1},Y,Z)$, and $\lambda_o$ satisfy the Markov condition
\[
X \leftrightarrow (\mb{X},X_{L+1},Y,Z) \leftrightarrow \lambda_o.
\]
Wagner and Anantharam \cite{Wagner3} refer to this condition as ``{Markov coupling}" between $X$ and $\lambda_o$. Define the set
\begin{align*}
\mathcal{R}_o^{CI}(X,\lambda_o) \triangleq \biggr \{ (\mb{R},E) :
\sum_{l \in S}R_l &\ge I \left({X}; \mb{U}_S|\mb{U}_{S^c},X_{L+1},Z,T\right) + \sum_{l \in S} I\left(X_l ; U_l | X, W,X_{L+1}, Z,T\right) \hspace{0.05in} \textrm{for all} \hspace{0.05in}  S \subseteq \mathcal{L}, \hspace {0.05 in} \textrm{and} \\
E &\le I\left(Y;\mb{U},X_{L+1}|Z,T\right) \biggr \}.\nonumber
\end{align*}
Also let
\begin{align*}
\mathcal{R}_o^{CI} \triangleq \bigcap_{X \in \chi} \bigcup_{\lambda_o \in \Lambda_o} \mathcal{R}_o^{CI}(X,\lambda_o).\nonumber
\end{align*}
We have the following outer bound to the rate-exponent region.
\begin{Thm}
$\mathcal{R}_{*}^{CI} \subseteq \mathcal{R}_o^{CI}$ and therefore $\mathcal{R}^{CI} \subseteq \overline{\mathcal{R}_o^{CI}}.$
\end{Thm}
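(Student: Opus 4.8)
The plan is to show that for \emph{any} finite $n$ and \emph{any} collection of encoders $\left(f_l^{(n)}\right)_{l \in \mathcal{L}}$, the rate-exponent region $\mathcal{R}_{*}^{CI}\left(n,\left(f_l^{(n)}\right)_{l \in \mathcal{L}}\right)$ is contained in $\mathcal{R}_o^{CI}$; since $\mathcal{R}_{*}^{CI}$ is the union over all $n$ and all encoders, the first inclusion follows. The closure statement is then immediate because $\mathcal{R}^{CI} = \overline{\mathcal{R}_{*}^{CI}} \subseteq \overline{\mathcal{R}_o^{CI}}$ by Corollary 1. So fix $n$ and the encoders, write $M_l = f_l^{(n)}\left(X_l^n\right)$ for the messages, and take an arbitrary $X$ in $\chi$. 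The key single-letterization move, following Wagner and Anantharam \cite{Wagner3}, is to introduce a time-sharing variable $T$ uniform on $\{1,\dots,n\}$ and independent of everything, and to define the auxiliary variables at the single-letter level by $U_l = \left(M_l, X_l^n(T^c), T\right)$ (so that $U_l$ carries the message together with the past/future coordinates of source $l$) and $W = \left(X^n(T^c), Z^n(T^c), X_{L+1}^n(T^c)\right)$, i.e. $W$ collects the remaining coordinates of the conditioning source $X$ and of the side information. One then verifies that $(\mb{U}, W, T)$ constructed this way lies in $\Lambda_o$: the independence condition (C3) holds because $T$ and the out-of-time coordinates are independent of the single-stage variables at time $T$, and the Markov chains (C4) hold because, conditioned on $\left(X_l^n, W, T\right)$, the message $M_l$ depends only on $X_l^n$ while the conditional independence structure of $\chi$ (together with the memorylessness of the source) decouples $U_l$ from the other encoders' variables.

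The heart of the argument is the rate bound. First I would establish the standard multiterminal lower bound on the sum rate: for every $S \subseteq \mathcal{L}$,
\[
\sum_{l \in S} \frac{1}{n}\log M_l^{(n)} \ge \frac{1}{n} H\left(M_S \mid M_{S^c}, X_{L+1}^n, Z^n\right) \ge \frac{1}{n} I\left(\mb{X}_S^n; M_S \mid M_{S^c}, X_{L+1}^n, Z^n\right),
\]
and then I would decompose this conditional mutual information. The crucial step is to split $I\left(\mb{X}_S^n; M_S \mid \cdots\right)$ into a term involving the conditioning source $X^n$ and a residual term, using the chain rule and the Markov-coupling relation $X \leftrightarrow (\mb{X},X_{L+1},Y,Z) \leftrightarrow \lambda_o$. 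Writing $I\left(\mb{X}_S^n; M_S \mid M_{S^c}, X_{L+1}^n, Z^n\right) = I\left(X^n; M_S \mid M_{S^c}, X_{L+1}^n, Z^n\right) + \sum_{l \in S} I\left(X_l^n; M_l \mid X^n, \dots\right)$-type identities and invoking the conditional independence of $X_1^n,\dots,X_L^n,Y^n$ given $(X^n,Z^n)$, the aggregate $n$-letter quantities should single-letterize through $T$ into exactly
\[
I\left(X; \mb{U}_S \mid \mb{U}_{S^c}, X_{L+1}, Z, T\right) + \sum_{l \in S} I\left(X_l; U_l \mid X, W, X_{L+1}, Z, T\right),
\]
matching the definition of $\mathcal{R}_o^{CI}(X,\lambda_o)$. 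Separately, the exponent constraint follows by showing $\frac{1}{n} I\left(\mb{M}, X_{L+1}^n; Y^n \mid Z^n\right) \le I\left(Y; \mb{U}, X_{L+1} \mid Z, T\right)$ after single-letterization, which is where the conditional-independence hypothesis $H_1$ makes the relative entropy in $\mathcal{R}_{*}^{CI}$ collapse to this mutual information (via Corollary 1).

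I expect the main obstacle to be the rate decomposition: correctly isolating the $I(X;\mb{U}_S\mid\cdots)$ piece from the per-encoder residuals $\sum_{l\in S} I(X_l;U_l\mid X,W,\cdots)$ while keeping all the conditioning variables aligned, and ensuring that every use of a Markov chain is justified by the $\chi$-conditional-independence together with the memoryless source structure. In particular, one must be careful that the cross terms coupling different encoders vanish under the conditioning on $(X^n, W)$; this is exactly the point at which the auxiliary $W$ (the out-of-time coordinates of the conditioning source) is needed, and it is the analogue of the Berger--Tung/Wagner--Anantharam maneuver. A secondary technical point is checking that the single-letterization is consistent across all $2^L$ subset constraints simultaneously with a \emph{single} choice of $(\mb{U},W,T)$, rather than a subset-dependent auxiliary; this is what lets the resulting region lie inside the intersection-over-$X$, union-over-$\lambda_o$ set defining $\mathcal{R}_o^{CI}$. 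Once these identities are in place, taking the intersection over $X \in \chi$ and the union over $\lambda_o \in \Lambda_o$ completes the containment.
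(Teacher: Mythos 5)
Your overall architecture is the paper's: fix $n$ and the encoders, introduce a time-sharing variable $T$ uniform on $\{1,\dots,n\}$, build $(\mb{U},W,T)\in\Lambda_o$ from the messages and out-of-time coordinates, insert $X^n$ into the rate bound via $H\bigl(M_S \mid M_{S^c}, X_{L+1}^n, Z^n\bigr) = I\bigl(X^n,\mb{X}_S^n; M_S \mid M_{S^c}, X_{L+1}^n, Z^n\bigr)$ (valid since $M_S$ is a function of $\mb{X}_S^n$), split by the chain rule, decouple the per-encoder residuals using the conditional independence defining $\chi$, and single-letterize; the exponent bound and the appeal to Corollary 1 are likewise as in Appendix C. However, your choice of auxiliary variable is wrong, and this is not a cosmetic slip. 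You set $U_l = \bigl(M_l, X_l^n(T^c), T\bigr)$, appending encoder $l$'s \emph{own} out-of-time source letters. The paper instead takes $U_l = \bigl(f_l^{(n)}(X_l^n),\, X^n(1:T-1),\, X_{L+1}^n(T^c),\, Z^n(T^c)\bigr)$ and $W = \bigl(X^n(T^c), X_{L+1}^n(T^c), Z^n(T^c)\bigr)$: only the strict \emph{past} of the conditioning sequence $X^n$ and the out-of-time side information are appended, never $X_l^n(T^c)$. The past-only inclusion of $X^n(1:T-1)$ is exactly what makes the Csisz\'{a}r-sum telescoping $\sum_{i} I\bigl(X^n(i); M_S \mid M_{S^c}, X^n(1:i-1), X_{L+1}^n, Z^n\bigr)$ collapse back to an $n$-letter quantity dominated by the rate. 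Including $X_l^n(T^c)$ (past \emph{and} future of the observed source) over-counts and can make the single-letter rate expression exceed the actual rate, so your $\lambda_o$ fails to witness membership in $\mathcal{R}_o^{CI}(X,\lambda_o)$.

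A concrete counterexample: take $L=1$, $X_2$ and $Z$ deterministic, $X_1$ i.i.d.\ Bernoulli$(1/2)$, and recall the outer bound intersects over \emph{all} $X\in\chi$, so your construction must work in particular for $X = X_1$ (which is in $\chi$). Let $n=2$ and $M_1 = X_1^n(1)\oplus X_1^n(2)$, so $R_1 = 1/2$. With your $U_1 = \bigl(M_1, X_1^n(T^c), T\bigr)$ and $W = X_1^n(T^c)$, the residual term $I\bigl(X_1; U_1 \mid X, W, T\bigr)$ vanishes, but the first term is $I\bigl(X; U_1 \mid T\bigr) = \tfrac{1}{2}\sum_{i=1}^{2} I\bigl(X_1^n(i); M_1, X_1^n(i^c)\bigr) = 1$, since $\bigl(M_1, X_1^n(i^c)\bigr)$ determines $X_1^n(i)$. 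The required inequality $R_1 \ge I(X; U_1 \mid T)$ reads $1/2 \ge 1$ and fails; the theorem is true, but your auxiliaries do not prove it. Two further points you flagged but did not resolve are handled nontrivially in the paper: (i) the case $S = \mathcal{L}$, where no $\mb{U}_{S^c}$ is available to carry the conditioning $\bigl(X^n(1:T-1), X_{L+1}^n(T^c), Z^n(T^c)\bigr)$, requires the separate identity $I\bigl(X^n(i); M \mid X^n(1:i-1), X_{L+1}^n, Z^n\bigr) = I\bigl(X^n(i); M, X^n(1:i-1), X_{L+1}^n(i^c), Z^n(i^c) \mid X_{L+1}^n(i), Z^n(i)\bigr)$, exploiting independence across time; and (ii) the exponent single-letterization needs the Markov chain $Y^n(1:i-1) \leftrightarrow \bigl(X^n(1:i-1), Z^n(1:i-1)\bigr) \leftrightarrow \bigl(\mb{M}, X_{L+1}^n, Y^n(i), Z^n(i:n)\bigr)$ to discard the past of $Y$ after conditioning on the past of $X$ --- which again presupposes that $X^n(1:T-1)$, not $X_l^n(T^c)$, sits inside the auxiliaries.
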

The proof of the first inclusion is presented in Appendix C. The first inclusion and Corollary 1 imply the second inclusion. The next three sections provide examples in which the inner and outer bounds coincide. In Section 8, we will see how to extend the outer bound to the general problem.

\section{$1$-Encoder Hypothesis Testing against Conditional Independence}
In this section, we study a special case in which $L=1$. We prove that the Quantize-Bin-Test scheme is optimal for this problem. We also prove that the Shimokawa-Han-Amari inner bound coincides with the Quantize-Bin-Test inner bound, establishing that the Shimokawa-Han-Amari scheme is also optimal.

\subsection{Rate-Exponent Region}
\begin{Thm}
For this problem, the rate-exponent region
\begin{align}
\mathcal{R}^{CI} =  \overline{\mathcal{R}^{CI}_o} &= \mathcal{R}^{CI}_i \\
&= \tilde {\mathcal{R}}^{CI} \triangleq \Bigr \{(R_1,E):  \hspace{0.05in}\textrm{there exists} \hspace{0.05in}U_1 \hspace{0.05in}\textrm{such that} \nonumber\\
&\hspace{1in}R_1 \ge I(X_1 ;  U_1 | {X}_2, Z), \nonumber\\
&\hspace{1in}E \le I(Y;{U_1},{X}_2|Z), \\
&\hspace{1in}| {\mathcal{U}}_1| \le |\mathcal{X}_1| + 1, \hspace{0.05in}  \textrm{and} \nonumber\\
&\hspace{1in} U_1 \leftrightarrow X_1 \leftrightarrow ({X}_2, Y, Z)\Bigr \}.\nonumber
\end{align}
\end{Thm}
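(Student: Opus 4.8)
The plan is to establish the chain of equalities by proving three inclusions: $\tilde{\mathcal{R}}^{CI} \subseteq \mathcal{R}_i^{CI}$, $\mathcal{R}_i^{CI} \subseteq \overline{\mathcal{R}_o^{CI}}$ (which is immediate from Theorems 1 and 2), and $\overline{\mathcal{R}_o^{CI}} \subseteq \tilde{\mathcal{R}}^{CI}$, the last being the crux of the converse. Since Theorem 1 gives $\mathcal{R}_i^{CI} \subseteq \mathcal{R}^{CI}$ and Theorem 2 gives $\mathcal{R}^{CI} \subseteq \overline{\mathcal{R}_o^{CI}}$, it suffices to show that the single-letter region $\tilde{\mathcal{R}}^{CI}$ sits inside the inner bound and that the closed outer bound sits inside $\tilde{\mathcal{R}}^{CI}$; the sandwich then forces all four sets to coincide.

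Let me sketch each direction. For $\tilde{\mathcal{R}}^{CI} \subseteq \mathcal{R}_i^{CI}$: given $U_1$ satisfying the constraints in $\tilde{\mathcal{R}}^{CI}$, I would set $L=1$ in the definition of $\mathcal{R}_i^{CI}(\lambda_i)$ and choose $T$ to be a constant (trivially independent of everything, so (C1) holds), and take the same $U_1$. Condition (C2) becomes $U_1 \leftrightarrow (X_1, T) \leftrightarrow (X_2, Y, Z)$, which with $T$ constant is exactly the Markov condition $U_1 \leftrightarrow X_1 \leftrightarrow (X_2, Y, Z)$ assumed in $\tilde{\mathcal{R}}^{CI}$. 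The rate and exponent constraints then match verbatim (for $L=1$ the only nonempty $S$ is $\{1\}$), so the point lies in $\mathcal{R}_i^{CI}$. The cardinality bound $|\mathcal{U}_1| \le |\mathcal{X}_1| + 1$ is consistent with Lemma 1(a) specialized to $L=1$, which gives $|\mathcal{X}_1| + 2^1 - 1 = |\mathcal{X}_1| + 1$.

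For the converse direction $\overline{\mathcal{R}_o^{CI}} \subseteq \tilde{\mathcal{R}}^{CI}$: the key observation is that $\tilde{\mathcal{R}}^{CI}$ is already closed (by the cardinality bound and a standard continuity/compactness argument, as in Lemma 1(b)), so it suffices to show $\mathcal{R}_o^{CI} \subseteq \tilde{\mathcal{R}}^{CI}$. The crucial move is to exploit the freedom to \emph{choose $X$ in $\chi$} in the intersection defining $\mathcal{R}_o^{CI}$. I would pick $X = X_1$ itself, which lies in $\chi$ because conditioning on $(X_1, Z)$ trivially renders $X_1, X_2, Y$ conditionally independent in the required sense (here $X_{L+1} = X_2$ and the sources collapse appropriately for $L=1$). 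With this choice, the first term in the rate bound, $I(X; U_1 | X_2, Z, T) = I(X_1; U_1 | X_2, Z, T)$, and the second term $I(X_1; U_1 | X, W, X_2, Z, T) = I(X_1; U_1 | X_1, W, X_2, Z, T) = 0$ since conditioning on $X_1$ kills it. The rate bound therefore reduces to $R_1 \ge I(X_1; U_1 | X_2, Z, T)$, and the exponent bound is $E \le I(Y; U_1, X_2 | Z, T)$. The remaining task is to absorb the time-sharing variable $T$: I would condition on the value of $T$ and use a convexity/averaging argument to find a single $U_1'$ (jointly distributed with $X_1$ and Markov as required) achieving the bounds without $T$, then re-impose the cardinality bound. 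The hard part will be verifying that the Markov structure (C4) with $X = X_1$ indeed collapses to $U_1 \leftrightarrow X_1 \leftrightarrow (X_2, Y, Z)$ after removing $T$ and $W$, and checking that $W$ can be dropped without loss; this requires carefully tracking how (C3)–(C4) interact with the conditional-independence structure of the source under $H_0$. This step — showing the outer bound's auxiliary variables $W$ and $T$ are redundant for $L=1$ and that $X=X_1$ is the tight choice — is where the real content lies, and it mirrors the degenerate-auxiliary argument used by Wagner and Anantharam in the distributed rate-distortion setting.
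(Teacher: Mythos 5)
Your overall architecture --- sandwiching the four sets via Theorem 1, Theorem 2, and a closedness argument, then collapsing the outer bound to the single-letter region by a judicious choice of $X$ in $\chi$ and absorbing the auxiliaries --- is exactly the paper's strategy (the paper proves $\mathcal{R}^{CI}_o \subseteq \mathcal{R}^{CI}_i$ and invokes Lemma 1(b) rather than proving $\tilde{\mathcal{R}}^{CI}$ closed, which is cosmetic), and your verification of $\tilde{\mathcal{R}}^{CI} \subseteq \mathcal{R}^{CI}_i$ via deterministic $T$ matches the paper's remark. The genuine gap is at the crux: your claim that $X = X_1$ ``trivially'' lies in $\chi$. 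By the definition of $\chi$, membership of $X_1$ requires $X_1, X_2, Y$ to be conditionally independent given $(X_1, Z)$, which (since $X_1$ is degenerate under that conditioning) amounts to $X_2 \independent Y \mid (X_1,Z)$ under $H_0$. Nothing in the problem grants this: under $H_0$ the joint $P_{X_1X_2Y|Z}$ is arbitrary, and this is precisely the kind of residual dependence the detector is trying to detect --- it holds under $H_1$, not under $H_0$. So the intersection defining $\mathcal{R}^{CI}_o$ need not contain the constraint set you want to invoke, and your converse collapses at this step.

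The repair is small and instructive. The member of $\chi$ that is available unconditionally is $X = (X_1,X_2)$ --- i.e., $(\mb{X},X_{L+1})$, which the paper itself notes belongs to $\chi$ --- and it delivers exactly the reduction you were after: the first rate term becomes $I(X_1,X_2;U_1|X_2,Z,T) = I(X_1;U_1|X_2,Z,T)$ while the helper term $I(X_1;U_1|X_1,X_2,W,Z,T)$ vanishes, after which your absorption of $T$ goes through; note this is an exact identity using the independence of $T$ from the sources, as in the paper's steps (5)--(7), not a convexity or averaging argument. Moreover, $W$ then disappears for free: it never enters the rate or exponent bounds, and it is simply marginalized out of the chain $(U_1,W,T) \leftrightarrow X_1 \leftrightarrow (X_2,Y,Z)$, so the step you flagged as ``the hard part'' is actually immediate, whereas the membership claim you treated as trivial is where the content sits. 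The paper's own proof instead takes $X = X_2$, whose rate bound retains $W$ in the conditioning, and must therefore spend an extra step (its inequality (4)) lifting the exponent bound to $(W,T)$-conditioning using the independence of $(Y,Z)$ from $(W,T)$ before setting $\tilde{T} = (W,T)$; with an $X$ that contains $X_1$ this maneuver is unnecessary. Finally, if you keep your route through $\overline{\mathcal{R}^{CI}_o} \subseteq \tilde{\mathcal{R}}^{CI}$, you still owe a proof that $\tilde{\mathcal{R}}^{CI}$ is closed (compactness under the cardinality bound plus continuity of mutual information, as in Lemma 1); the paper sidesteps this by landing directly in $\mathcal{R}^{CI}_i$, which is closed by Lemma 1(b).
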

\begin{proof}
To show (2), it suffices to show that
\[
\mathcal{R}^{CI}_o \subseteq \mathcal{R}^{CI}_i,
\]
because $\mathcal{R}^{CI}_i$ is closed from Lemma 1(b). Consider $(R_1,E)$ in $\mathcal{R}^{CI}_o$. Take $X={X}_2$. It is evident that ${X}_2$ is in $\chi$. Then there exists $\lambda_o = (U_1,W,T)$ in $\Lambda_o$ such that $(R_1,E)$ is in $\mathcal{R}^{CI}_o({X}_2,\lambda_o)$, i.e.,
\begin{align*}
R_1 &\ge I({X}_2; {U}_1| {X}_2,Z,T) + I(X_1 ; U_1 | {X}_2, Z,W,T)\\
&=I(X_1 ; U_1 | {X}_2, Z,W,T),
\end{align*}
and
\begin{align}
E &\le I(Y;{U_1},{X}_2|Z,T) \nonumber\\
&=H(Y|Z,T) - H(Y|U_1,{X}_2,Z,T) \nonumber\\
&\le H(Y|Z,W,T) - H(Y|U_1,{X}_2,Z,W,T)\\
&= I(Y;{U_1},{X}_2|Z,W,T),\nonumber
\end{align}
where (4) follows from conditioning reduces entropy and the fact that $(Y,Z)$ is independent of $(W,T)$. If we set $\tilde T = (W,T)$, then it is easy to verify that $\lambda_i = (U_1,\tilde T)$ is in $\Lambda_i$ and we have
\begin{align}
R_1 &\ge I(X_1 ; U_1 | {X}_2, Z,\tilde T), \hspace{0.05in} \textrm{and}\\
E &\le I(Y;{U_1},{X}_2|Z,\tilde T).
\end{align}
Therefore, $(R_1,E)$ is in $\mathcal{R}^{CI}_i(\lambda_i)$, which implies that $(R_1,E)$ is in $\mathcal{R}^{CI}_i$. This completes the proof of (2).

To prove (3), it suffices to show that
\[
\mathcal{R}^{CI}_i \subseteq \tilde {\mathcal{R}}^{CI}.
\]
The reverse containment immediately follows if we restrict $T$ to be deterministic in the definition of $\mathcal{R}^{CI}_i$. Continuing from the proof of (2), let $\tilde U_1 = (U_1, \tilde T)$. Since $(U_1,\tilde T)$ is in $\Lambda_i$, we have that $\tilde T$ is independent of $({X}_1,X_2,Y,Z)$ and that
\[
U_1 \leftrightarrow (\tilde T, X_1) \leftrightarrow ({X}_2,Y,Z).
\]
Both together imply that
\begin{align*}
\tilde U \leftrightarrow X_1 \leftrightarrow ({X}_2,Y,Z).
\end{align*}
We next have from (5) that
\begin{align}
R_1 &\ge I(X_1 ; U_1 | {X}_2, Z,\tilde T) \nonumber\\
&= I(X_1 ; U_1 | {X}_2, Z,\tilde T) + I(X_1 ; \tilde T | {X}_2, Z)\\
&=I(X_1 ; U_1,\tilde T| {X}_2, Z) \nonumber\\
&=I(X_1 ; \tilde U_1| {X}_2, Z),\nonumber
\end{align}
where (7) follows because $\tilde T$ is independent of $({X}_1,X_2,Y,Z)$. And (6) similarly yields
\begin{align*}
E &\le I(Y;\tilde {U}_1,{X}_2|Z).\nonumber
\end{align*}
Using the support lemma \cite[Lemma 3.4, pp. 310]{csiszar} as in the proof of Lemma 1(a), we can obtain the cardinality bound
\[
|\tilde {\mathcal{U}}_1| \le |\mathcal{X}_1| + 1 \nonumber.
\]
We thus conclude that $(R_1,E)$ is in $\tilde {\mathcal{R}}^{CI}$.
\end{proof}

\subsection{Optimality of Shimokawa-Han-Amari Scheme}
 \begin{figure}[htp]
\centering
  \includegraphics[width=3.7in]{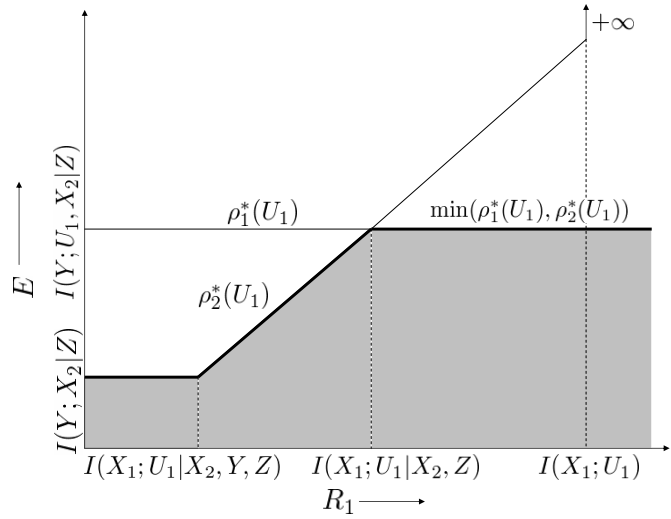}
\caption{Shimokawa-Han-Amari achievable region for a fixed $P_{U_1|X_1}$}\label{fig:Fig5}
\end{figure}
The Shimokawa-Han-Amari scheme operates as follows. Consider a test channel $P_{U_1|X_1}$, a sufficiently large block length $n$, and $\alpha > 0$. Let $\bar R_1 = I(X_1;U_1) + \alpha$. To construct the codebook, we first generate $2^{n \bar R_1}$ independent codewords $U_1^n$, each according to $\prod_{i=1}^n P_{U_1}(u_{1i})$, and then distribute them uniformly into $2^{n R_1}$ bins. The codebook and the bin assignment are revealed to the encoder and the detector. The encoder first quantizes $X_1^n$ by selecting a codeword $U_1^n$ that is jointly typical with it. With high probability, there will be at least one such codeword. The encoder then sends to the detector the index of the bin to which the codeword $U_1^n$ belongs. The joint type of $(X_1^n,U_1^n)$ is also sent to the detector, which requires zero additional rate asymptotically. The detector finds a codeword $\hat {U}_1^n$ in the bin that minimizes the empirical entropy $H({U}^n_1,Y^n)$. It then performs the test and declares $H_0$ if and only if both $(X_1^n,U_1^n)$ and $(Y^n,\hat{U}_1^n)$ are jointly typical under $H_0$. The inner bound thus obtained is as follows. Define
\begin{align*}
\mathcal{A}(R_1) &\triangleq \Bigr \{U_1: R_1 \ge I(X_1;U_1|{X}_2,Y,Z), \hspace{0.1in} U_1 \leftrightarrow X_1 \leftrightarrow ({X}_2,Y,Z), \hspace {0.05 in} \textrm{and} \hspace {0.05 in} |\mathcal{U}_1| \le |\mathcal{X}_1| + 1 \Bigr \} \\
\mathcal{B}(U_1) &\triangleq \Bigr \{P_{\tilde U_1 \tilde X_1 \tilde X_2 \tilde Y \tilde Z}: P_{\tilde U_1 \tilde {X}_1} = P_{U_1X_1}\hspace {0.05 in} \textrm{and}\hspace {0.05 in} P_{\tilde U \tilde {{X}}_{2} \tilde Y \tilde Z } = P_{U_1 {X}_2YZ} \Bigr\}\\
\mathcal{C}(U_1) &\triangleq \Bigr \{P_{\tilde U_1 \tilde X_1 \tilde X_2 \tilde Y \tilde Z }: P_{\tilde U_1 \tilde {X}_1} = P_{U_1X_1}, \hspace {0.05 in} P_{\tilde {{X}}_{2}  \tilde Y \tilde Z} = P_{{X}_2YZ}, \hspace {0.05 in} \textrm{and}\hspace {0.05 in} H (\tilde U_1 | \tilde {{X}}_{2},  \tilde Y,\tilde Z ) \ge H\left (U_1|{X}_2,Y,Z\right) \Bigr\}.
\end{align*}
In addition, define the exponents
\begin{align*}
\rho_1^{*}(U_1) &\triangleq \min_{P_{\tilde U_1 \tilde X_1 \tilde X_2 \tilde Y \tilde Z}  \in \mathcal{B}(U_1)} D \bigr(P_{\tilde U_1\tilde X_1 \tilde X_2 \tilde Y \tilde Z } \|P_{U_1|X_1} P_{X_1 X_2|Z}P_{Y|Z}P_Z\bigr) \\
\rho_2^{*}(U_1) &\triangleq  \left\{
\begin{array}{l l}
  + \infty & \quad \mbox{if $R_1 \ge I(U_1;X_1)$}\\
  \rho_2(U_1) & \quad \mbox{otherwise}\\ \end{array} \right. \\
\rho_2(U_1) &\triangleq [R_1 - I(X_1;U_1|{X}_2,Y,Z)]^{+} + \min_{P_{\tilde U_1 \tilde X_1 \tilde X_2 \tilde Y \tilde Z} \in \mathcal{C}(U_1)} D \bigr(P_{\tilde U_1 \tilde X_1 \tilde X_2 \tilde Y \tilde Z } \|P_{U_1|X_1} P_{X_1 X_2|Z}P_{Y|Z}P_Z\bigr).
\end{align*}
Finally, define
\begin{align*}
E_{SHA}(R_1) \triangleq \max_{U_1 \in \mathcal{A}(R_1)} \min \hspace {0.05 in} \left ( \rho_1^{*}(U_1), \rho_2^{*}(U_1) \right).
\end{align*}
Recall that $\rho_2^{*}(U_1)$ and $\rho_1^{*}(U_1)$ are the exponents associated with type 2 errors due to binning errors and assuming correct decoding of the codeword, respectively.
\begin{Thm}
\cite{Han2} $(R_1,E)$ is in the rate-exponent region if
\[
E \le E_{SHA}(R_1).
\]
\end{Thm}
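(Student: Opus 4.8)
The plan is to prove achievability by analyzing the type-2 error exponent of the Shimokawa-Han-Amari scheme directly via the method of types, treating the detector's side information as the triple $(X_2^n, Y^n, Z^n)$ and conditioning every typicality notion and empirical entropy on $(X_2, Z)$. Since the statement is attributed to \cite{Han2}, the argument is an adaptation of their random-binning analysis to the conditional-independence setting. First I would fix $U_1 \in \mathcal{A}(R_1)$ and instantiate the scheme exactly as described: generate $2^{n\bar R_1}$ i.i.d. codewords with $\bar R_1 = I(X_1;U_1) + \alpha$, distribute them uniformly into $2^{nR_1}$ bins, let the encoder select a codeword $U_1^n$ jointly typical with $X_1^n$ (possible with high probability by the covering lemma since $\bar R_1 > I(X_1;U_1)$) and transmit its bin index together with the joint type of $(X_1^n, U_1^n)$, and let the detector perform minimum-conditional-empirical-entropy decoding of a codeword $\hat U_1^n$ within the received bin, followed by the two joint-typicality acceptance tests on $(X_1^n, U_1^n)$ and $(\hat U_1^n, X_2^n, Y^n, Z^n)$.

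Next I would dispatch the type-1 error. Under $H_0$ the encoder finds a typical codeword with high probability, and the condition $R_1 \ge I(X_1;U_1|X_2,Y,Z)$ built into $\mathcal{A}(R_1)$ is precisely the Slepian-Wolf binning threshold for correct in-bin decoding: the number of codewords per bin is of order $2^{n(I(X_1;U_1)-R_1)}$ and each competing codeword is conditionally typical with $(X_2^n,Y^n,Z^n)$ only with probability of order $2^{-nI(U_1;X_2,Y,Z)}$, so the expected number of confusable codewords vanishes. The true pair then passes both typicality tests, so the type-1 error probability tends to zero and the constraint $\le \epsilon$ holds for large $n$.

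The substance of the proof is the type-2 error exponent under $H_1$, where I would split the event of accepting $H_0$ into two parts. In the first, the detector decodes the true codeword $U_1^n$ but $(U_1^n, X_2^n, Y^n, Z^n)$ is nonetheless $H_0$-typical; by a Sanov-type estimate the exponent of this event under the $H_1$ law $P_{U_1|X_1}P_{X_1X_2|Z}P_{Y|Z}P_Z$ is the minimum divergence over joint types satisfying the encoder's pinning $P_{\tilde U_1\tilde X_1}=P_{U_1X_1}$ and the detector's acceptance constraint $P_{\tilde U_1\tilde X_2\tilde Y\tilde Z}=P_{U_1X_2YZ}$, which is exactly $\rho_1^{*}(U_1)$. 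In the second, a wrong codeword $\hat U_1^n$ from the same bin wins the minimum-conditional-empirical-entropy decoding and passes the tests; here I would bound the probability by multiplying the number of codewords per bin, of order $2^{n(I(X_1;U_1)-R_1)}$, by the probability that one independently generated codeword attains a bad conditional type with the side information, then optimize over such types. Collecting terms yields the additive decomposition $[R_1 - I(X_1;U_1|X_2,Y,Z)]^{+} + \min_{P \in \mathcal{C}(U_1)} D(\cdot\,\|\,P_{U_1|X_1}P_{X_1X_2|Z}P_{Y|Z}P_Z)$, where the constraint $H(\tilde U_1|\tilde X_2,\tilde Y,\tilde Z)\ge H(U_1|X_2,Y,Z)$ encodes the minimum-entropy decoding rule; the case $R_1 \ge I(U_1;X_1)$ corresponds to effectively unbinned transmission and gives $\rho_2^{*}(U_1)=+\infty$. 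Taking the worse of the two exponents gives $\min(\rho_1^{*}(U_1),\rho_2^{*}(U_1))$, and a standard expurgation of the random codebook followed by optimization over $U_1 \in \mathcal{A}(R_1)$ produces $E_{SHA}(R_1)$.

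The main obstacle I anticipate is the second (binning) error event. One must carefully balance the exponential number of candidate codewords sharing a bin against the exponentially small probability that any one of them, drawn independently from the product measure $\prod P_{U_1}$, is jointly $H_0$-typical with $(X_2^n,Y^n,Z^n)$ while simultaneously having conditional empirical entropy no larger than that of the true codeword, which under $H_1$ is atypically large because $Y$ is independent of $U_1$ given $Z$. Getting the $[R_1 - I(X_1;U_1|X_2,Y,Z)]^{+}$ term to emerge cleanly from this count, while ensuring the reference measure in the divergence is the $H_1$-induced law and that the constraints defining $\mathcal{C}(U_1)$ faithfully encode both the decoding rule and the acceptance test, is the delicate step.
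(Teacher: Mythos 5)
The paper does not actually prove this theorem: it is stated with attribution to Shimokawa--Han--Amari \cite{Han2} and imported as a known result (with the detector's side information specialized to $(X_2,Y,Z)$), so there is no in-paper argument to compare against. Your sketch correctly reconstructs exactly that cited analysis --- covering plus Slepian--Wolf binning at rate $R_1 \ge I(X_1;U_1|X_2,Y,Z)$ for the type-1 error, and the two-event type-2 decomposition whose Sanov-type exponents are $\rho_1^{*}(U_1)$ (with $\mathcal{B}(U_1)$ pinning the encoder type $P_{\tilde U_1 \tilde X_1}=P_{U_1X_1}$ and the acceptance type) and $\rho_2^{*}(U_1)$ (with $\mathcal{C}(U_1)$ encoding minimum conditional empirical entropy decoding and the $\left[R_1 - I(X_1;U_1|X_2,Y,Z)\right]^{+}$ term emerging from the $2^{n(I(X_1;U_1)-R_1)}$ in-bin codeword count, using $I(X_1;U_1)-I(U_1;X_2,Y,Z)=I(X_1;U_1|X_2,Y,Z)$ from the Markov chain) --- so it is correct and takes essentially the same approach as the proof the paper relies on.
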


Fig. \ref{fig:Fig5} shows the Shimokawa-Han-Amari achievable exponent as a function of the rate assuming a fixed channel $P_{U_1|X_1}$ is used for quantization. This is simply Fig. \ref{fig:Fig2} particularized to the $1$-encoder hypothesis testing against conditional independence problem. For rates $R_1 \ge I(X_1,U_1|{X}_2,Z)$, $\rho_1^{*}(U_1)$ dominates $\rho_2^{*}(U_1)$ and there is no penalty for binning at these rates as the exponent stays the same. Therefore, we can bin all the way down to the rate $R_1 = I(X_1,U_1|{X}_2,Z)$ without any loss in the exponent. However, if we bin further at rates $R_1$ in $[I(X_1,U_1|{X}_2,Y,Z),I(X_1,U_1|{X}_2,Z))$, then $\rho_2^{*}(U_1)$ dominates $\rho_1^{*}(U_1)$, the exponent decreases linearly with $R_1$, and the performance deteriorates all the way down to a point at which the message from the encoder is useless. At this point, the binning rate $R_1$ equals $I(X_1,U_1|{X}_2,Y,Z)$ and the exponent equals $I(Y;{X}_2|Z)$, which is the exponent when the detector ignores the encoder's message. This competition between the exponents makes the optimality of the Shimokawa-Han-Amari scheme unclear. We prove that it is indeed optimal by showing that the Shimokawa-Han-Amari inner bound simplifies to the Quantize-Bin-Test inner bound, which by Theorem 3 is tight. Let us define
\[
\mathcal{A}^{*}(R_1) \triangleq \Bigr \{U_1: R_1 \ge I(X_1;U_1|{X}_2,Z), \hspace{0.1in} U_1 \leftrightarrow X_1 \leftrightarrow ({X}_2,Y,Z), \hspace {0.05 in} \textrm{and}\hspace {0.05 in}|\mathcal{U}_1| \le |\mathcal{X}_1| + 1 \Bigr \}
\]
and
\[
E_{QBT}(R_1) \triangleq \max_{U_1 \in \mathcal{A}^{*}(R_1)} I(Y;{U_1},{X}_2|Z).
\]
We have the following theorem.
\begin{Thm}
If $(R_1,E)$ is in the rate-exponent region, then
\[
E \le E_{QBT}(R_1) = E_{SHA}(R_1).
\]
\end{Thm}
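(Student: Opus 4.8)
The plan is to treat the theorem as two assertions: the converse bound $E\le E_{QBT}(R_1)$ and the identity $E_{QBT}(R_1)=E_{SHA}(R_1)$. The converse is immediate from Theorem 3: since the rate--exponent region equals $\tilde{\mathcal R}^{CI}$, any achievable $(R_1,E)$ admits a $U_1$ with $R_1\ge I(X_1;U_1|X_2,Z)$ and $E\le I(Y;U_1,X_2|Z)$, so $U_1\in\mathcal A^{*}(R_1)$ and $E\le E_{QBT}(R_1)$. The real content is the identity, which I would prove by reducing the two competing Shimokawa--Han--Amari exponents to mutual informations and then comparing the two maximizations. Throughout I write $P^{(0)}$ for the law under $H_0$, and I first rewrite the reference measure as $P_{U_1|X_1}P_{X_1X_2|Z}P_{Y|Z}P_Z=P^{(0)}_{U_1X_1X_2Z}P_{Y|Z}$, using that $U_1$ is a test-channel output of $X_1$ and that $(X_1,X_2,Z)$ has a common law under both hypotheses.

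The heart is a pair of information projections. For $\rho_1^{*}$ I would propose the candidate minimizer $P^{*}=P^{(0)}_{U_1X_1X_2Z}P^{(0)}_{Y|U_1X_2Z}$ over $\mathcal B(U_1)$; feasibility is checked on the two frozen marginals, and since $\log(P^{*}/R)$ depends only on $(U_1,X_2,Y,Z)$, whose marginal is fixed on $\mathcal B(U_1)$, the Pythagorean identity $D(P\|R)=D(P\|P^{*})+D(P^{*}\|R)$ holds for every feasible $P$, so $P^{*}$ is the projection and $\rho_1^{*}(U_1)=I(Y;U_1,X_2|Z)$. The same device applied to $\mathcal C(U_1)$, ignoring the entropy constraint, yields the projection $P^{(0)}_{U_1X_1X_2|Z}P_{Y|X_2Z}P_Z$ with value $I(X_2;Y|Z)$. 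The delicate step here is to verify that this projection already satisfies $H(\tilde U_1|\tilde X_2,\tilde Y,\tilde Z)\ge H(U_1|X_2,Y,Z)$, so that the entropy constraint is inactive: under it $U_1\perp Y\mid X_2,Z$, hence the conditional entropy rises to $H(U_1|X_2,Z)\ge H(U_1|X_2,Y,Z)$. This gives $\rho_2(U_1)=[R_1-I(X_1;U_1|X_2,Y,Z)]^{+}+I(X_2;Y|Z)$.

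With both exponents simplified I would compare them using the chain-rule identity $I(X_1;U_1|X_2,Z)=I(X_1;U_1|X_2,Y,Z)+I(Y;U_1|X_2,Z)$, valid because $U_1\leftrightarrow X_1\leftrightarrow(X_2,Y,Z)$. Writing $\rho_1^{*}(U_1)=I(X_2;Y|Z)+I(Y;U_1|X_2,Z)$, for every $U_1\in\mathcal A(R_1)$ (where $R_1\ge I(X_1;U_1|X_2,Y,Z)$, so the positive part opens) one gets the clean cancellation $\rho_2(U_1)-\rho_1^{*}(U_1)=R_1-I(X_1;U_1|X_2,Z)$, and the same sign holds for $\rho_2^{*}$ since the $+\infty$ branch only occurs when $R_1\ge I(U_1;X_1)\ge I(X_1;U_1|X_2,Z)$. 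Hence $\min(\rho_1^{*},\rho_2^{*})=\rho_1^{*}$ precisely on $\mathcal A^{*}(R_1)$, where it equals $I(Y;U_1,X_2|Z)$; this already gives $E_{SHA}(R_1)\ge E_{QBT}(R_1)$.

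For the reverse inequality I would show that no $U_1\in\mathcal A(R_1)\setminus\mathcal A^{*}(R_1)$ can beat $E_{QBT}(R_1)$. Given such a $U_1$, so that $I(X_1;U_1|X_2,Y,Z)\le R_1<I(X_1;U_1|X_2,Z)$, pass $U_1$ through an independent erasure channel with retention probability $\theta$ to form $U_1^{\theta}$; this preserves the Markov chain and scales each of $I(X_1;U_1|X_2,Z)$, $I(X_1;U_1|X_2,Y,Z)$, $I(Y;U_1|X_2,Z)$ by $\theta$. Taking $\theta^{*}=R_1/I(X_1;U_1|X_2,Z)$ places $U_1^{\theta^{*}}$ on the boundary of $\mathcal A^{*}(R_1)$, and a one-line computation using $R_1<I(X_1;U_1|X_2,Z)$ gives $I(Y;U_1^{\theta^{*}},X_2|Z)\ge R_1-I(X_1;U_1|X_2,Y,Z)+I(X_2;Y|Z)=\min(\rho_1^{*}(U_1),\rho_2^{*}(U_1))$; a support-lemma reduction restores the cardinality bound. (Alternatively, this direction follows at once from achievability of the Shimokawa--Han--Amari scheme in Theorem 4 together with the optimality of $E_{QBT}$ from Theorem 3.) I expect the main obstacle to be the two information-projection evaluations, especially the verification that the entropy constraint defining $\mathcal C(U_1)$ is inactive at its projection, together with the bookkeeping that reconciles the two distinct rate thresholds $I(X_1;U_1|X_2,Z)$ and $I(X_1;U_1|X_2,Y,Z)$ appearing in $\mathcal A^{*}$ and in $\rho_2$.
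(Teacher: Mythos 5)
Your proposal is correct, and its main skeleton coincides with the paper's: both reduce the theorem to $E_{SHA}(R_1)\ge E_{QBT}(R_1)$ (the converse being immediate from Theorem 3, the reverse inequality from Theorems 3 and 4), both evaluate the two SHA exponents as $\rho_1^{*}(U_1)=I(Y;U_1,X_2|Z)$ and $\rho_2(U_1)=[R_1-I(X_1;U_1|X_2,Y,Z)]^{+}+I(Y;X_2|Z)$, and both then use the chain-rule identity $I(X_1;U_1|X_2,Z)=I(X_1;U_1|X_2,Y,Z)+I(Y;U_1|X_2,Z)$ (valid under $U_1\leftrightarrow X_1\leftrightarrow(X_2,Y,Z)$) to conclude that $\min(\rho_1^{*},\rho_2^{*})=\rho_1^{*}=I(Y;U_1,X_2|Z)$ on all of $\mathcal{A}^{*}(R_1)\subseteq\mathcal{A}(R_1)$. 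Your projection arguments differ only cosmetically: where you invoke a Pythagorean identity for the tilted reference measure $P^{(0)}_{U_1X_1X_2Z}P_{Y|Z}$, the paper lower bounds the divergence by marginalizing to the coordinates frozen by $\mathcal{B}(U_1)$ (resp.\ $\mathcal{C}(U_1)$) and exhibits the achievers $P_{U_1X_2YZ}P_{X_1|U_1X_2Z}$ and $P_{X_2YZ}P_{U_1X_1|X_2Z}$ --- which are exactly your two candidate projections rewritten; your explicit check that the entropy constraint in $\mathcal{C}(U_1)$ is inactive at the projection is precisely the content of the paper's claim that its achiever lies in $\mathcal{C}(U_1)$. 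The one genuinely different ingredient is your treatment of the reverse inequality $E_{QBT}(R_1)\ge E_{SHA}(R_1)$: the paper disposes of it operationally in one line (SHA is achievable by Theorem 4, and Theorem 3 caps every achievable exponent by $E_{QBT}$), whereas you give a direct single-letter domination argument via an erasure (time-sharing) channel. That argument checks out: writing $a=I(X_1;U_1|X_2,Z)$ and $b=I(X_1;U_1|X_2,Y,Z)$ with $b\le R_1<a$, the scaling $I(\cdot;U_1^{\theta}|\cdot)=\theta I(\cdot;U_1|\cdot)$ and $\theta^{*}=R_1/a$ reduce your claim $I(Y;U_1^{\theta^{*}},X_2|Z)\ge R_1-b+I(Y;X_2|Z)$ to $b\left(1-R_1/a\right)\ge 0$, which holds, and in this regime $\min(\rho_1^{*},\rho_2^{*})=\rho_2$ so dominating $\rho_2$ suffices; the support-lemma step restoring $|\mathcal{U}_1|\le|\mathcal{X}_1|+1$ is routine. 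The tradeoff: the paper's route is shorter but leans on the coding-theoretic achievability proof of Shimokawa--Han--Amari, while yours is self-contained at the single-letter level and makes the mechanism transparent --- every $U_1$ binned below the threshold $I(X_1;U_1|X_2,Z)$ is pointwise dominated by a time-shared auxiliary sitting exactly at that threshold, which explains why the binning-error exponent never helps.
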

\begin{proof}
The inequality follows from Theorem 3. To prove the equality, it is sufficient to show that
\[
E_{SHA}(R_1) \ge E_{QBT}(R_1).
\]
The reverse inequality follows from Theorem 3 and 4. Since conditioning reduces entropy and any $U_1$ in $\mathcal{A}^{*}(R_1)$ satisfies the Markov chain
\[
U_1 \leftrightarrow X_1 \leftrightarrow ({X}_2,Y,Z),
\]
we have
\begin{align*}
R_1 &\ge I(X_1;U_1|{X}_2,Z) \\
&= H(U_1|{X}_2,Z) - H(U_1|{X}_1{X}_2,Z)\\
&\ge H(U_1|{X}_2,Y,Z) - H(U_1|{X}_1{X}_2,Y,Z)\\
&= I(X_1;U_1|{X}_2,Y,Z),
\end{align*}
which means that $U_1$ is in $\mathcal{A}(R_1)$. Hence, $\mathcal{A}^{*}(R_1) \subseteq \mathcal{A}(R_1)$. This implies that
\begin{align}
E_{SHA}(R_1) &\triangleq \max_{U_1 \in \mathcal{A}(R_1)} \min \hspace {0.05 in} \left ( \rho_1^{*}(U_1), \rho_2^{*}(U_1) \right) \nonumber\\
&\ge \max_{U_1 \in \mathcal{A}^{*}(R_1)} \min \hspace {0.05 in} \left ( \rho_1^{*}(U_1), \rho_2^{*}(U_1) \right).
\end{align}
Now the objective of the optimization problem in the definition of $\rho_1^{*}(U_1)$ can be lower bounded as
\begin{align*}
D \bigr(P_{\tilde U_1\tilde {{X}}_1 \tilde {{X}}_2 \tilde Y \tilde Z } \|P_{U_1|X_1} P_{{{X}}_1 {{X}}_2|Z}P_{Y|Z}P_Z\bigr) &\ge D \bigr(P_{\tilde U_1 \tilde {{X}}_{2} \tilde Y \tilde Z } \|P_{U_1{X}_2|Z}P_{Y|Z}P_Z\bigr) \\
&= D \bigr(P_{ U_1 {{X}}_{2} Y Z } \|P_{U_1{X}_2|Z}P_{Y|Z}P_Z\bigr) \\
&= I(Y; {U}_1,{X}_2|Z).
\end{align*}
The lower bound is achieved by the distribution $P_{ U_1 {{X}}_{2} Y Z} P_{X_1|U_1 {{X}}_{2} Z}$ in $\mathcal{B}(U_1)$. Therefore,
\begin{align*}
\rho_1^{*}(U_1) = I(Y; {U}_1,{X}_2|Z).
\end{align*}
Similarly, we can lower bound the optimization problem in the definition of $\rho_2(U_1)$ as
\begin{align*}
D \bigr(P_{\tilde U_1\tilde {{X}}_1 \tilde {{X}}_2 \tilde Y \tilde Z } \|P_{U_1|X_1} P_{{X}_1 X_2|Z}P_{Y|Z}P_Z\bigr) &\ge D \bigr(P_{ \tilde {{X}}_{2} \tilde Y \tilde Z } \|P_{{X}_2|Z}P_{Y|Z}P_Z\bigr) \\
&= D \bigr(P_{ {{X}}_{2} Y Z } \|P_{{X}_2|Z}P_{Y|Z}P_Z\bigr) \\
&= I(Y; {X}_2|Z),
\end{align*}
and the lower bound is achieved by the distribution $P_{{{X}}_{2} Y Z} P_{U_1 X_1| {{X}}_{2} Z}$ in $\mathcal{C}(U_1)$. Therefore,
\begin{align*}
\rho_2(U_1) = [R_1 - I(X_1;U_1|{X}_2,Y,Z)]^{+} + I(Y; {X}_2|Z).
\end{align*}
Consider any $U_1$ in $\mathcal{A}^{*}(R_1).$ If $R_1 \ge I(X_1;U_1)$, then
\begin{align}
\min \hspace {0.05 in} \left ( \rho_1^{*}(U_1), \rho_2^{*}(U_1) \right) &=  \rho_1^{*}(U_1) \nonumber\\
&= I(Y; {U}_1,{X}_2|Z).
\end{align}
And if $I(X_1;U_1) > R_1 \ge I(X_1;U_1|{X}_2,Z)$, then
\begin{align}
\min \hspace {0.05 in} \left ( \rho_1^{*}(U_1), \rho_2^{*}(U_1) \right) &=  \min \hspace {0.05 in} \bigr  ( I(Y; {U}_1,{X}_2|Z), R_1 - I(X_1;U_1|{X}_2,Y,Z) + I(Y; {X}_2|Z) \bigr ) \nonumber\\
&\ge \min \hspace {0.05 in} \bigr ( I(Y; {U}_1,{X}_2|Z), I(X_1;U_1|{X}_2,Z) - I(X_1;U_1|{X}_2,Y,Z) + I(Y; {X}_2|Z) \bigr)\nonumber\\
&= \min \hspace {0.05 in} \bigr ( I(Y; {U}_1,{X}_2|Z), I(Y;U_1|{X}_2,Z)+ I(Y; {X}_2|Z) \bigr)\nonumber\\
&= \min \hspace {0.05 in} \bigr ( I(Y; {U}_1,{X}_2|Z), I(Y; {U}_1,{X}_2|Z) \bigr)\nonumber\\
&= I(Y; {U}_1,{X}_2|Z).
\end{align}
Now (8) through (10) imply
\begin{align}
E_{SHA}(R_1) &\ge \max_{U_1 \in \mathcal{A}^{*}(R_1)} I(Y; {U}_1,{X}_2|Z) \nonumber\\
&= E_{QBT} (R_1).\nonumber
\end{align}
Theorem 5 is thus proved.
\end{proof}

\section{Gel`fand and Pinsker Hypothesis Testing against Independence}
We now consider another special case, which we call the Gel`fand and Pinsker hypothesis testing against independence problem, because it is related to the source coding problem studied by Gel`fand and Pinsker \cite{Gelfand}.

Suppose that $X_{L+1}$ and $Z$ are deterministic and suppose there exists
a function of $X_1$, \ldots, $X_L$, say $X$, such that under $H_0$,
\begin{enumerate}
\item[(C5)] $ X_1,..,X_L, Y$ are conditionally independent given $X$, and
\item[(C6)] for any finite-alphabet random variable $U$ such that $Y \leftrightarrow X \leftrightarrow U$ and $Y \leftrightarrow U \leftrightarrow X$, we have $H(X|U) = 0.$
\end{enumerate}
Conditions (C5) and (C6) imply that under $H_0$, $X$ is a minimal sufficient statistic for $Y$ given $\mb{X}$ such that $ X_1,\dots,X_L, Y$ are conditionally independent given $X$. We shall characterize the centralized rate region, the set of rate vectors that achieve the centralized type 2 error exponent $I(\mb{X};Y) = I(X;Y)$. More precisely, we shall characterize the set
\[
\left \{\mb{R} : (\mb{R},I(X;Y)) \in \mathcal{R}^{CI} \right\},
\]
denoted by $\mathcal{R}^{CI}\bigr(I(X;Y)\bigr)$. We define $\mathcal{R}^{CI}_i\bigr(I(X;Y)\bigr)$ and $\overline{\mathcal{R}^{CI}_o}\bigr(I(X;Y)\bigr)$ similarly. We need the following lemma.
\begin{Lem}
Condition (C6) is equivalent to
\begin{enumerate}
\item[(C7)] For any positive $\epsilon$, there exists a positive $\delta$ such that for all finite-alphabet random variables $U$ such that $Y \leftrightarrow X \leftrightarrow U$ and $I(X;Y|U) \le \delta$, we have $H(X|U) \le \epsilon.$
\end{enumerate}
\end{Lem}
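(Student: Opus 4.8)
The plan is to prove the two implications (C6)~$\Rightarrow$~(C7) and (C7)~$\Rightarrow$~(C6) separately, noting that the latter is immediate while the former requires a compactness argument. First I would observe that the Markov chain $Y \leftrightarrow U \leftrightarrow X$ appearing in (C6) is precisely the statement $I(X;Y|U) = 0$. For the easy direction (C7)~$\Rightarrow$~(C6), take any $U$ satisfying the hypotheses of (C6): then $I(X;Y|U) = 0 \le \delta$ for every $\delta > 0$, so applying (C7) with an arbitrary $\epsilon > 0$ gives $H(X|U) \le \epsilon$ for all $\epsilon > 0$, whence $H(X|U) = 0$.

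For (C6)~$\Rightarrow$~(C7) I would argue by contradiction. Suppose (C7) fails; then there exist an $\epsilon > 0$ and a sequence $\{U_n\}$ of finite-alphabet random variables with $Y \leftrightarrow X \leftrightarrow U_n$, $I(X;Y|U_n) \le 1/n$, and $H(X|U_n) > \epsilon$ for every $n$. The difficulty is that the alphabets of the $U_n$ may grow without bound, so no compactness is available directly. To circumvent this, I would apply the support lemma \cite[Lemma 3.4, pp. 310]{csiszar} to each $U_n$, replacing it by a random variable on an alphabet of size at most $|\mathcal{X}| + 1$ that preserves the marginal $P_X$ (accounting for $|\mathcal{X}| - 1$ functionals) together with the two quantities $I(X;Y|U_n)$ and $H(X|U_n)$. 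This keeps both the hypothesis $I(X;Y|U_n) \le 1/n$ and the conclusion $H(X|U_n) > \epsilon$ intact, and the chain $Y \leftrightarrow X \leftrightarrow U_n$ is automatically retained because the joint law is reconstructed from the preserved law of $(X,U_n)$ through the fixed channel $P_{Y|X}$.

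With the alphabets now uniformly bounded, the joint distributions of $(X,U_n)$ lie in a compact set, so I would pass to a convergent subsequence and let $U^{*}$ be the corresponding limit variable, extended to $(X,Y,U^{*})$ via $P_{Y|X}$ so that $Y \leftrightarrow X \leftrightarrow U^{*}$ holds. By continuity of mutual information and conditional entropy on a fixed finite alphabet, $I(X;Y|U^{*}) = \lim_k I(X;Y|U_{n_k}) = 0$ and $H(X|U^{*}) = \lim_k H(X|U_{n_k}) \ge \epsilon > 0$. Since $I(X;Y|U^{*}) = 0$ is the same as $Y \leftrightarrow U^{*} \leftrightarrow X$, the variable $U^{*}$ meets both hypotheses of (C6), which then forces $H(X|U^{*}) = 0$, contradicting $H(X|U^{*}) \ge \epsilon$. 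This contradiction establishes (C7). I expect the uniform cardinality reduction via the support lemma to be the crux of the argument; once it supplies compactness, the limiting step reduces to routine continuity.
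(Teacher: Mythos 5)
Your proof is correct, but it takes a genuinely different route from the paper's. You argue by contradiction and compactness: negating (C7) yields a sequence $U_n$ with $I(X;Y|U_n)\le 1/n$ and $H(X|U_n)>\epsilon$, you invoke the support lemma to cap every alphabet at $|\mathcal{X}|+1$ while preserving $P_X$, $H(X|U_n)$, and $I(X;Y|U_n)$, and you extract a limit $U^{*}$ to which (C6) applies directly. The support-lemma step is sound, though worth making explicit: it applies because, under $Y\leftrightarrow X\leftrightarrow U$, both preserved quantities are expectations of continuous functionals of the conditional law $P_{X|U=u}$ --- indeed $I(X;Y|U)=E\bigl[T\bigl(P_{X|U}(\cdot|U)\bigr)\bigr]$ with $T(P)=H\bigl(\sum_i P_i P(i)\bigr)-\sum_i P(i)H(P_i)$, which is exactly the identity (61) in the paper's Appendix D --- and because reattaching $Y$ through the fixed channel $P_{Y|X}$ restores the Markov chain for the reduced variable. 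The paper instead proceeds directly and quantitatively: it first proves a structural lemma (Lemma 8) that under (C6) the rows of $P_{Y|X}$ with positive $P_X$-mass are distinct, so that $T(P)=0$ forces $P$ to be a point mass; it then defines $\gamma(\delta)=\sup\{H(P):T(P)\le\delta\}$, proves $\gamma$ is continuous at $0$ by a subsequence argument on the compact simplex, and finishes with a Markov-inequality split of $\mathcal{U}$ giving the explicit bound $H(X|U)<\gamma\bigl(\sqrt{\delta}\bigr)+\sqrt{\delta}\log|\mathcal{X}|$. What each approach buys: yours is shorter and more conceptual --- it never needs Lemma 8 or any structural property of the channel, since (C6) is invoked only at the single limit point $U^{*}$, and the support lemma shoulders the uniformity over unbounded alphabets; the paper's is elementary (no support lemma) and constructive, producing an explicit modulus of continuity relating $\delta$ to $\epsilon$, whereas your contradiction argument certifies existence of a suitable $\delta$ without exhibiting one. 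Both ultimately rest on compactness, applied at different levels: the paper on the simplex of distributions over $\mathcal{X}$, you on joint laws of $(X,U)$ with uniformly bounded $|\mathcal{U}|$. For the downstream use in Theorem 6, only continuity of $\phi$ at zero is needed, so either proof suffices.
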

The proof of Lemma 2 is presented in Appendix D. Let us define a function
\begin{align*}
\phi(\delta) \triangleq \inf \Bigr \{ \epsilon : \hspace {0.05 in} \textrm{for all finite-alphabet} \hspace {0.05 in} U \hspace {0.05 in} \textrm{such that}\hspace {0.05 in} &Y \leftrightarrow X \leftrightarrow U \hspace {0.05 in} \textrm{and} \hspace {0.05 in}I(X;Y|U) \le \delta, \hspace {0.05 in} \textrm{we have}\hspace {0.05 in} H(X|U) \le \epsilon \Bigr\}.
\end{align*}
It is clear that $\phi$ is continuous at zero with the value $ \phi (0) = 0.$ We have the following theorem.
\begin{Thm}
For this problem, the centralized rate region
\[
\mathcal{R}^{CI}\bigr(I(X;Y)\bigr) = \mathcal{R}^{CI}_i\bigr(I(X;Y)\bigr) = \overline{\mathcal{R}^{CI}_o}\bigr(I(X;Y)\bigr).
\]
\end{Thm}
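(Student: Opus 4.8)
The plan is to collapse the three sets into one by proving a single reverse inclusion. Theorem 1, Theorem 2, and Corollary 1 already give $\mathcal{R}^{CI}_i \subseteq \mathcal{R}^{CI} \subseteq \overline{\mathcal{R}^{CI}_o}$, and restricting to the slice $E = I(X;Y)$ yields $\mathcal{R}^{CI}_i\bigr(I(X;Y)\bigr) \subseteq \mathcal{R}^{CI}\bigr(I(X;Y)\bigr) \subseteq \overline{\mathcal{R}^{CI}_o}\bigr(I(X;Y)\bigr)$. So it suffices to show $\overline{\mathcal{R}^{CI}_o}\bigr(I(X;Y)\bigr) \subseteq \mathcal{R}^{CI}_i\bigr(I(X;Y)\bigr)$; since $\mathcal{R}^{CI}_i$ is closed by Lemma 1(b), I would argue by approximation. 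Throughout I would use that $X_{L+1}$ and $Z$ are deterministic, so they drop out of every information quantity, and I would exploit the freedom in the outer bound's intersection over $\chi$ by selecting the specific minimal sufficient statistic $X$ of (C5), which lies in $\chi$.

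Fix $(\mb{R}, I(X;Y)) \in \overline{\mathcal{R}^{CI}_o}$ and take a sequence $(\mb{R}_k, E_k) \in \mathcal{R}^{CI}_o$ converging to it; for our chosen $X$ each such point comes with some $\lambda_o^{(k)} = (\mb{U}^{(k)}, W^{(k)}, T^{(k)}) \in \Lambda_o$ meeting the outer constraints. The first key step is to show that on this slice $X$ becomes asymptotically a function of $(\mb{U}^{(k)}, T^{(k)})$. Using the Markov coupling together with (C5), I would verify the chain $Y \leftrightarrow X \leftrightarrow (\mb{U}^{(k)}, T^{(k)})$ and the identity $I(Y; \mb{U}^{(k)}, X \mid T^{(k)}) = I(X;Y)$, which forces $I(Y; \mb{U}^{(k)} \mid T^{(k)}) \le I(X;Y)$ and hence $I(X;Y \mid \mb{U}^{(k)}, T^{(k)}) = I(X;Y) - I(Y; \mb{U}^{(k)} \mid T^{(k)})$. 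Since $E_k \le I(Y; \mb{U}^{(k)} \mid T^{(k)})$ and $E_k \to I(X;Y)$, this conditional information tends to zero, so Lemma 2 in the form of the modulus $\phi$ gives $H(X \mid \mb{U}^{(k)}, T^{(k)}) \le \phi\bigr(I(X;Y \mid \mb{U}^{(k)}, T^{(k)})\bigr) \to 0$.

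The second key step is to turn the outer sum-rate bound into the inner one. I would set $\tilde T^{(k)} = (W^{(k)}, T^{(k)})$ and check that $\lambda_i^{(k)} = (\mb{U}^{(k)}, \tilde T^{(k)}) \in \Lambda_i$, since (C3) gives (C1) and (C4) becomes (C2). For the exponent, $I(Y; \mb{U}^{(k)} \mid \tilde T^{(k)}) \ge I(Y; \mb{U}^{(k)} \mid T^{(k)}) \ge E_k$ because $W^{(k)}$ is independent of $(Y, T^{(k)})$. For each $S \subseteq \mathcal{L}$ I would bound the inner quantity $I(\mb{X}_S; \mb{U}^{(k)}_S \mid \mb{U}^{(k)}_{S^c}, \tilde T^{(k)})$ by introducing $X$ into the mutual information and splitting via the chain rule; the crucial point is that, conditioned on $(X, W^{(k)}, T^{(k)})$, the pairs $\{(X_l, U_l^{(k)})\}_{l}$ are mutually independent (the $X_l$ are conditionally independent given $X$ by (C5), and each $U_l^{(k)}$ depends only on $(X_l, W^{(k)}, T^{(k)})$ by (C4)). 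This collapses the middle term exactly to $\sum_{l \in S} I(X_l; U_l^{(k)} \mid X, W^{(k)}, T^{(k)})$, matching the second summand of the outer bound, while the leading term $I(X; \mb{U}^{(k)}_S \mid \mb{U}^{(k)}_{S^c}, W^{(k)}, T^{(k)})$ differs from the outer leading term $I(X; \mb{U}^{(k)}_S \mid \mb{U}^{(k)}_{S^c}, T^{(k)})$ by at most $I(X; W^{(k)} \mid \mb{U}^{(k)}, T^{(k)}) \le H(X \mid \mb{U}^{(k)}, T^{(k)})$. Hence the inner sum-rate bound exceeds the outer one by at most $\epsilon_k := H(X \mid \mb{U}^{(k)}, T^{(k)})$, so $(\mb{R}_k + \epsilon_k \mb{1}, E_k) \in \mathcal{R}^{CI}_i$; letting $k \to \infty$, using $\epsilon_k \to 0$ and closedness of $\mathcal{R}^{CI}_i$, gives $(\mb{R}, I(X;Y)) \in \mathcal{R}^{CI}_i$.

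I expect the main obstacle to be the interplay between the closure argument and the information-theoretic algebra: one must simultaneously control the exponent deficit, which certifies through $\phi$ that $X$ is nearly recoverable from $(\mb{U}, T)$, and pay for that deficit uniformly across the rate inequalities for every subset $S$. The single cleanest insight that makes everything cancel is the conditional-independence decomposition granted by (C5), which converts the awkward joint term $I(\mb{X}_S; \mb{U}_S \mid X, \mb{U}_{S^c}, W, T)$ into the separable sum $\sum_{l \in S} I(X_l; U_l \mid X, W, T)$ appearing in the outer bound; the remaining discrepancy is then forced to vanish by the estimate $H(X \mid \mb{U}, T) \to 0$.
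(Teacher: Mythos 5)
Your proposal is correct and takes essentially the same route as the paper's proof: select the minimal sufficient statistic $X \in \chi$, use the Markov chain $Y \leftrightarrow X \leftrightarrow (\mb{U},T)$ and the exponent constraint to force $I(X;Y\mid \mb{U},T)$ to be small, invoke Lemma 2 through the modulus $\phi$ to conclude $H(X\mid \mb{U},T)\to 0$, exploit the conditional-independence decomposition $\sum_{l\in S} I(X_l;U_l\mid X,W,T)=I(\mb{X}_S;\mb{U}_S\mid \mb{U}_{S^c},X,W,T)$, absorb $W$ into $\tilde T=(W,T)$, and finish with closedness from Lemma 1(b). Your accounting of the rate discrepancy via $I(X;W^{(k)}\mid \mb{U}^{(k)},T^{(k)}) \le H(X\mid \mb{U}^{(k)},T^{(k)})$ and your sequence-plus-closure formulation are only cosmetic repackagings of the paper's inequality (14) and its $\delta$-perturbation argument, so the two proofs coincide in substance.
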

\begin{proof}
It suffices to show that
\[
\overline{\mathcal{R}^{CI}_o}\bigr(I(X;Y)\bigr) \subseteq \mathcal{R}^{CI}_i \bigr (I(X;Y) \bigr).
\]
Consider any $\mb{R}$ in $\overline{\mathcal{R}^{CI}_o}\bigr(I(X;Y)\bigr)$, any positive $\delta$, and $X$ defined as above. Then there exists $\lambda_o = (\mb{U},W,T)$ in $\Lambda_o$ such that $(R_1+\delta,\dots,R_L+\delta,I(X;Y)-\delta)$ is in $\mathcal{R}^{CI}_o(X,\lambda_o)$, i.e.,
\begin{align}
\sum_{l \in S}(R_l+\delta) &\ge I({X}; \mb{U}_S|\mb{U}_{S^c},T) + \sum_{l \in S} I(X_l ; U_l | X, W, T) \hspace {0.05 in} \textrm{for all} \hspace{0.05in}  S \subseteq \mathcal{L}, \hspace {0.05 in} \textrm{and} \\
I(X;Y) -\delta &\le I(Y;\mb{U}|T).
\end{align}
We have the Markov chain
\begin{align*}
Y \leftrightarrow X \leftrightarrow (\mb{U},T),
\end{align*}
which implies
\begin{align*}
I(X;Y|\mb{U},T) &= H(Y|\mb{U},T) - H (Y|X,\mb{U},T) \\
&= H(Y|\mb{U},T) - H (Y|X) \\
&= I(X;Y) - I(Y;\mb{U}|T)\\
&\le \delta,
\end{align*}
where the last inequality follows from (12). Therefore, by the definition of $\phi$ function
\begin{align}
H(X|\mb{U},T) \le \phi(\delta).
\end{align}
Now
\begin{align}
I(X;\mb{U}_S|\mb{U}_{S^c},T) &= H(X|\mb{U}_{S^c},T) - H(X|\mb{U},T) \nonumber \\
&\ge H(X|\mb{U}_{S^c},W,T) - \phi(\delta) \\
&\ge I(X;\mb{U}_S|\mb{U}_{S^c},W,T) - \phi(\delta),\nonumber
\end{align}
where (14) follows from (13) and the fact that conditioning reduces entropy. This together with (11) implies
\begin{align}
\sum_{l \in S}(R_l+\delta+\phi(\delta)) &\ge I({X}; \mb{U}_S|\mb{U}_{S^c},W,T) + \sum_{l \in S} I(X_l ; U_l | X, W, T) \nonumber\\
&= I({X}; \mb{U}_S|\mb{U}_{S^c},W,T) + I(\mb{X}_S; \mb{U}_S | \mb{U}_{S^c}, X,W, T) \nonumber\\
&=I({X},\mb{X}_S; \mb{U}_S|\mb{U}_{S^c},W,T) \nonumber\\
&\ge I(\mb{X}_S; \mb{U}_S|\mb{U}_{S^c},W,T).\nonumber
\end{align}
Again since conditioning reduces entropy and $Y$ is independent of $(W,T)$, we obtain from (12) that
\begin{align*}
I(X;Y) - \delta &\le I(Y;\mb{U}|T) \\
&= H(Y|T) - H(Y|\mb{U},T) \\
& \le H(Y|W,T) - H(Y|\mb{U},W,T)\\
&= I(Y;\mb{U}|W,T).
\end{align*}
Define $\tilde T = (W,T)$. It is then clear that $\lambda_i = (\mb{U},\tilde T)$ is in $\Lambda_i$,
\begin{align*}
\sum_{l \in S}(R_l+\delta+\phi(\delta)) &\ge I(\mb{X}_S; \mb{U}_S|\mb{U}_{S^c}, \tilde T) \hspace{0.05in} \textrm{for all} \hspace{0.05in}  S \subseteq \mathcal{L},\hspace{0.05in} \textrm{and} \\
I(X;Y) -\delta &\le I(Y;\mb{U}|\tilde T).
\end{align*}
Hence, $(R_1+\delta+\phi(\delta),\dots,R_L+\delta+\phi(\delta),I(X;Y)-\delta)$ is in $\mathcal{R}^{CI}_i(\lambda_i),$ which implies that $(\mb{R},I(X;Y))$ is in $\mathcal{R}^{CI}_i$ because $\mathcal{R}^{CI}_i$ is closed from Lemma 1(b). Therefore, $\mb{R}$ is in $\mathcal{R}^{CI}_i \bigr ( I(X;Y)\bigr).$
\end{proof}

\section{Gaussian Many-Help-One Hypothesis Testing against Independence}
We now turn to a continuous example of the problem studied in Section 4. This problem is related to the quadratic Gaussian many-help-one source coding problem \cite{Wagner,Oohama2005, Vinod}. We first obtain an outer bound similar to the one in Theorem 2 and then show that it is achieved by the Quantize-Bin-Test scheme.

Let $\lsb X, Y, X_1, \dots, X_L\rsb$ be a zero-mean Gaussian random vector such that
\[
X_l = X + N_l
\]
for each $l$ in $\mathcal{L}$. ${X}$ and ${Y}$ are correlated under the null hypothesis $H_0$ and are independent under the alternate hypothesis $H_1$, i.e.,
\begin{align*}
H_0: \hspace{0.05in} &Y = X + N \\
H_1: \hspace{0.05in} &Y \independent X.
\end{align*}
We assume that $X,N, N_1, N_2, \dots, N_L$ are mutually independent, and that $\sigma^2_N$ and $\sigma^2_{N_l}$ are positive. The setup of the problem is shown in Fig. \ref{fig:Fig4}. Unlike the previous problem, we now allow $X$ to be observed by an encoder, which sends a message to the detector at a finite rate $R$. We use $f^{(n)}$ to denote the corresponding encoding function. In order to be consistent with the source coding terminology, we call this the main encoder. The encoder observing $X_l$ is now called helper $l$. We assume that $X_{L+1}$ and $Z$ are deterministic. The rest of the problem formulation is the same as the one in Section 3.1. Let $\mathcal{R}^{MHO}$ be the rate-exponent region of this problem. We need the entropy characterization of $\mathcal{R}^{MHO}$. For that, define
\begin{align*}
\mathcal{R}_{*}^{MHO} \triangleq  \bigcup_n \bigcup_{f^{(n)}, \left(f_l^{(n)}\right)_{l \in \mathcal{L}}} \mathcal{R}_{*}^{MHO}\left(n,\left(f_l^{(n)}\right)_{l \in \mathcal{L}}\right),
\end{align*}
where
\begin{align*}
\mathcal{R}_{*}^{MHO}\left(n,\left(f_l^{(n)}\right)_{l \in \mathcal{L}}\right) \triangleq \Biggr \{ \bigr(R, \mb{R},E \bigr) :
R &\ge \frac{1}{n} \log \left|f^{(n)}\left({X}^n\right)\right|, \nonumber\\
R_l &\ge \frac{1}{n} \log \left|f_l^{(n)}\left({X}_l^n\right)\right| \hspace {0.05 in} \textrm{for all} \hspace {0.05 in} l \hspace {0.05 in} \textrm{in} \hspace {0.05 in}\mathcal{L}, \hspace {0.05 in} \textrm{and} \\
E &\le \frac{1}{n} I\left(Y^n; f^{(n)} (X^n),\left(f_l^{(n)} \left ({X}_l^n \right)\right)_{l \in \mathcal{L}}\right) \Biggr \}.\nonumber
\end{align*}
\begin{Cor}
$\mathcal{R}^{MHO} = \overline{\mathcal{R}_{*}^{MHO}}$.
\end{Cor}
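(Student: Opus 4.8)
The plan is to mirror exactly the route by which Corollary 1 is obtained from Proposition 1, adapting it to the presence of the main encoder and to the Gaussian (continuous) alphabet. First I would observe that the main encoder and the $L$ helpers together constitute an $(L+1)$-encoder instance of the general hypothesis testing problem, so the argument behind Theorem 1 in \cite{Ahl}, generalized as in Proposition 1, applies and yields
\[
\mathcal{R}^{MHO} = \overline{\bigcup_n \bigcup_{f^{(n)}, (f_l^{(n)})_{l \in \mathcal{L}}} \mathcal{R}'\bigl(n, f^{(n)}, (f_l^{(n)})_{l \in \mathcal{L}}\bigr)},
\]
where in $\mathcal{R}'$ the exponent constraint reads $E \le \frac{1}{n} D\bigl(P_{M Y^n} \| Q_{M Y^n}\bigr)$, with $M \triangleq \bigl(f^{(n)}(X^n), (f_l^{(n)}(X_l^n))_{l \in \mathcal{L}}\bigr)$ denoting the full tuple of transmitted messages. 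This is just the relative-entropy form (1) written for the $(L+1)$-encoder system of Fig.~\ref{fig:Fig4}.

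Next I would convert this relative entropy into the mutual information appearing in the definition of $\mathcal{R}_*^{MHO}$, exactly as in the passage from (1) to the conditional mutual information of Corollary 1. Two structural facts drive the reduction. First, under $H_1$ the observation $Y$ is independent of the source block $(X,\mb{X})$; since $M$ is a deterministic function of $(X^n, X_1^n, \dots, X_L^n)$, it follows that $M \independent Y^n$ under $H_1$, so $Q_{M Y^n} = Q_M \, Q_{Y^n}$. Second, the joint law of $(X,\mb{X})$ is identical under both hypotheses --- only its coupling with $Y$ changes --- and $Y$ retains the same Gaussian marginal under $H_0$ and $H_1$, so the messages and the detector observation have matching marginals, i.e.\ $Q_M = P_M$ and $Q_{Y^n} = P_{Y^n}$. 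Combining these gives
\[
D\bigl(P_{M Y^n} \| Q_{M Y^n}\bigr) = D\bigl(P_{M Y^n} \| P_M \, P_{Y^n}\bigr) = I\bigl(Y^n; M\bigr),
\]
where the mutual information is computed under $H_0$, which is precisely the exponent appearing in $\mathcal{R}_*^{MHO}\bigl(n,(f_l^{(n)})_{l \in \mathcal{L}}\bigr)$. Taking closures then yields $\mathcal{R}^{MHO} = \overline{\mathcal{R}_*^{MHO}}$.

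The main obstacle is the first step, since Proposition 1 and the underlying Theorem 1 of \cite{Ahl} are proved for finite alphabets, whereas here $X$, $X_l$, and $Y$ are continuous. The messages $M$ remain finite-valued, so $P_{M Y^n}$ and $Q_{M Y^n}$ live on a finite-times-Euclidean space and the relative entropy is well defined; the converse direction still goes through because the optimal type~2 exponent for testing $P_{M Y^n}$ against $Q_{M Y^n}$ subject to a fixed type~1 constraint is governed by $D\bigl(P_{M Y^n} \| Q_{M Y^n}\bigr)$ via Stein's lemma, which holds for general alphabets. For achievability I would use the acceptance region defined by a relative-entropy-typicality threshold for $P_{M Y^n}$ versus $Q_{M Y^n}$ and verify that the concentration arguments survive the continuous $Y$: conditioned on each value of $M$, the vector $Y^n$ is an i.i.d.\ Gaussian string, so the relevant empirical log-likelihood averages concentrate as required. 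Once these routine measure-theoretic checks are in place, the remaining algebra is mechanical and the corollary follows.
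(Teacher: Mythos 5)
Your proposal matches the paper's route: the paper proves this corollary by declaring it \emph{almost identical} to Proposition 1 (the generalization of Ahlswede--Csisz\'{a}r's Theorem 1), with the divergence $D\bigl(P_{MY^n}\|Q_{MY^n}\bigr)$ collapsing to $I(Y^n;M)$ exactly as in your calculation, since under $H_1$ the messages are independent of $Y^n$ and the marginals of $(X,\mb{X})$ and of $Y$ agree under both hypotheses. One small caution: your claim that $Y^n$ is conditionally i.i.d.\ Gaussian given $M$ holds only under $H_1$ (under $H_0$, conditioning on the messages destroys the product structure of $Y^n = X^n + N^n$), but this does not affect the argument, because the Stein-lemma concentration in the Ahlswede--Csisz\'{a}r achievability proof is applied across i.i.d.\ repetitions of the super-symbol $(M,Y^n)$ rather than within a single block, and there the law of large numbers needs only finiteness of the divergence, which follows from $I(Y^n;M)\le H(M)<\infty$.
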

The proof of this result is almost identical to that of Proposition 1. Define the set
\begin{align*}
\tilde{\mathcal{R}}^{MHO} \triangleq \Biggr \{&(R, R_1,\dots,R_L, E) : \hspace{0.05in} \textrm{there exists} \hspace{0.05in} (r_1,\dots,r_L) \in \mathbb{R}_{+}^L \hspace{0.05in} \textrm{such that} \\
&R_l \ge r_l \hspace{0.05in}\textrm{for all} \hspace{0.05in} l \hspace{0.05in} \textrm{in} \hspace{0.05in} \mathcal{L}, \hspace{0.05in} \textrm{and}\\
&R + \sum_{l \in S} R_l \ge \frac{1}{2} \log^{+} \left [ \frac{1}{D } \left (\frac{1}{\sigma^2_X} + \sum_{l \in  S^c} \frac{1- 2^{- 2 r_l}}{\sigma^2_{N_l}} \right )^{-1}\right ] + \sum_{l \in S} r_l \hspace{0.05in} \textrm{for all} \hspace{0.05in} S \subseteq \mathcal{L} \Biggr\},
\end{align*}
where
\[
D = (\sigma^2_X+\sigma^2_N) 2^{-2E} - \sigma^2_N.
\]
\begin{Thm} The rate-exponent region of this problem
\[
\mathcal{R}^{MHO} = \tilde{\mathcal{R}}^{MHO}.
\]
\end{Thm}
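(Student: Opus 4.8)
The plan is to exploit the correspondence between this problem and the quadratic Gaussian many-help-one source coding problem, whose rate-distortion region is known \cite{Oohama2005, Vinod}. The bridge is the identity $D = (\sigma^2_X + \sigma^2_N) 2^{-2E} - \sigma^2_N$: a code that conveys enough about $X^n$ to drive its conditional entropy power down to $D$ achieves test exponent $E$, and conversely. I would prove the two inclusions $\mathcal{R}^{MHO} \subseteq \tilde{\mathcal{R}}^{MHO}$ (converse) and $\tilde{\mathcal{R}}^{MHO} \subseteq \mathcal{R}^{MHO}$ (achievability) separately, first noting that $\tilde{\mathcal{R}}^{MHO}$ is closed so that, via Corollary 3, it suffices to work with $\mathcal{R}_{*}^{MHO}$.

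For the converse, I would start from Corollary 3, take any $(R,\mb{R},E)$ in $\mathcal{R}_{*}^{MHO}(n,\cdots)$, and write $J$ for the message tuple $\left(f^{(n)}(X^n),(f_l^{(n)}(X_l^n))_{l \in \mathcal{L}}\right)$, so that $E \le \frac{1}{n} I(Y^n;J)$. Because $Y^n = X^n + N^n$ with $N^n$ Gaussian and independent of $J$, the conditional entropy-power inequality gives $2^{2h(Y^n|J)/n} \ge 2^{2h(X^n|J)/n} + 2\pi e\,\sigma^2_N$; combining this with $h(Y^n) = \frac{n}{2}\log\!\big(2\pi e(\sigma^2_X+\sigma^2_N)\big)$ and the exponent constraint yields the key bound $\frac{1}{2\pi e} 2^{2h(X^n|J)/n} \le (\sigma^2_X+\sigma^2_N)2^{-2E} - \sigma^2_N = D$. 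This is precisely the conditional-entropy-power bound that drives the converse of the quadratic Gaussian many-help-one source coding problem: in that problem the distortion constraint is used only to produce this bound, so the remainder of the Oohama and Prabhakaran et al. argument --- the recursive use of the entropy-power inequality over the helpers and the introduction of the auxiliary rates $r_l$, in the spirit of the Theorem 2 outer bound with $X$ as the Markov-coupling auxiliary --- can be invoked to deliver the inequalities defining $\tilde{\mathcal{R}}^{MHO}$. Taking closures gives $\mathcal{R}^{MHO} \subseteq \tilde{\mathcal{R}}^{MHO}$.

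For achievability, I would apply the Quantize-Bin-Test scheme (Theorem 1, extended to the extra main encoder observing $X$ and to continuous alphabets by the usual fine-quantization limiting argument) with jointly Gaussian test channels $U_l = X_l + W_l$, $W_l \sim \mathcal{N}(0,\nu_l)$ independent, for helper $l$, together with a matched Gaussian description from the main encoder. A direct computation gives $I(X_l;U_l\,|\,X) = \frac{1}{2}\log\frac{\sigma^2_{N_l}+\nu_l}{\nu_l}$, so setting $r_l$ equal to this makes $\frac{1-2^{-2r_l}}{\sigma^2_{N_l}} = \frac{1}{\sigma^2_{N_l}+\nu_l}$ and identifies $\left(\frac{1}{\sigma^2_X}+\sum_{l\in S^c}\frac{1-2^{-2r_l}}{\sigma^2_{N_l}}\right)^{-1}$ with the conditional variance $\sigma^2_{X|\mb{U}_{S^c}}$. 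With these choices the QBT rate constraints reduce to those of $\tilde{\mathcal{R}}^{MHO}$, while the reconstruction of $X$ from the main description and $\mb{U}$ has conditional variance $D$, so the achieved exponent equals $I(Y;\text{main},\mb{U}) = \frac{1}{2}\log\frac{\sigma^2_X+\sigma^2_N}{D+\sigma^2_N} = E$. Optimality of this Berger-Tung/Gaussian assignment for the rate region is exactly the content of the cited many-help-one source coding results.

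The main obstacle I anticipate is the converse, specifically importing the Gaussian many-help-one source coding converse. While the reduction to a conditional-entropy-power bound via the entropy-power inequality is clean, one must verify that the cited converse genuinely factors through this bound (rather than through an MMSE distortion inequality) so that it applies verbatim here, and one must handle the auxiliary variables $r_l$ and the $\log^{+}$ and degenerate cases with care. A secondary technical point is extending the finite-alphabet QBT inner bound of Theorem 1 to Gaussian sources and to the additional main encoder, which requires a standard but careful quantization-and-limiting argument to control the continuous-alphabet mutual informations.
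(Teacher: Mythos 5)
Your proposal is correct, and your converse is essentially the paper's own. The paper likewise asserts that the Oohama/Prabhakaran \emph{et al.} converse uses the distortion constraint only through a conditional-entropy-power bound, and it executes this with a continuous extension of the Theorem 2 outer bound (its Lemma 3, with auxiliaries $(U,\mb{U},W,T)$ satisfying (C8)--(C11)), the Wagner--Anantharam inequality (its Lemma 4), and the entropy power inequality applied at the single-letter level, $2^{2h(Y|U,\mb{U},T)} \ge 2^{2h(X|U,\mb{U},T)} + 2\pi e \sigma^2_N$, which yields exactly your key bound $I(X;U,\mb{U}|T) \ge \tfrac{1}{2}\log\bigl(\sigma^2_X/D\bigr)$ with $D = (\sigma^2_X+\sigma^2_N)2^{-2E}-\sigma^2_N$; your $n$-letter EPI formulation applied to the message tuple $J$ is an equivalent organization of the same idea, and the worry you raise is resolved affirmatively in the paper's prose (it even notes that Wang \emph{et al.}'s alternative converse fails precisely because it relies on the distortion constraint). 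Where you genuinely diverge is achievability. You propose evaluating the Quantize-Bin-Test inner bound directly with Gaussian test channels $U_l = X_l + W_l$, which forces you to extend the finite-alphabet Theorem 1 to continuous sources and to the extra main encoder, and then to re-derive the Gaussian single-letter region --- exactly the obstacle you flag. The paper sidesteps all of this: given $(R,\mb{R},E)$ in $\tilde{\mathcal{R}}^{MHO}$, it invokes Oohama's achievability for the quadratic Gaussian many-help-one \emph{source coding} problem as a black box to obtain quantize-and-bin encoders and a decoder with MSE at most $D+\delta$ in $X$; since $Y = X + N$ and $Y^n(i) \leftrightarrow X^n(i) \leftrightarrow \hat{X}^n(i)$, the same estimate attains MSE $\sigma^2_N + D + \delta$ in $Y$, so $\tfrac{1}{n}I(Y^n;\text{messages})$ is lower bounded by the rate-distortion function of $Y$ at that distortion, i.e., by $E - \bar{\delta}$, and Corollary 2 finishes. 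This buys freedom from any continuous-alphabet typicality or exponent analysis and from redoing the Gaussian evaluation; your route is more explicit about which test channels achieve the region, and your bookkeeping is right ($\tfrac{1-2^{-2r_l}}{\sigma^2_{N_l}} = \tfrac{1}{\sigma^2_{N_l}+\nu_l}$ and $\sigma^2_{Y|U_0,\mb{U}} = D + \sigma^2_N$), but you would still need to verify the QBT rate constraints for subsets not containing the main encoder and the $\log^{+}$/degenerate cases --- work that Oohama's coding theorem already encapsulates and that the paper's reduction inherits for free.
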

\begin{proof}
The proof of inclusion $\mathcal{R}^{MHO} \subseteq \tilde{\mathcal{R}}^{MHO}$ is similar to the converse proof of the Gaussian many-help-one source coding problem by Oohama \cite{Oohama2005} and Prabhakaran \emph{et al.} \cite{Vinod} (see also \cite{Wagner3}). Their proofs continue to work if we replace the original mean square error distortion constraint with the mutual information constraint that we have here. It is noteworthy though that Wang \emph{et al.}'s \cite{wang} approach does not work here because it relies on the distortion constraint.

We start with the continuous extension of Theorem 2. Let $\Lambda_o$ be the set of random variables $\lambda_o = (U,\mb{U}, W, T)$ such that each take values in a finite-dimensional Euclidean space and collectively they satisfy
\begin{enumerate}
\item[(C8)] $(W,T)$ is independent of $(X,\mb{X},Y)$,
\item[(C9)] ${U} \leftrightarrow ({X},W,T) \leftrightarrow (\mb{U}, \mb{X},Y)$,
\item[(C10)] ${U}_l \leftrightarrow ({X}_l,W,T) \leftrightarrow (U,\mb{U}_{l^c}, X,\mb{X}_{l^c},Y)$ for all $l$ in $\mathcal{L}$, and
\item[(C11)] the conditional distribution of $U_l$ given $(W,T)$ is discrete for each $l$.
\end{enumerate}
Define the set
\begin{align}
\mathcal{R}_o^{MHO}(\lambda_o) \triangleq \biggr \{ (R,\mb{R},E) :
R_l &\ge I(X_l ; U_l | X, W, T) \hspace{0.05in} \textrm{for all} \hspace{0.05in}  l \hspace{0.05in} \textrm{in} \hspace{0.05in} \mathcal{L}, \hspace {0.05 in} \\
R+\sum_{l \in S}R_l &\ge I({X}; U,\mb{U}_S|\mb{U}_{S^c},T) + \sum_{l \in S} I(X_l ; U_l | X, W, T) \hspace{0.05in} \textrm{for all} \hspace{0.05in}  S \subseteq \mathcal{L}, \hspace {0.05 in} \textrm{and} \\
E &\le I(Y;U,\mb{U}|T) \biggr \}.
\end{align}
Finally, let
\begin{align*}
\mathcal{R}_o^{MHO} \triangleq \bigcup_{\lambda_o \in \Lambda_o} \mathcal{R}_o^{CI}(\lambda_o).\nonumber
\end{align*}
We have the following lemma.
\begin{Lem}
$\mathcal{R}_{*}^{MHO} \subseteq \mathcal{R}_o^{MHO}.$
\end{Lem}
The inequalities (16) and (17) can be established as in the proof of Theorem 2. In particular, we obtain (16) by considering only those constraints on the sum of rate combinations that include $R$. The inequality (15) is not present in Theorem 2. However, it can be derived easily. We need the following lemma.
\begin{Lem}\cite[Lemma 9]{Wagner3} If $\lambda_o$ is in $\Lambda_o$, then for all $S \subseteq \mathcal{L}$,
\[
2^{2I(X;\mb{U}_S|W,T)} \le 1 + \sum_{l \in S} \frac{1 - 2^{-2I(X_l;U_l|X,W,T)}}{\sigma^2_{N_l}/\sigma^2_X}.
\]
\end{Lem}
Consider any $(R,\mb{R},E)$ in $\mathcal{R}_o^{MHO}$. Then there exists $\lambda_o$ in $\Lambda_o$ such that for all $S \subseteq \mathcal{L}$,
\begin{align}
R+\sum_{l \in S}R_l &\ge I({X}; U,\mb{U}_S|\mb{U}_{S^c},T) + \sum_{l \in S} I(X_l ; U_l | X, W, T) \nonumber\\
&=I({X}; U,\mb{U}|T) - I({X}; \mb{U}_{S^c}|T) + \sum_{l \in S} I(X_l ; U_l | X, W, T),
\end{align}
and
\begin{align}
E &\le I(Y;U,\mb{U}|T).
\end{align}
We can lower bound the first term in (18) by applying the entropy power inequality \cite{Cover} and obtain
\begin{align*}
2^{2 h (Y|U,\mb{U},T)} &= 2^{2 h (X + N|U,\mb{U},T)} \\
&\ge 2^{2 h ({X}|U,\mb{U},T)} + 2^{2 h ({N})} \\
&=  2^{2 h ({X}|U,\mb{U},T)} + 2 \pi e \sigma^2_N,
\end{align*}
which simplifies to
\begin{align}
h (Y|U,\mb{U},T) \ge \frac{1}{2} \log \left (2^{2 h ({X}|U,\mb{U},T)} + 2 \pi e \sigma^2_N \right ).
\end{align}
Now (19) and (20) together imply
\begin{align}
 I(X; U,\mb{U}|T) \ge \frac{1}{2} \log \frac{\sigma^2_X}{(\sigma^2_X+\sigma^2_N)2^{-2E}-\sigma^2_N}.
\end{align}
We next upper bound the second term in (18). Since conditioning reduces entropy and $X$ is independent of $(W,T)$, we have
\begin{align}
I({X}; \mb{U}_{S^c}|T) &= h(X|T) - h(X|\mb{U}_{S^c},T) \nonumber\\
&\le h(X|W,T) - h(X|\mb{U}_{S^c},W,T) \nonumber\\
&=I({X}; \mb{U}_{S^c}|W,T).
\end{align}
Define
\[
r_l \triangleq I(X_l ; U_l | X, W, T).
\]
Then we have from (18), (21), (22), and Lemma 4 that
\begin{align*}
R+\sum_{l \in S}R_l &\ge \frac{1}{2} \log^{+} \left [ \frac{1}{\Bigr ((\sigma^2_X+\sigma^2_N)2^{-2E}-\sigma^2_N \Bigr)} \left (\frac{1}{\sigma^2_X} + \sum_{l \in  S^c} \frac{1- 2^{- 2 r_l}}{\sigma^2_{N_l}} \right )^{-1}\right ] + \sum_{l \in S} r_l.
\end{align*}
On applying Lemma 3 and Corollary 2, we obtain $\mathcal{R}^{MHO} \subseteq \tilde {\mathcal{R}}^{MHO}.$

\begin{figure}[htp]
\centering
  \includegraphics[width=3.5in]{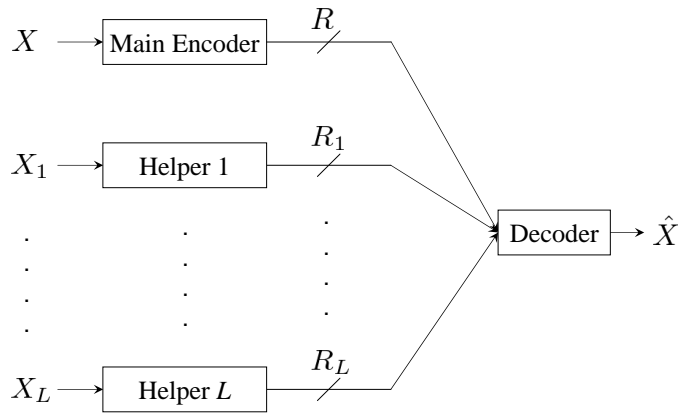}
\caption{Gaussian many-help-one source coding problem}\label{fig:Fig6}
\end{figure}
We use the Quantize-Bin-Test scheme to prove the reverse inclusion. Consider $(R, \mb{R}, E)$ in $\tilde{\mathcal{R}}^{MHO}$. Then there exists $\mb{r} \in \mathbb{R}^L_{+}$ such that
\begin{align*}
&R_l \ge r_l \hspace{0.05in } \textrm{for all \emph{l} in} \hspace{0.05in }\mathcal{L}, \hspace{0.05in } \textrm{and} \\
&R + \sum_{l \in S} R_l  \ge \frac{1}{2} \log^{+} \left [ \frac{1}{D} \left (\frac{1}{\sigma^2_X} + \sum_{l \in S^c} \frac{1- 2^{- 2 r_l}}{\sigma^2_{N_l}} \right )^{-1}\right ] + \sum_{l \in S} r_l \hspace{0.05in } \textrm{for all} \hspace{0.05in } S \subseteq \mathcal{L}.
\end{align*}
We therefore have from Oohama's result \cite{Oohama2005} that  $(R, \mb{R}, D)$ is achievable for the quadratic Gaussian many-help-one source coding problem, the setup of which is shown in Fig. \ref{fig:Fig6}. In this problem, the main encoder and helpers operate as before. The decoder however uses all available information to estimate $X$ such that the mean square error of the estimate is no more than a fixed positive number $D$. Since $(R, \mb{R}, D)$ is achievable, it follows by Oohama's achievability proof that for any positive $\delta$ and sufficiently large $n$, there exists quantize and bin encoders $f^{(n)}, f^{(n)}_1, \dots, f^{(n)}_L$, and a decoder $\psi^{(n)}$ such that
\begin{align}
R+\delta &\ge \frac{1}{n} \log \left |f^{(n)}(X^n) \right|, \\
R_l+\delta &\ge \frac{1}{n} \log \left|f^{(n)}_l(X_l^n)\right| \hspace{0.05in } \textrm{for all \emph{l} in} \hspace{0.05in } \mathcal{L},\hspace{0.05 in} \textrm{and}\\
D+\delta &\ge \frac{1}{n} \sum_{i=1}^n E\left [\left (X^n(i) - \hat X^n(i)\right)^2\right],
\end{align}
where
\begin{align*}
\hat X^n &= \psi^{(n)} \left (f^{(n)}(X^n), \left (f^{(n)}_l(X_l^n) \right)_{l \in \mathcal{L}} \right).
\end{align*}
For each $i$, we have
\begin{align}
E \left [\left(Y^n(i) - \hat X^n(i)\right)^2\right] &= E \left [\left(Y^n(i) - X^n(i) + X^n(i)- \hat X^n(i)\right)^2\right] \nonumber\\
&= E \left [\left(N^n(i) + X^n(i)- \hat X^n(i)\right)^2\right] \nonumber\\
&= \sigma^2_N+E \left [\left(X^n(i)- \hat X^n(i)\right)^2\right],\nonumber
\end{align}
where the last equality follows because
\[
Y^n(i) \leftrightarrow X^n(i) \leftrightarrow \hat X^n(i).
\]
By averaging over time, we obtain
\begin{align}
\frac{1}{n} \sum_{i=1}^n E \left [\left(Y^n(i) - \hat X^n(i)\right)^2\right] &= \sigma^2_N+ \frac{1}{n} \sum_{i=1}^n E \left [\left(X^n(i)- \hat X^n(i)\right)^2\right] \nonumber\\
&\le \sigma^2_N + D + \delta,\nonumber
\end{align}
where the last inequality follows from (25). Therefore, the code achieves a distortion $\sigma^2_N + D + \delta$ in $Y$. Hence,
\[
\frac{1}{n} I \left (Y^n; f^{(n)}(X^n), \left (f^{(n)}_l(X_l^n) \right)_{l \in \mathcal{L}} \right )
\]
must be no less than the rate-distortion function of $Y$ at a distortion $\sigma^2_N + D + \delta$, i.e.,
\begin{align}
\frac{1}{n} I \left (Y^n; f^{(n)}(X^n), \left (f^{(n)}_l(X_l^n) \right)_{l \in \mathcal{L}} \right ) &\ge \frac{1}{2} \log \frac{\sigma^2_X+\sigma^2_N}{\sigma^2_N + D + \delta} \nonumber\\
&=\frac{1}{2} \log \frac{\sigma^2_X+\sigma^2_N}{ (\sigma^2_X+\sigma^2_N) 2^{-2E}+ \delta} \nonumber\\
&\ge\frac{1}{2} \log \frac{\sigma^2_X+\sigma^2_N}{ (\sigma^2_X+\sigma^2_N) 2^{-2(E - \bar {\delta})}} \\
&= E - \bar {\delta},
\end{align}
where (26) follows for a positive $\bar {\delta}$ such that $\bar {\delta} \rightarrow 0$ as ${\delta}\rightarrow 0$. We now have from (23), (24), and (27) that $(R, \mb{R}, E)$ is in $\overline{\mathcal{R}_{*}^{MHO}}$. Hence by Corollary 2, $\tilde {\mathcal{R}}^{MHO} \subseteq {\mathcal{R}}^{MHO}$.
\end{proof}

\subsection{Special Cases}
Consider the following special cases. We continue to use the terminology from the source coding literature.
\begin{enumerate}
\item \emph{Gaussian CEO hypothesis testing against independence:} When $R=0$, the problem reduces to the Gaussian CEO hypothesis testing against independence problem. Let $\mathcal{R}^{CEO}$ be the rate-exponent region of this problem. Define the set
\begin{align*}
\tilde{\mathcal{R}}^{CEO} \triangleq \Biggr \{&(R_1, \dots, R_L, E) : \hspace{0.05in}\textrm{there exists} \hspace{0.05in} \mb{r} \in \mathbb{R}_{+}^L \hspace{0.05in} \textrm{such that} \\
&\sum_{l \in S} R_l \ge \frac{1}{2} \log^{+} \left [ \frac{1}{D } \left (\frac{1}{\sigma^2_X} + \sum_{l \in  S^c} \frac{1- 2^{- 2 r_l}}{\sigma^2_{N_l}} \right )^{-1}\right ] + \sum_{l \in S} r_l \hspace{0.05in} \textrm{for all} \hspace{0.05in} S \subseteq \mathcal{L} \Biggr\}.
\end{align*}
We immediately have the following corollary as a consequence of Theorem 7.
\begin{Cor}
${\mathcal{R}}^{CEO} = \tilde{\mathcal{R}}^{CEO}.$
\end{Cor}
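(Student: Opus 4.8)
The plan is to obtain the result by specializing Theorem 7 to the case $R=0$ and checking that the region $\tilde{\mathcal{R}}^{MHO}$ degenerates to $\tilde{\mathcal{R}}^{CEO}$ on that slice.

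First I would record the observation that the Gaussian CEO problem is exactly the Gaussian many-help-one problem with the main encoder removed, so that its rate-exponent region is the $R=0$ cross-section of $\mathcal{R}^{MHO}$:
\[
\mathcal{R}^{CEO} = \lmb (R_1, \dots, R_L, E) : (0, R_1, \dots, R_L, E) \in \mathcal{R}^{MHO} \rmb.
\]
A rate-$0$ message from the main encoder carries asymptotically no information, so allowing such a message versus forbidding it does not change the achievable set, and this identification is immediate from the problem formulation.

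By Theorem 7 we may replace $\mathcal{R}^{MHO}$ by $\tilde{\mathcal{R}}^{MHO}$, and it remains to show that the $R=0$ slice of $\tilde{\mathcal{R}}^{MHO}$ coincides with $\tilde{\mathcal{R}}^{CEO}$. Setting $R=0$ in the definition of $\tilde{\mathcal{R}}^{MHO}$, a vector $(R_1,\dots,R_L,E)$ lies in the slice if and only if there exists $\mb{r} \in \mathbb{R}_{+}^L$ satisfying both the individual constraints $R_l \ge r_l$ and the sum constraints
\[
\sum_{l \in S} R_l \ge \frac{1}{2}\log^{+}\lbb \frac{1}{D}\lsb \frac{1}{\sigma_X^2} + \sum_{l \in S^c}\frac{1-2^{-2r_l}}{\sigma_{N_l}^2}\rsb^{-1}\rbb + \sum_{l \in S} r_l.
\]
The set $\tilde{\mathcal{R}}^{CEO}$ is defined by exactly these sum constraints but \emph{without} the individual constraints $R_l \ge r_l$. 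Hence the slice is trivially contained in $\tilde{\mathcal{R}}^{CEO}$, and the only thing left is the reverse inclusion, namely that the individual constraints are redundant.

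The single point requiring an argument is this redundancy, which I would obtain by taking $S = \lmb l \rmb$ in the sum constraint: since the $\log^{+}$ term is nonnegative, the singleton constraint already forces $R_l \ge r_l$ for every $l$ in $\mathcal{L}$. Thus any $\mb{r}$ witnessing membership in $\tilde{\mathcal{R}}^{CEO}$ automatically satisfies the individual helper-rate constraints, so the $R=0$ slice of $\tilde{\mathcal{R}}^{MHO}$ equals $\tilde{\mathcal{R}}^{CEO}$. Combining the three observations gives $\mathcal{R}^{CEO} = \tilde{\mathcal{R}}^{CEO}$. There is no genuinely hard step here; the statement is a true corollary, with all the substance residing in Theorem 7 and the only subtlety being the redundancy of the per-helper constraints at $R=0$.
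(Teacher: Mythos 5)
Your proposal is correct and follows essentially the same route as the paper, which states the corollary as an immediate consequence of Theorem 7 at $R=0$: your singleton observation (taking $S=\{l\}$ in the sum constraint, whose $\log^{+}$ term is nonnegative, forces $R_l \ge r_l$) is precisely the detail that makes the per-helper constraints redundant and identifies the $R=0$ slice of $\tilde{\mathcal{R}}^{MHO}$ with $\tilde{\mathcal{R}}^{CEO}$. One caution: your justification that a zero-rate main-encoder message ``carries asymptotically no information'' is not immediate in distributed hypothesis testing, where zero-rate links (subexponentially many messages, or even one bit) can strictly improve the achievable exponent in general; the cleaner way to close the needed inclusion of the $R=0$ slice inside $\mathcal{R}^{CEO}$ is to note that the achievability half of Theorem 7 at $R=0$ rests on the CEO source-coding scheme of Oohama and Prabhakaran \emph{et al.}, which uses no main encoder at all, so the sandwich $\mathcal{R}^{CEO} \subseteq \tilde{\mathcal{R}}^{CEO} \subseteq \mathcal{R}^{CEO}$ closes without any a priori claim about zero-rate links.
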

\item \emph{Gaussian one-helper hypothesis testing against independence:}
When $L=1$, the problem reduces to the Gaussian one-helper hypothesis testing against independence problem. Let $\mathcal{R}^{OH}$ be the rate-exponent region of this problem. Define the sets
\begin{align*}
\tilde{\mathcal{R}}^{OH} \triangleq \Biggr \{&(R, R_1, E) : \hspace{0.05in}\textrm{there exists} \hspace{0.05in} r_1 \in \mathbb{R}_{+} \hspace{0.05in} \textrm{such that} \\
&R_1 \ge r_1, \\
&R+R_1 \ge \frac{1}{2} \log^{+} \left [ \frac{\sigma^2_X}{D} \right ] + r_1, \hspace{0.05in} \textrm{and}\\
&R \ge \frac{1}{2} \log^{+} \left [ \frac{1}{D}  \left (\frac{1}{\sigma^2_X} + \frac{1- 2^{- 2 r_1}}{\sigma^2_{N_1}} \right )^{-1} \right ] \Biggr\},
\end{align*}
and
\begin{align*}
\bar{\mathcal{R}}^{OH} \triangleq  \Biggr\{&(R, R_1, E) : R \ge \frac{1}{2} \log^{+} \left [ \frac{\sigma^2_X}{ D} \left ( 1- \rho^2 +\rho^2 2^{-2 R_1}\right)\right ] \Biggr\},
\end{align*}
where
\[
\rho^2 = \frac{\sigma^2_X}{\sigma^2_X+\sigma^2_{N_1}}.
\]
\begin{Cor}
${\mathcal{R}}^{OH} = \tilde{\mathcal{R}}^{OH} = \bar{\mathcal{R}}^{OH}.$
\end{Cor}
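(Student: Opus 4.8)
The first equality $\mathcal{R}^{OH}=\tilde{\mathcal{R}}^{OH}$ requires no new argument: it is simply Theorem 7 specialized to $L=1$. When $L=1$ the only subsets are $S=\emptyset$ and $S=\{1\}$, and the two sum-rate inequalities of $\tilde{\mathcal{R}}^{MHO}$ become exactly the two inequalities in the definition of $\tilde{\mathcal{R}}^{OH}$: for $S=\emptyset$ (complement $\{1\}$) the inner sum is $\tfrac{1-2^{-2r_1}}{\sigma^2_{N_1}}$, giving $R\ge\tfrac12\log^{+}[\tfrac1D(\tfrac1{\sigma^2_X}+\tfrac{1-2^{-2r_1}}{\sigma^2_{N_1}})^{-1}]$, whereas for $S=\{1\}$ the empty complement collapses the bracket to $\tfrac1{\sigma^2_X}$, giving $R+R_1\ge\tfrac12\log^{+}[\tfrac{\sigma^2_X}{D}]+r_1$; the constraint $R_1\ge r_1$ is kept verbatim. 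Hence $\tilde{\mathcal{R}}^{OH}$ is literally $\tilde{\mathcal{R}}^{MHO}$ at $L=1$, and all the genuine work lies in the second equality $\tilde{\mathcal{R}}^{OH}=\bar{\mathcal{R}}^{OH}$, which is a Fourier--Motzkin elimination of the scalar $r_1\in[0,R_1]$ (recall $r_1\ge0$ and rates are nonnegative).

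The plan for the second equality is to carry out the elimination by optimizing over $r_1$. Fixing $(R_1,E)$, hence $D$, a triple $(R,R_1,E)$ lies in $\tilde{\mathcal{R}}^{OH}$ iff $R\ge\min_{0\le r_1\le R_1}\max(\varphi_1(r_1),\varphi_2(r_1))$, where $\varphi_1(r_1)=\tfrac12\log^{+}[\tfrac1D(\tfrac1{\sigma^2_X}+\tfrac{1-2^{-2r_1}}{\sigma^2_{N_1}})^{-1}]$ and $\varphi_2(r_1)=\tfrac12\log^{+}[\tfrac{\sigma^2_X}{D}]+r_1-R_1$. The structural fact that drives everything is monotonicity: $\varphi_1$ is nonincreasing in $r_1$ (as $1-2^{-2r_1}$ grows the bracket grows and its reciprocal shrinks) while $\varphi_2$ is strictly increasing. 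Consequently the inner maximum is minimized where the two curves meet, and the minimal value is the common crossing value.

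To locate the crossing I would work with the unclipped functions $g_1,g_2$ (the $\log^{+}$ replaced by $\log$), substitute $t=2^{-2r_1}\in[2^{-2R_1},1]$, and solve $g_1=g_2$. This reduces to the linear relation $\sigma^2_X(1-t)=\sigma^2_{N_1}(2^{2R_1}t-1)$, whose solution is $t^{*}=(\sigma^2_X+\sigma^2_{N_1})/(\sigma^2_X+\sigma^2_{N_1}2^{2R_1})$; since $2^{2R_1}\ge1$ one checks at once that $2^{-2R_1}\le t^{*}\le1$, so the crossing point $r_1^{*}$ always lies in the admissible interval $[0,R_1]$. Substituting $t^{*}$ back and using $\rho^2=\sigma^2_X/(\sigma^2_X+\sigma^2_{N_1})$ and $1-\rho^2=\sigma^2_{N_1}/(\sigma^2_X+\sigma^2_{N_1})$ collapses the common value to $R^{*}=\tfrac12\log[\tfrac{\sigma^2_X}{D}(1-\rho^2+\rho^2 2^{-2R_1})]$, which is exactly the quantity inside the threshold defining $\bar{\mathcal{R}}^{OH}$. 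Granting for the moment that $\min_{r_1}\max(\varphi_1,\varphi_2)=\max(R^{*},0)$, both inclusions are then mechanical: $\tilde{\mathcal{R}}^{OH}\subseteq\bar{\mathcal{R}}^{OH}$ follows from $R\ge\max(\varphi_1(r_1),\varphi_2(r_1))\ge\min_{r_1}\max=\max(R^{*},0)$, and $\bar{\mathcal{R}}^{OH}\subseteq\tilde{\mathcal{R}}^{OH}$ follows by exhibiting the explicit choice $r_1=r_1^{*}$.

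The main obstacle is the bookkeeping of the two $\log^{+}$ operators, suppressed above, which must be reinstated to match $\bar{\mathcal{R}}^{OH}$ exactly. The clean way to handle it is to note $\varphi_1=g_1^{+}\ge0$ everywhere and to split at $r_1^{*}$: for $r_1\le r_1^{*}$ monotonicity of $g_1$ gives $\varphi_1\ge\max(R^{*},0)$, while for $r_1\ge r_1^{*}$ one has $\varphi_2\ge g_2\ge g_2(r_1^{*})=R^{*}$ together with $\varphi_1\ge0$; either way $\max(\varphi_1,\varphi_2)\ge\max(R^{*},0)$, giving the lower bound uniformly. For achievability at $r_1^{*}$ it remains to verify the short inequality $\varphi_2(r_1^{*})\le\max(R^{*},0)$, which holds because $\varphi_2(r_1^{*})=R^{*}$ when $D<\sigma^2_X$ and $\varphi_2(r_1^{*})=r_1^{*}-R_1\le0$ when $D\ge\sigma^2_X$ (and in the latter regime $R^{*}\le0$ as well, so the threshold itself vanishes). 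Combining the two bounds yields $\min_{r_1}\max(\varphi_1,\varphi_2)=\max(R^{*},0)=\tfrac12\log^{+}[\tfrac{\sigma^2_X}{D}(1-\rho^2+\rho^2 2^{-2R_1})]$, the defining threshold of $\bar{\mathcal{R}}^{OH}$, which completes the proof.
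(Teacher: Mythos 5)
Your proposal is correct and takes essentially the same route as the paper: both reduce the claim to evaluating $\min_{0\le r_1\le R_1}\max(\varphi_1,\varphi_2)$ and show the minimum is attained at the crossing point, which is precisely the paper's choice $r_1 = R_1 + \tfrac{1}{2}\log\left(1-\rho^2+\rho^2 2^{-2R_1}\right)$ in (28). The only difference is that you spell out details the paper leaves implicit (solving for $t^{*}$, verifying $r_1^{*}\in[0,R_1]$, and the $\log^{+}$ case analysis), all of which check out.
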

\begin{proof}
The first equality follows from Theorem 7. Consider any $(R, R_1, E)$ in $\tilde{\mathcal{R}}^{OH}$. It must satisfy
\begin{align*}
R &\ge \min_{0 \le r_1 \le R_1} \max \left \{\frac{1}{2} \log^{+} \left [ \frac{1}{D } \left (\frac{1}{\sigma^2_X} + \frac{1- 2^{- 2 r_1}}{\sigma^2_{N_1}} \right )^{-1} \right ],\hspace{0.05in} \frac{1}{2} \log^{+} \left [ \frac{\sigma^2_X}{D} \right ] +  r_1 - R_1 \right \}\nonumber\\
&= \frac{1}{2} \log^{+} \left [ \frac{\sigma^2_X}{ D } \left ( 1- \rho^2 +\rho^2 2^{-2 R_1}\right)\right ],
\end{align*}
where the equality is achieved by
\begin{align}
r_1 = R_1 + \frac{1}{2} \log\left( 1- \rho^2 +\rho^2 2^{-2 R_1}\right).
\end{align}
We therefore have that $(R, R_1, E)$ is in $\bar{\mathcal{R}}^{OH}$, and hence $\tilde{\mathcal{R}}^{OH} \subseteq  \bar{\mathcal{R}}^{OH}$. The proof of the reverse containment follows by noticing that for any $(R, R_1, E)$ in $\bar{\mathcal{R}}^{OH}$, there exists $r_1$ as in (28) such that all inequalities in the definition of  $\tilde{\mathcal{R}}^{OH}$ are satisfied.
\end{proof}
\end{enumerate}

\section{A General Outer Bound}
We return to the general problem formulated in Section 3. The problem remains open till date. Several inner bounds are known for $L=1$ \cite{Han, Ahl, Han1, Han2}. But even for $L=1$, there is no nontrivial outer bound with which to compare the inner bounds. We give an outer bound for a class of instances of the general problem.

Consider the class of instances such that $P_\mb{X} = Q_\mb{X}$, i.e., the marginal distributions of $\mb{X}$ are the same under both hypotheses. Stein's lemma \cite{Cover} asserts that the centralized type 2 error exponent for this class of problems is
\[
E_C \triangleq D\left(P_{\mb{X}Y} \| Q_{\mb{X}Y}\right),
\]
which is achieved when $\mb{X}$ and $Y$ both are available at the detector. Let
\[
\mathcal{R}_{C} \triangleq \{(\mb{R},E) : E \le E_C\}.
\]
We have the following trivial centralized outer bound.
\begin{Lem}
$\mathcal{R} \subseteq \mathcal{R}_{C}.$
\end{Lem}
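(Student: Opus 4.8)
The plan is to combine the entropy characterization of $\mathcal{R}$ from Proposition 1 with the data processing inequality for relative entropy and the tensorization of relative entropy for i.i.d.\ sources. The underlying idea is operationally transparent: the messages reaching the detector, together with $Y^n$, constitute a deterministic function of the full data $(\mb{X}^n, Y^n)$, so a distributed scheme can never generate a larger divergence than the centralized one.

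First I would reduce to the entropy characterization. By Proposition 1, $\mathcal{R} = \overline{\mathcal{R}_{*}}$, so it suffices to prove $\mathcal{R}_{*} \subseteq \mathcal{R}_{C}$ and then pass to the closure. Fix $n$ and encoders $(f_l^{(n)})_{l \in \mathcal{L}}$, and let $(\mb{R},E)$ lie in $\mathcal{R}_{*}(n,(f_l^{(n)})_{l \in \mathcal{L}})$. Writing $M \triangleq (f_l^{(n)}(X_l^n))_{l \in \mathcal{L}}$, the defining constraint (1) gives
\[
E \le \frac{1}{n}\, D\biggl(P_{M, Y^n} \,\Big\|\, Q_{M, Y^n}\biggr).
\]

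Next I would apply the data processing inequality. The map $(\mb{X}^n, Y^n) \mapsto (M, Y^n)$ is deterministic, as it applies each encoder to its own observation and passes $Y^n$ through unchanged; since applying a common (possibly stochastic) map to both arguments of a relative entropy cannot increase it, the divergence above is at most $D(P_{\mb{X}^n Y^n} \,\|\, Q_{\mb{X}^n Y^n})$. Because the source is i.i.d., both $P_{\mb{X}^n Y^n}$ and $Q_{\mb{X}^n Y^n}$ are product distributions, so relative entropy tensorizes and
\[
\frac{1}{n}\, D\bigl(P_{\mb{X}^n Y^n} \,\big\|\, Q_{\mb{X}^n Y^n}\bigr) = D\bigl(P_{\mb{X} Y} \,\big\|\, Q_{\mb{X} Y}\bigr) = E_C.
\]
Chaining these bounds yields $E \le E_C$, so $(\mb{R},E) \in \mathcal{R}_{C}$; as this holds for every $n$ and every choice of encoders, $\mathcal{R}_{*} \subseteq \mathcal{R}_{C}$. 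Finally, since $\mathcal{R}_{C} = \{(\mb{R},E) : E \le E_C\}$ is a closed half-space, we get $\overline{\mathcal{R}_{*}} \subseteq \mathcal{R}_{C}$, which is the claim.

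I do not expect a genuine obstacle, consistent with the statement labeling this a trivial centralized bound; the only points requiring care are invoking the data processing inequality in the correct direction (so that the centralized divergence dominates) and observing that $\mathcal{R}_{C}$ is closed, so that passing to $\overline{\mathcal{R}_{*}}$ is harmless. It is worth remarking that this argument does not use the hypothesis $P_{\mb{X}} = Q_{\mb{X}}$ at all: that assumption enters only through Stein's lemma, in identifying $E_C$ as the achievable centralized exponent, and hence in the interpretation of $\mathcal{R}_{C}$ as the true centralized region.
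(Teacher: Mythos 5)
Your proposal is correct. The paper itself offers no written proof of this lemma---it is stated as a ``trivial centralized outer bound'' immediately after invoking Stein's lemma, the implicit argument being operational: a centralized detector with access to $(\mb{X}^n, Y^n)$ can simulate any distributed scheme by computing the messages itself, so the converse part of Stein's lemma caps the achievable exponent at $E_C$. Your route is the natural formalization of this through the machinery the paper has already set up: Proposition 1 reduces membership in $\mathcal{R}$ to the divergence constraint in (1), the data processing inequality for relative entropy (applied to the deterministic map $(\mb{x}^n, y^n) \mapsto ((f_l^{(n)}(x_l^n))_{l \in \mathcal{L}}, y^n)$) gives $D(P_{M,Y^n} \| Q_{M,Y^n}) \le D(P_{\mb{X}^n Y^n} \| Q_{\mb{X}^n Y^n})$, and additivity of divergence over product measures identifies the right-hand side as $n E_C$; closedness of the half-space $\mathcal{R}_C$ then absorbs the closure in $\mathcal{R} = \overline{\mathcal{R}_{*}}$. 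What your version buys is self-containment---the Stein-type converse is already baked into Proposition 1, so you never need to re-run an $\epsilon$-type-1-error converse argument---and your closing remark is a genuine observation the paper leaves implicit: the inclusion $\mathcal{R} \subseteq \mathcal{R}_C$ holds without the hypothesis $P_{\mb{X}} = Q_{\mb{X}}$, which matters only for interpreting $E_C$ as the exponent actually \emph{achieved} centrally, i.e., for the tightness of $\mathcal{R}_C$ rather than its validity as an outer bound.
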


Let $\Xi$ be the set of random variables $Z$ such that there exists two joint distributions $P_{\mb{X}YZ}$ and $Q_{\mb{X}YZ}$ satisfying
\begin{enumerate}
\item[(C12)] $\sum_{\mathcal{Z}} P_{\mb{X}YZ}=P_{\mb{X}Y},$ the distribution under $H_0$,
\item[(C13)] $\sum_{\mathcal{Z}} Q_{\mb{X}YZ}=Q_{\mb{X}Y},$ the distribution under $H_1$,
\item[(C14)] $Q_{\mb{X}YZ} = Q_{\mb{X}|Z} Q_{Y|Z}Q_Z,$ i.e., $\mb{X}$ and $Y$ are conditionally independent given $Z$ under the $Q$ distribution, and
\item[(C15)] $P_{\mb{X}Z} = Q_{\mb{X}Z},$ i.e., the joint distributions of $(\mb{X},Z)$ are the same under both distributions.
\end{enumerate}
Note that the joint distributions of $(Y,Z)$ need not be the same under the two distributions. If $P_{\mb{X}YZ}$ and $Q_{\mb{X}YZ}$ are the joint distributions of $\mb{X}$, $Y$, and $Z$ under $H_0$ and $H_1$, respectively and $Z$ is available to the detector, then the problem can be related to the $L$-encoder hypothesis testing against conditional independence. Now $Z$ is not present in the original problem, but we can augment the sample space by introducing $Z$ and supplying it to the decoder. The outer bound for this new problem is then an outer bound for the original problem. Moreover, we can then optimize over $Z$ to obtain the best possible bound.

Let $\chi$ and $\Lambda_o$ be defined as in Section 4.2 with $X_{L+1}$ restricted to be deterministic. If $\Xi$ is nonempty, then for any $(Z,X,\lambda_o)$ in $\Xi \times \chi \times \lambda_o$, define the set
\begin{align*}
\mathcal{R}_o(Z,X,\lambda_o) \triangleq \biggr \{ (\mb{R},E) :
\sum_{l \in S}R_l &\ge I({X}; \mb{U}_S|\mb{U}_{S^c},Z,T) + \sum_{l \in S} I(X_l ; U_l | X, W, Z,T) \hspace{0.05in} \textrm{for all} \hspace{0.05in}  S \subseteq \mathcal{L}, \hspace {0.05 in} \textrm{and} \\
E &\le I(Y;\mb{U}|Z,T)+D\left(P_{Y|Z}\|Q_{Y|Z}|Z\right) \biggr \}.\nonumber
\end{align*}
Finally, let
\[
\mathcal{R}_o \triangleq \left\{
\begin{array}{l l}
  \bigcap_{Z \in \Xi} \bigcap_{X \in \chi} \bigcup_{\lambda_o \in \Lambda_o} \mathcal{R}_o(Z,X,\lambda_o) & \quad \mbox{if $\Xi$ is nonempty}\\
  \mathbb{R}^{L+1}_{+} & \quad \mbox{otherwise.}\\ \end{array} \right.
\]
We have the following outer bound to the rate-exponent region of this class of problems.
\begin{Thm}
$\mathcal{R} \subseteq \overline{\mathcal{R}_o} \cap \mathcal{R}_{C}.$
\end{Thm}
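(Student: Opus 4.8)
The claim splits into $\mathcal{R}\subseteq\mathcal{R}_C$, which is exactly Lemma 5, and $\mathcal{R}\subseteq\overline{\mathcal{R}_o}$, on which I would concentrate. When $\Xi$ is empty there is nothing to prove, since then $\mathcal{R}_o=\mathbb{R}^{L+1}_{+}$ and $\overline{\mathcal{R}_o}=\mathbb{R}^{L+1}_{+}\supseteq\mathcal{R}$. So assume $\Xi\neq\emptyset$. Rather than manipulate the closed regions directly, I would prove the sharper statement $\mathcal{R}_{*}\subseteq\mathcal{R}_o$ at the level of the entropy characterization of Proposition 1; since $\mathcal{R}=\overline{\mathcal{R}_{*}}$ and closure is monotone, this gives $\mathcal{R}\subseteq\overline{\mathcal{R}_o}$. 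The advantage of working with $\mathcal{R}_{*}$ is that the bound will hold simultaneously for every $Z\in\Xi$ and every $X\in\chi$, so I can take the intersection over $Z$ (and over $X$) first and apply a single closure at the very end, which is the correct order for matching the definition $\mathcal{R}_o=\bigcap_{Z}\bigcap_{X}\bigcup_{\lambda_o}\mathcal{R}_o(Z,X,\lambda_o)$.

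Fix $Z\in\Xi$ together with the joint laws $P_{\mb{X}YZ}$ and $Q_{\mb{X}YZ}$ furnished by its definition, drawing the augmented source $(\mb{X},Y,Z)$ i.i.d.\ under each hypothesis, and augment the original problem by revealing $Z^n$ to the detector. Two observations reduce everything to the conditional-independence setting of Section 4. First, by (C12)--(C13) the $(\mb{X},Y)$-marginals are unchanged, so the encoders and any detector that ignores $Z^n$ behave exactly as before; equivalently, since marginalizing out $Z^n$ is a deterministic processing, the data-processing inequality gives $D(P_{MY^n}\|Q_{MY^n})\le D(P_{MY^nZ^n}\|Q_{MY^nZ^n})$ with $M=(f_l^{(n)}(X_l^n))_{l\in\mathcal{L}}$, so the original $\mathcal{R}_{*}$ is contained in the entropy-characterization region $\mathcal{R}_{*}^{\mathrm{aug}}(Z)$ of the augmented problem. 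Second, I would compute the augmented exponent: using (C14) to factor $Q_{MY^nZ^n}=Q_{M|Z^n}Q_{Y^n|Z^n}Q_{Z^n}$ and then the chain rule for relative entropy,
\begin{align*}
\tfrac{1}{n} D(P_{MY^nZ^n}\|Q_{MY^nZ^n})
&= \tfrac{1}{n} I(M;Y^n|Z^n) + \tfrac{1}{n} D(P_{M|Z^n}\|Q_{M|Z^n}\,|\,P_{Z^n}) \\
&\quad + \tfrac{1}{n} D(P_{Y^n|Z^n}\|Q_{Y^n|Z^n}\,|\,P_{Z^n}) + \tfrac{1}{n} D(P_{Z^n}\|Q_{Z^n}).
\end{align*}
Because $M$ is a function of $\mb{X}^n$ and (C15) gives $P_{\mb{X}Z}=Q_{\mb{X}Z}$, the laws of $(M,Z^n)$ agree under $P$ and $Q$, so the second and fourth terms vanish; the third term single-letterizes to $D(P_{Y|Z}\|Q_{Y|Z}|Z)$ by the i.i.d.\ structure. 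Hence the augmented exponent is exactly $\tfrac1n I(M;Y^n|Z^n)+D(P_{Y|Z}\|Q_{Y|Z}|Z)$.

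The leading term $\tfrac1n I(M;Y^n|Z^n)$ is precisely the conditional-independence exponent of Section 4 with $X_{L+1}$ taken deterministic, and the rate constraints of $\mathcal{R}_{*}^{\mathrm{aug}}(Z)$ are identical to those treated there. I would therefore invoke the single-letterization carried out in the proof of Theorem 2 (Appendix C): for each $X\in\chi$ it produces, via the Markov coupling $X\leftrightarrow(\mb{X},X_{L+1},Y,Z)\leftrightarrow\lambda_o$, an auxiliary $\lambda_o\in\Lambda_o$ witnessing the rate bounds $\sum_{l\in S}R_l\ge I(X;\mb{U}_S|\mb{U}_{S^c},Z,T)+\sum_{l\in S}I(X_l;U_l|X,W,Z,T)$ and $\tfrac1n I(M;Y^n|Z^n)\le I(Y;\mb{U}|Z,T)$. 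Adding the constant $D(P_{Y|Z}\|Q_{Y|Z}|Z)$ to the exponent bound shows $\mathcal{R}_{*}^{\mathrm{aug}}(Z)\subseteq\bigcap_{X\in\chi}\bigcup_{\lambda_o\in\Lambda_o}\mathcal{R}_o(Z,X,\lambda_o)$, hence $\mathcal{R}_{*}\subseteq\bigcap_{X}\bigcup_{\lambda_o}\mathcal{R}_o(Z,X,\lambda_o)$ for this $Z$. Intersecting over all $Z\in\Xi$ yields $\mathcal{R}_{*}\subseteq\mathcal{R}_o$, and taking closures gives $\mathcal{R}\subseteq\overline{\mathcal{R}_o}$; combined with Lemma 5 this is the theorem.

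The step I expect to be the crux is the exponent computation: one must verify that (C14) is exactly what is needed to factor $Q_{MY^nZ^n}$, that (C15) is exactly what kills the $M$- and $Z$-divergence terms (so that the only residual mismatch is the $Y|Z$ term, which is precisely what forces the extra $D(P_{Y|Z}\|Q_{Y|Z}|Z)$ in the definition of $\mathcal{R}_o$), and that the Theorem 2 argument transfers verbatim to the mutual-information part despite $P_{YZ}\neq Q_{YZ}$. The other delicate point is the bookkeeping of the order of operations --- proving containment at the $\mathcal{R}_{*}$ level so that the intersection over $Z$ precedes the single closure, rather than attempting to deduce $\mathcal{R}\subseteq\overline{\bigcap_Z A_Z}$ from the weaker family of facts $\mathcal{R}\subseteq\overline{A_Z}$.
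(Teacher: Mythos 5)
Your proposal is correct and takes essentially the same route as the paper's proof: augment the detector with $Z^n$ for each $Z \in \Xi$, use data processing together with (C14)--(C15) to reduce the exponent to $\tfrac{1}{n} I\bigl((f_l^{(n)}(X_l^n))_{l\in\mathcal{L}};Y^n \bigr| Z^n\bigr) + D(P_{Y|Z}\|Q_{Y|Z}|Z)$, invoke the Theorem 2 single-letterization for the conditional-independence problem, and establish $\mathcal{R}_{*} \subseteq \mathcal{R}_o$ so that the intersections over $Z$ and $X$ precede a single closure. The only cosmetic differences are that you expand the divergence via a four-term chain rule where the paper factors $Q$ step by step, and that you re-run Appendix C's argument where the paper cites Theorem 2 as a black box via membership of $\bigl(\mb{R},(E-D(P_{Y|Z}\|Q_{Y|Z}|Z))^{+}\bigr)$ in $\mathcal{R}_{*}^{CI}$.
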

\begin{proof}
In light of Proposition 1 and Lemma 5, it suffices to show that
\[
\mathcal{R}_{*} \subseteq \mathcal{R}_o.
\]
Consider $(\mb{R},E)$ in $\mathcal{R}_{*}$.Then there exists a block length $n$ and encoders $f_l^{(n)}$ such that
\begin{align}
R_l &\ge \frac{1}{n}\log \left|f_l^{(n)}\left({X}_l^n\right)\right|\hspace{0.05in} \textrm{for all \emph{l} in} \hspace{0.05in}  \mathcal{L}, \hspace {0.05 in} \textrm{and} \\
E &\le \frac{1}{n} D\biggr(P_{\left (f_l^{(n)} \left ({X}_l^n \right)\right)_{l \in \mathcal{L}}{Y}^n} \Bigr\| Q_{\left (f_l^{(n)} \left ({X}_l^n \right)\right)_{l \in \mathcal{L}}{Y}^n}\biggr).
\end{align}
Consider any $Z$ in $\Xi$. Then
\begin{align*}
D&\biggr(P_{\left (f_l^{(n)} \left ({X}_l^n \right)\right)_{l \in \mathcal{L}}{Y}^n} \Bigr\| Q_{\left (f_l^{(n)} \left ({X}_l^n \right)\right)_{l \in \mathcal{L}}{Y}^n}\biggr) \\
&\le D\biggr(P_{\left (f_l^{(n)} \left ({X}_l^n \right)\right)_{l \in \mathcal{L}}{Y}^nZ^n} \Bigr\| Q_{\left (f_l^{(n)} \left ({X}_l^n \right)\right)_{l \in \mathcal{L}}{Y}^nZ^n}\biggr) \nonumber\\
&= D\biggr(P_{\left (f_l^{(n)} \left ({X}_l^n \right)\right)_{l \in \mathcal{L}}{Y}^n\bigr|Z^n} \Bigr\| Q_{\left (f_l^{(n)} \left ({X}_l^n \right)\right)_{l \in \mathcal{L}}{Y}^n\bigr|Z^n} \Bigr| Z^n\biggr) \nonumber\\
&= D\biggr(P_{\left (f_l^{(n)} \left ({X}_l^n \right)\right)_{l \in \mathcal{L}}{Y}^n\bigr|Z^n} \Bigr\| P_{\left (f_l^{(n)} \left ({X}_l^n \right)\right)_{l \in \mathcal{L}}\bigr|Z^n}Q_{{Y}^n|Z^n} \Bigr| Z^n\biggr) \nonumber\\
&= D\biggr(P_{\left (f_l^{(n)} \left ({X}_l^n \right)\right)_{l \in \mathcal{L}}{Y}^n\bigr|Z^n} \Bigr\| P_{\left (f_l^{(n)} \left ({X}_l^n \right)\right)_{l \in \mathcal{L}}\bigr|Z^n}P_{{Y}^n|Z^n} \Bigr| Z^n\biggr) + D\Bigr(P_{{Y}^n|Z^n} \bigr\| Q_{{Y}^n|Z^n} \bigr| Z^n\Bigr) \nonumber\\
&= I\biggr(\left (f_l^{(n)} \left ({X}_l^n \right)\right)_{l \in \mathcal{L}};{Y}^n \Bigr|Z^n\biggr)+ nD\left(P_{{Y}|Z} \| Q_{{Y}|Z} | Z\right), \nonumber
\end{align*}
which together with (30) implies
\begin{align}
E \le  \frac{1}{n}I\biggr(\left (f_l^{(n)} \left ({X}_l^n \right)\right)_{l \in \mathcal{L}};{Y}^n \Bigr|Z^n\biggr) + D\left(P_{{Y}|Z} \| Q_{{Y}|Z} | Z\right).
\end{align}
It now follows from (29), (31), and Corollary 1 that $\left(\mb{R},\left(E-D\left(P_{{Y}|Z} \| Q_{{Y}|Z} | Z\right)\right)^{+}\right)$ is in $\mathcal{R}_{*}^{CI}$. Therefore from Theorem 2, it must also be in $\mathcal{R}^{CI}_o$. Hence for any $X$ in $\chi$, there exists $\lambda_o$ in $\Lambda_o$ such that $\left(\mb{R},\left(E-D\left(P_{{Y}|Z} \| Q_{{Y}|Z} | Z\right)\right)^{+}\right)$ is in $\mathcal{R}_o^{CI}(X,\lambda_o)$, i.e.,
\begin{align*}
\sum_{l \in S}R_l &\ge I({X}; \mb{U}_S|\mb{U}_{S^c},Z,T) + \sum_{l \in S} I(X_l ; U_l | X, W, Z,T)\hspace{0.05in} \textrm{for all} \hspace{0.05in}  S \subseteq \mathcal{L}, \hspace {0.05 in} \textrm{and} \\
\left(E-D\left(P_{{Y}|Z} \| Q_{{Y}|Z} | Z\right)\right)^{+} &\le I(Y;\mb{U}|Z,T).\nonumber
\end{align*}
This means that $(\mb{R},E)$ is in $\mathcal{R}_o(Z,X,\lambda_o)$, and hence $\mathcal{R}_{*} \subseteq \mathcal{R}_o$.
\end{proof}
Although the outer bound above is not computable in general, it simplifies to the following computable form for the special case in which $L=1$. Let
\begin{align*}
\tilde {\mathcal{R}} &\triangleq \bigcap_{Z \in \Xi}\Bigr \{(R_1,E): \hspace{0.05in}\textrm{there exists} \hspace{0.05in} U_1 \hspace{0.05in}\textrm{such that} \nonumber\\
&\hspace{1in}R_1 \ge I(X_1 ;  U_1 | Z), \nonumber\\
&\hspace{1in}E \le I(Y;{U_1}|Z)+D\left(P_{{Y}|Z} \| Q_{{Y}|Z} | Z\right), \\
&\hspace{1in}| {\mathcal{U}}_1| \le |\mathcal{X}_1| + 1, \hspace{0.05in} \textrm{and} \nonumber\\
&\hspace{1in} U_1 \leftrightarrow X_1 \leftrightarrow (Y, Z)\Bigr \}.\nonumber
\end{align*}
\begin{Cor}For $1$-encoder general hypothesis testing, $\overline{\mathcal{R}_o}=\tilde {\mathcal{R}}$ and hence $\mathcal{R} \subseteq \tilde {\mathcal{R}} \cap \mathcal{R}_{C}.$
\end{Cor}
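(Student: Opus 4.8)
The plan is to establish the exact identity $\mathcal{R}_o = \tilde{\mathcal{R}}$; since $\tilde{\mathcal{R}}$ is an intersection of sets, each closed by an argument analogous to Lemma 1(b), it is closed, so $\overline{\mathcal{R}_o} = \tilde{\mathcal{R}}$ and the second inclusion then follows from Theorem 8. If $\Xi = \emptyset$ both sides equal $\mathbb{R}^2_{+}$, so I assume $\Xi \neq \emptyset$. Because both $\mathcal{R}_o$ and $\tilde{\mathcal{R}}$ are intersections over $Z \in \Xi$, I would fix $Z$ and compare the slices $\mathcal{R}_o^Z \triangleq \bigcap_{X \in \chi}\bigcup_{\lambda_o \in \Lambda_o} \mathcal{R}_o(Z,X,\lambda_o)$ and $\tilde{\mathcal{R}}_Z$ (the braced set inside $\tilde{\mathcal{R}}$), and prove $\mathcal{R}_o^Z = \tilde{\mathcal{R}}_Z$ for each $Z$; intersecting over $Z$ finishes the argument.

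For $\mathcal{R}_o^Z \subseteq \tilde{\mathcal{R}}_Z$, I would specialize the intersection over $\chi$ to $X = X_1$, which lies in $\chi$ since conditioning on $X_1$ trivially makes $X_1$ and $Y$ independent. For $L=1$ the only nontrivial rate constraint is $S = \{1\}$, and with $X = X_1$ the term $I(X_1;U_1|X_1,W,Z,T)$ vanishes, leaving $R_1 \ge I(X_1;U_1|Z,T)$ along with $E \le I(Y;U_1|Z,T) + D(P_{Y|Z}\|Q_{Y|Z}|Z)$; notably $W$ plays no role. Absorbing the time-sharing variable into $\hat U_1 = (U_1,T)$, I would verify $\hat U_1 \leftrightarrow X_1 \leftrightarrow (Y,Z)$ from (C4) together with the independence of $(W,T)$ from the source in (C3), and use that same independence to rewrite $I(X_1;U_1|Z,T) = I(X_1;\hat U_1|Z)$ and $I(Y;U_1|Z,T) = I(Y;\hat U_1|Z)$. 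A final application of the support lemma, as in Lemma 1(a), enforces $|\hat{\mathcal{U}}_1| \le |\mathcal{X}_1| + 1$ while preserving both informations, placing the point in $\tilde{\mathcal{R}}_Z$.

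For the reverse inclusion $\tilde{\mathcal{R}}_Z \subseteq \mathcal{R}_o^Z$, I would take the witness $U_1$ furnished by $\tilde{\mathcal{R}}_Z$ and, for every $X \in \chi$, form $\lambda_o = (U_1,W,T)$ with $W$ and $T$ deterministic, so that (C3) and (C4) hold trivially and the exponent constraint is immediate. The crux is the rate constraint: by the chain rule it equals $I(X,X_1;U_1|Z) = I(X_1;U_1|Z) + I(X;U_1|X_1,Z)$, so it reduces to $I(X_1;U_1|Z)$ exactly when $I(X;U_1|X_1,Z) = 0$. Both this identity and the Markov-coupling requirement $X \leftrightarrow (X_1,Y,Z) \leftrightarrow U_1$ follow once I take the extended joint law to be $P_{XX_1YZ}\,P_{U_1|X_1}$, which forces $U_1 \perp (X,Y,Z) \mid X_1$; this holds simultaneously for all $X \in \chi$, so one $\lambda_o$ serves the whole intersection over $\chi$, and the construction runs unchanged for each $Z \in \Xi$.

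The step I expect to be the main obstacle is this reverse inclusion, specifically confirming that the conditional-independence structure $U_1 \leftrightarrow X_1 \leftrightarrow (Y,Z)$ of the witness is strong enough to both meet the Markov-coupling condition and annihilate the surplus term $I(X;U_1|X_1,Z)$ uniformly over every coupling $X \in \chi$. The remaining pieces — the empty-$\Xi$ case, the forward inclusion, closedness of $\tilde{\mathcal{R}}$, and the passage from the per-$Z$ identities to $\mathcal{R}_o = \tilde{\mathcal{R}}$ and hence $\overline{\mathcal{R}_o} = \tilde{\mathcal{R}}$ — are routine.
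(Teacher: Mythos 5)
Your proposal is correct, and it takes a genuinely different and more self-contained route than the paper. The paper's proof is a one-line reduction: for each fixed $Z \in \Xi$, the set $\bigcap_{X\in\chi}\bigcup_{\lambda_o\in\Lambda_o}\mathcal{R}_o(Z,X,\lambda_o)$ is, after shifting the exponent by the constant $D\left(P_{Y|Z}\|Q_{Y|Z}|Z\right)$, exactly the region $\mathcal{R}^{CI}_o$ of a $1$-encoder test against conditional independence with $X_2$ deterministic and side information $Z$, and Theorem 3 is then invoked to give $\overline{\mathcal{R}^{CI}_o}=\tilde{\mathcal{R}}^{CI}$, which is the braced set $\tilde{\mathcal{R}}_Z$; in that chain the containment $\tilde{\mathcal{R}}_Z\subseteq\overline{\mathcal{R}_o^Z}$ is obtained \emph{operationally}, via $\tilde{\mathcal{R}}^{CI}=\mathcal{R}^{CI}_i\subseteq\mathcal{R}^{CI}=\overline{\mathcal{R}^{CI}_o}$, i.e., through the Quantize-Bin-Test achievability (Theorem 1) together with the converse (Theorem 2). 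You instead prove the exact per-$Z$ identity $\mathcal{R}_o^Z=\tilde{\mathcal{R}}_Z$ entirely at the single-letter level: your reverse inclusion exhibits $\lambda_o=(U_1,W,T)$ with constant $(W,T)$, observes that the Markov-coupling condition forces the extended law $P_{XX_1YZ}\,P_{U_1|X_1}$, so that $I(X;U_1|X_1,Z)=0$ and the chain rule collapses the rate bound to $I(X_1;U_1|Z)$ uniformly over all $X\in\chi$ — a two-line, purely information-theoretic substitute for the paper's operational argument. This exactness also buys a cleaner endgame: since each $\tilde{\mathcal{R}}_Z$ is closed, $\overline{\mathcal{R}_o}=\tilde{\mathcal{R}}$ follows without the closure-versus-intersection interchange (equality of $\overline{\bigcap_Z\mathcal{R}_o^Z}$ and $\bigcap_Z\overline{\mathcal{R}_o^Z}$) that the paper's one-liner silently glosses over. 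Your forward inclusion likewise differs from the proof of Theorem 3 in the choice of slice: you take $X=X_1$, which lies in $\chi$ for \emph{any} joint law (a variable is degenerate, hence independent of everything, given itself), and which makes $W$ drop out of the rate constraint entirely, so that absorbing only $T$ into $\hat U_1=(U_1,T)$ suffices and no conditioning-reduces-entropy step is needed for the exponent; by contrast, the slice $X=X_2$ used inside the proof of Theorem 3 would, once $X_2$ is deterministic, require $X_1$ and $Y$ to be conditionally independent given $Z$ under the null for membership in $\chi$, so your choice quietly sidesteps a delicate point in the paper's chain of reasoning. What the paper's route buys is brevity and reuse of an already-proved theorem; what yours buys is directness and exactness, and your supporting checks — the $\Xi=\emptyset$ convention, the verification of $\hat U_1 \leftrightarrow X_1 \leftrightarrow (Y,Z)$ from (C3) and (C4) after marginalizing out $W$, the identities $I(X_1;U_1|Z,T)=I(X_1;\hat U_1|Z)$ and $I(Y;U_1|Z,T)=I(Y;\hat U_1|Z)$, and the support-lemma cardinality bound $|\hat{\mathcal{U}}_1|\le|\mathcal{X}_1|+1$ — are all sound.
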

\begin{proof}
It suffices to show that $\overline{\mathcal{R}_o}=\tilde {\mathcal{R}}$. This immediately follows by noticing that given any $Z$ in $\Xi$, the outer bound can be related to the rate-exponent region of the $1$-encoder hypothesis testing against conditional independence problem. The result then follows from Theorem 3.
\end{proof}
It is easy to see that the outer bound is tight for the test against independence.

\begin{Cor} (Test against independence, \cite{Ahl}) If $Q_{X_1Y} = P_{X_1} P_Y$, then
\[
\mathcal{R} = \tilde{\mathcal{R}}.
\]
\end{Cor}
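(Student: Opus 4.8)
The plan is to sandwich the claimed region $\tilde{\mathcal{R}}$ between $\mathcal{R}$ and the Ahlswede--Csisz\'{a}r achievable region for the test against independence, and then to invoke their theorem to force all the inclusions to be equalities. The converse inclusion is already available: the preceding corollary gives $\mathcal{R} \subseteq \tilde{\mathcal{R}} \cap \mathcal{R}_{C} \subseteq \tilde{\mathcal{R}}$, so no additional argument is needed for $\mathcal{R} \subseteq \tilde{\mathcal{R}}$.

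The crux of the argument is to exhibit a single convenient member of $\Xi$. I would take $Z$ to be a deterministic (constant) random variable $z_0$ and verify conditions (C12)--(C15) directly. Conditions (C12) and (C13) hold by placing all mass at $z_0$; (C15) reduces to $P_{X_1} = Q_{X_1}$, which holds because the marginals of $X_1$ coincide under both hypotheses; and (C14) reduces to the requirement $Q_{X_1 Y} = Q_{X_1} Q_Y$, which is exactly the corollary's hypothesis $Q_{X_1 Y} = P_{X_1} P_Y$ combined with $Q_{X_1} = P_{X_1}$ and $Q_Y = P_Y$. Hence $\Xi$ is nonempty and contains this constant $Z$.

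The main computation is to specialize the set appearing in the intersection defining $\tilde{\mathcal{R}}$ to this $Z$. Because $Z$ is constant, $I(X_1;U_1|Z) = I(X_1;U_1)$, $I(Y;U_1|Z) = I(Y;U_1)$, and the Markov condition $U_1 \leftrightarrow X_1 \leftrightarrow (Y,Z)$ collapses to $U_1 \leftrightarrow X_1 \leftrightarrow Y$. Moreover $Q_Y = P_Y$ makes the correction term vanish, $D(P_{Y|Z}\|Q_{Y|Z}|Z) = D(P_Y\|Q_Y) = 0$. The constraints therefore become $R_1 \ge I(X_1;U_1)$ and $E \le I(Y;U_1)$ with $U_1 \leftrightarrow X_1 \leftrightarrow Y$ and $|\mathcal{U}_1| \le |\mathcal{X}_1| + 1$, which is precisely the Ahlswede--Csisz\'{a}r region $\mathcal{R}^{AC}$ for the test against independence. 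Since $\tilde{\mathcal{R}}$ is an intersection over all $Z \in \Xi$, keeping only this one $Z$ yields $\tilde{\mathcal{R}} \subseteq \mathcal{R}^{AC}$.

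It remains only to close the loop. By Ahlswede and Csisz\'{a}r's theorem \cite{Ahl}, $\mathcal{R}^{AC}$ is the rate-exponent region for the test against independence, so in particular $\mathcal{R}^{AC} \subseteq \mathcal{R}$. Chaining the three inclusions gives $\mathcal{R} \subseteq \tilde{\mathcal{R}} \subseteq \mathcal{R}^{AC} \subseteq \mathcal{R}$, which forces $\mathcal{R} = \tilde{\mathcal{R}} = \mathcal{R}^{AC}$. The only delicate point in the whole argument is the verification that the deterministic $Z$ lies in $\Xi$ and that its divergence correction term is zero; once that is secured, the rest is a routine specialization of the general outer bound.
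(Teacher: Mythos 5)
Your proposal is correct and follows essentially the same route as the paper, whose entire proof reads ``choose $Z$ to be deterministic in the outer bound and invoke the result of Ahlswede and Csisz\'{a}r''; you have simply made explicit the routine verifications the paper leaves implicit (that the constant $Z$ satisfies (C12)--(C15), that $D(P_{Y|Z}\|Q_{Y|Z}|Z)=0$ since $Q_Y = P_Y$, and that the specialized constraints reproduce the Ahlswede--Csisz\'{a}r region, closing the sandwich via their achievability theorem). No gaps.
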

\begin{proof}
This follows by choosing $Z$ to be deterministic in the outer bound and then
invoking the result of Ahlswede and Csisz\'{a}r~\cite{Ahl}.
\end{proof}

\emph{Remark 2:} The outer bound is not always better than the centralized outer bound. In particular, if
\[
D\left(P_{{Y}|Z} \| Q_{{Y}|Z} | Z\right) \ge E_C
\]
for all $Z$ in $\Xi$, then the outer bound is no better than the centralized outer bound.

\subsection{Gaussian Case}
To illustrate this bound, let us consider a Gaussian example in which $X_1$ and $Y$ are zero-mean unit-variance jointly Gaussian sources with the correlation coefficients $\rho_0$ and $\rho_1$ under $H_0$ and $H_1$, respectively, where $\rho_0 \neq \rho_1$, $\rho_0^2 < 1,$ and $\rho_1^2 < 1$. We can assume without loss of generality that $0 \le \rho_1 < 1$ because the case $-1 < \rho_1 \le 0$ can be handled by multiplying $Y$ by $-1$. We use lowercase $p$ and $q$ to denote appropriate Gaussian densities under hypotheses $H_0$ and $H_1$, respectively. Let $\mathcal{R}^{G}$ be the rate-exponent region of this problem. We focus on the following three regions (Fig. \ref{fig:Fig7}) for which the outer bound is nontrivial.
\begin{align*}
\mathcal{D}_1 &\triangleq \{(\rho_0,\rho_1) : 0 \le \rho_1 < \rho_0 < 1\}, \\
\mathcal{D}_2 &\triangleq \{(\rho_0,\rho_1) : 0 \le \rho_1 \hspace{0.05in}\textrm{and}\hspace{0.05in} 2 \rho_1 - 1 \le \rho_0 < \rho_1\}, \\
\mathcal{D}_3 &\triangleq \left \{(\rho_0,\rho_1) : -1 < \rho_0 \le 2 \rho_1 - 1  \hspace{0.05in}\textrm{and}\hspace{0.05in} \frac{ 2(\log e) \rho_1}{1-\rho_1} \le \frac{1}{2} \log \left(\frac{1-\rho_1^2}{1-\rho_0^2}\right) - \frac{(\log e) \rho_1(\rho_0-\rho_1)}{1-\rho_1^2} \right\}.
\end{align*}
\begin{figure}[htp]
\centering
\includegraphics[width=3.6in, totalheight=0.2\textheight,viewport=10 60 730 500,clip]{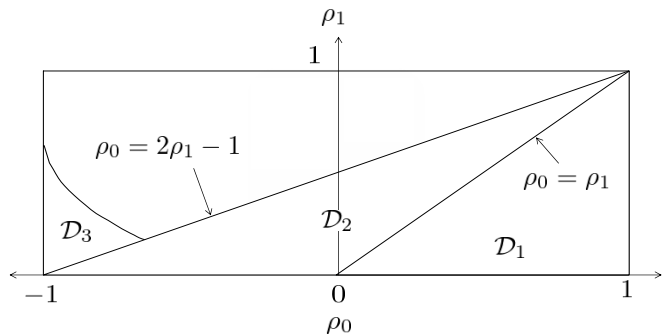}
\caption{Regions of pair $(\rho_0,\rho_1)$ for which the outer bound is nontrivial}\label{fig:Fig7}
\end{figure}
\subsubsection{Outer Bound}
Let us define
\begin{align*}
\rho &\triangleq \left\{
\begin{array}{l l}
  \frac{\rho_0 - \rho_1}{1 - \rho_1} & \quad \mbox{if $(\rho_0,\rho_1)$ is in $\mathcal{D}_1 \cup \mathcal{D}_2$}\\
  \frac{\rho_0 + \rho_1}{1 - \rho_1} & \quad \mbox{if $(\rho_0,\rho_1)$ is in $\mathcal{D}_3$}. \\ \end{array} \right.
\end{align*}
and
\begin{align*}
C &\triangleq \left\{
\begin{array}{l l}
  0 & \quad \mbox{if $(\rho_0,\rho_1)$ is in $\mathcal{D}_1 \cup \mathcal{D}_2$}\\
  \frac{2 (\log e) \rho_1}{1 - \rho_1} & \quad \mbox{if $(\rho_0,\rho_1)$ is in $\mathcal{D}_3$}. \\ \end{array} \right.
\end{align*}
The centralized type 2 error exponent is
\begin{align*}
E^G_C &\triangleq D\left(p_{{X}_1Y} \| q_{{X}_1Y}\right) \\
&= \frac{1}{2} \log \left(\frac{1-\rho_1^2}{1-\rho_0^2}\right) - \frac{(\log e) \rho_1(\rho_0-\rho_1)}{1-\rho_1^2}.
\end{align*}
Define the sets
\begin{align*}
{\mathcal{R}}^{G}_o \triangleq & \Biggr \{(R_1,E) : E \le \frac{1}{2} \log \left ( \frac{1}{1 - {\rho}^2 + {\rho}^2 2^{-2R_1}}\right) + C\Biggr\}
\end{align*}
and
\[
\mathcal{R}^G_{C} \triangleq \left\{({R_1},E) : E \le E^G_C\right\}.
\]
We have the following outer bound.
\begin{Thm}
If $(\rho_0,\rho_1)$ is in $\mathcal{D}_1 \cup \mathcal{D}_2 \cup \mathcal{D}_3$, then
$$\mathcal{R}^{G} \subseteq {\mathcal{R}}^{G}_o \cap \mathcal{R}^{G}_C.$$
\end{Thm}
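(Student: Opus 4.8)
The plan is to invoke Corollary 5 and then exhibit a single favorable auxiliary variable. By the Gaussian (continuous-alphabet) extension of Corollary 5 used already in Section 7, we have $\mathcal{R}^G \subseteq \tilde{\mathcal{R}} \cap \mathcal{R}^G_C$, where $\tilde{\mathcal{R}} = \bigcap_{Z \in \Xi}(\cdots)$. Since $\mathcal{R}^G_C$ already appears in the claimed bound, it suffices to prove $\tilde{\mathcal{R}} \subseteq \mathcal{R}^G_o$. Because $\tilde{\mathcal{R}}$ is an intersection over all $Z \in \Xi$, it is enough to produce one $Z \in \Xi$ whose associated region is contained in $\mathcal{R}^G_o$; in fact I will arrange for it to equal $\mathcal{R}^G_o$, which also shows $\Xi \neq \emptyset$.

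First I would construct $Z$ as a scalar Gaussian jointly distributed with $(X_1, Y)$. Under $H_1$ I set $X_1 = \sqrt{\rho_1}\,Z + N_{X_1}$ and $Y = \sqrt{\rho_1}\,Z + N_Y$ with $Z \sim N(0,1)$ and $N_{X_1}, N_Y, Z$ mutually independent, the noises having variance $1 - \rho_1$; this makes $X_1$ and $Y$ conditionally independent given $Z$ with the correct correlation $\rho_1$, so (C13) and (C14) hold. Under $H_0$ I keep the pair $(X_1, Z)$ identically distributed (correlation $\sqrt{\rho_1}$), which secures (C15) and (C12), and I choose the remaining correlation $\mathrm{Corr}(Y, Z)$ to be $\sqrt{\rho_1}$ when $(\rho_0,\rho_1) \in \mathcal{D}_1 \cup \mathcal{D}_2$ and $-\sqrt{\rho_1}$ when $(\rho_0,\rho_1) \in \mathcal{D}_3$. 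The sign flip for $\mathcal{D}_3$ is the crucial device: the conditional correlation of $(X_1, Y)$ given $Z$ under $H_0$ is $\tilde\rho = (\rho_0 - \sqrt{\rho_1}\,\mathrm{Corr}(Y,Z))/(1-\rho_1)$, which equals $(\rho_0 - \rho_1)/(1-\rho_1) = \rho$ in the first case and $(\rho_0 + \rho_1)/(1-\rho_1) = \rho$ in the second, matching the two definitions of $\rho$. Evaluating the Gaussian divergence $D(P_{Y|Z}\|Q_{Y|Z}|Z)$ then yields $0$ when $\mathrm{Corr}(Y,Z) = \sqrt{\rho_1}$ (the two conditional laws of $Y$ given $Z$ coincide) and $\tfrac{2(\log e)\rho_1}{1-\rho_1} = C$ when $\mathrm{Corr}(Y,Z) = -\sqrt{\rho_1}$ (equal conditional variances, opposite conditional means).

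Next I would verify that $Z \in \Xi$, i.e. that the two specified joint laws are genuine Gaussian distributions. The law under $H_1$ is valid by construction, being built from an explicit independent-noise model. The only real content is the positive semidefiniteness of the $H_0$ correlation matrix of $(X_1, Y, Z)$. Computing its determinant gives $(1-\rho_0)(1+\rho_0 - 2\rho_1)$ in the $\mathcal{D}_1 \cup \mathcal{D}_2$ case and $(1+\rho_0)(1-\rho_0-2\rho_1)$ in the $\mathcal{D}_3$ case. These are nonnegative exactly under the inequalities defining the regions: the determinant is positive throughout $\mathcal{D}_1$, nonnegative on $\mathcal{D}_2$ precisely because $\rho_0 \ge 2\rho_1 - 1$, and nonnegative on $\mathcal{D}_3$ because there the extremal condition $C \le E^G_C$ forces $\rho^2 \le 1$, equivalently $\rho_0 \le 1 - 2\rho_1$. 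This determinant/validity check is where the region boundaries genuinely enter, and I expect it to be the main obstacle: one must show that the seemingly unrelated inequality $C \le E^G_C$ is exactly what guarantees a valid coupling.

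Finally I would evaluate the region attached to this $Z$. Conditioned on $Z$, the pair $(X_1, Y)$ is bivariate Gaussian with conditional correlation $\tilde\rho = \rho$, and the constraint $U_1 \leftrightarrow X_1 \leftrightarrow (Y, Z)$ reduces the inner optimization to a conditional Gaussian test against independence: maximize $I(Y; U_1|Z)$ subject to $I(X_1; U_1|Z) \le R_1$. A Gaussian $U_1$ equal to $X_1$ plus independent noise is optimal, and the conditional entropy power inequality, exactly as in the proof of Theorem 7, yields the closed form $\tfrac{1}{2}\log\bigl(1/(1-\rho^2+\rho^2 2^{-2R_1})\bigr)$. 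Adding the constant divergence term $D(P_{Y|Z}\|Q_{Y|Z}|Z) = C$ reproduces the boundary of $\mathcal{R}^G_o$. Hence the region for this $Z$ equals $\mathcal{R}^G_o$, giving $\tilde{\mathcal{R}} \subseteq \mathcal{R}^G_o$ and, combined with $\mathcal{R}^G \subseteq \mathcal{R}^G_C$, the theorem.
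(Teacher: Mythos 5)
Your proposal is correct and follows essentially the same route as the paper's proof: you invoke the continuous extension of Corollary 5 (the paper's Corollary 7), choose exactly the same auxiliary $Z$ (with $\mathrm{Corr}(X_1,Z)=\sqrt{\rho_1}$ and $\mathrm{Corr}(Y,Z)=\sqrt{\rho_1}$ on $\mathcal{D}_1\cup\mathcal{D}_2$, $-\sqrt{\rho_1}$ on $\mathcal{D}_3$), compute the same divergence values $0$ and $C=\frac{2(\log e)\rho_1}{1-\rho_1}$, and close with the same conditional entropy-power-inequality step yielding $E\le\frac12\log\bigl(1/(1-\rho^2+\rho^2 2^{-2R_1})\bigr)$. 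The only variation is cosmetic: you certify $Z\in\Xi$ via positive semidefiniteness of the $H_0$ correlation matrix (your determinant factorizations are correct) rather than the paper's explicit independent-component representations, and the feasibility fact you flag for $\mathcal{D}_3$ (that $C\le E^G_C$ together with $\rho_0\le 2\rho_1-1$ forces $\rho_0\le 1-2\rho_1$) is true and is in fact weaker than what the paper itself asserts without proof, namely $-\rho_0-\rho_1>0$, so your treatment is at the same level of rigor as the paper's.
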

\begin{proof}
The proof is in two steps: obtain a single letter outer bound similar to the one in Corollary 5 and then use it to obtain the desired outer bound. Consider $(\rho_0,\rho_1)$ in $\mathcal{D}_1$. Let $Z$, $Z^{'}$, $W$, and $V$ be standard normal random variables independent of each other. ${X}_1$ and ${Y}$ can be expressed as
\begin{align*}
X_1 &= \sqrt{\rho_1} Z + \sqrt{\rho_0-\rho_1} Z^{'} + \sqrt{1-\rho_0} W \\
Y &= \sqrt{\rho_1} Z + \sqrt{\rho_0-\rho_1} Z^{'}+ \sqrt{1-\rho_0} V
\end{align*}
under $H_0$ and as
\begin{align*}
X_1 &= \sqrt{\rho_1} Z + \sqrt{1-\rho_1} W \\
Y &= \sqrt{\rho_1} Z + \sqrt{1-\rho_1} V
\end{align*}
under $H_1$. It is easy to verify that conditions (C12) through (C15) are satisfied if we replace the distributions by the corresponding Gaussian densities. Therefore, $Z$ is in $\Xi$. Define the set
\begin{align*}
\tilde {\mathcal{R}}^{G} \triangleq \Bigr \{(R_1,E) : \hspace{0.05in}&\textrm{there exists} \hspace{0.05in} U_1 \hspace{0.05in}\textrm{such that} \nonumber\\
&R_1 \ge I({X}_1; U_1|{Z}), \\
&E \le I({Y} ;U_1 |{Z})+D(p_{Y|Z}\|q_{Y|Z}|Z),\hspace{0.05in} \textrm{and}\\
&({Y,Z}) \leftrightarrow {X}_1\leftrightarrow U_1 \Bigr \}.
\end{align*}
\begin{Cor}
$\mathcal{R}^{G} \subseteq \overline{\tilde {\mathcal{R}}^{G}} \cap \mathcal{R}^{G}_C$.
\end{Cor}

The proof is immediate as a continuous extension of Corollary 5. From Corollary 7, it suffices to show that
\[
\tilde {\mathcal{R}}^{G} \subseteq {\mathcal{R}}^{G}_o.
\]
Note first that
\begin{align*}
D(p_{Y|Z}\|q_{Y|Z}|Z) = 0
\end{align*}
here because the joint densities of $(Y,Z)$ are the same under both hypotheses. Consider any $(R_1,E)$ in $\tilde {\mathcal{R}}^{G}$. Then there exists a random variable $U_1$ such that $({Y,Z}) \leftrightarrow {X}_1\leftrightarrow U_1$,
\begin{align}
R_1 &\ge I({X}_1; U_1|{Z}), \hspace{0.05in} \textrm{and}\\
E &\le I({Y} ;U_1 |{Z}).
\end{align}
Since $X_1, Y$, and $Z$ are jointly Gaussian under $H_0$, we can write that
\[
Y = {\rho} X_1 + \sqrt{\rho_1} (1 - {\rho}) Z + B,
\]
where $B$ is a zero-mean Gaussian random variable with the variance
\[
\sigma^2_{Y|X_1Z} = (1 - \rho_1)\left (1-{\rho}^2\right),
\]
and is independent of $X_1$ and $Z$. We now have
\begin{align}
h(Y|U_1,Z) &= h\left({\rho} X_1 + \sqrt{\rho_1 } (1 - {\rho}) Z + B|U_1,Z\right) \nonumber\\
&= h\left({\rho} X_1 + B|U_1,Z\right) \nonumber\\
&\ge \frac{1}{2} \log \left( 2^{2h({\rho}X_1|U_1,Z)} + 2^{2h(B)} \right) \\
&= \frac{1}{2} \log \left( {\rho}^2 2^{2h(X_1|U_1,Z)} + 2^{2h(B)} \right) \nonumber\\
&= \frac{1}{2} \log \left( {\rho}^2 2^{2\left(h(X_1|Z) - I(X_1;U_1|Z)\right)} + 2^{2h(B)} \right) \nonumber\\
&= \frac{1}{2} \log \left( {\rho}^2 (1 - \rho_1) 2^{-2 I(X_1;U_1|Z)} + (1 - \rho_1)\left (1-{\rho}^2 \right) \right) + \frac{1}{2} \log (2 \pi e)\nonumber\\
&\ge \frac{1}{2} \log \left( {\rho}^2 (1 - \rho_1) 2^{-2 R_1} + (1 - \rho_1)\left (1-{\rho}^2 \right)\right) + \frac{1}{2} \log (2 \pi e),
\end{align}
where
\begin{enumerate}
\item[(34)] follows from the entropy power inequality \cite{Cover} because $X_1$ and $B$ are independent given $(U_1,Z)$, and
\item[(35)] follows because function
\[
f(x) = \frac{1}{2} \log \left( p 2^{-2x} + q \right)
\]
is monotonically decreasing in $x$ for $p > 0$, and we have the rate constraint in (32).
\end{enumerate}
Now (33) and (35) imply
\begin{align*}
E &\le \frac{1}{2}\log \left (\frac{\sigma^2_{Y|Z}}{{\rho}^2 (1 - \rho_1) 2^{-2 R_1} + (1 - \rho_1)\left (1-{\rho}^2 \right)}  \right ) \nonumber\\
&=\frac{1}{2} \log \left ( \frac{1}{1 - {\rho}^2 + {\rho}^2 2^{-2R_1}}\right),
\end{align*}
which proves that $(R_1,E)$ is in ${\mathcal{R}}^{G}_o$. This completes the proof for the region $\mathcal{D}_1$.

The proof is analogous for $(\rho_0,\rho_1)$ in the region $\mathcal{D}_2$. The only difference is that under $H_0$, ${X}_1$ and ${Y}$ can now be expressed as
\begin{align*}
X_1 &= \sqrt{\rho_1} Z + \sqrt{\rho_1-\rho_0} Z^{'} + \sqrt{1-2 \rho_1+\rho_0} W \\
Y &= \sqrt{\rho_1} Z - \sqrt{\rho_1-\rho_0} Z^{'}+ \sqrt{1-2 \rho_1+\rho_0} V.
\end{align*}

Suppose now that $(\rho_0,\rho_1)$ is in $\mathcal{D}_3$. One can verify that $-\rho_0-\rho_1 > 0$ here. Hence, $X_1$ and $Y$ can be expressed as
\begin{align*}
X_1 &= \sqrt{\rho_1} Z + \sqrt{-\rho_0-\rho_1} Z^{'} + \sqrt{1+\rho_0} W \\
Y &= -\sqrt{\rho_1} Z - \sqrt{-\rho_0-\rho_1} Z^{'}+ \sqrt{1+\rho_0} V
\end{align*}
under $H_0$. Their expressions under $H_1$ are the same as before. It is evident that $Z$ is in $\Xi$. Therefore, the outer bound in Corollary 7 is valid for this case, which implies that it suffices to show that
\[
\tilde{\mathcal{R}}^{G} \subseteq {\mathcal{R}}^{G}_o.
\]
Under $H_0$, the conditional distribution of $Y$ given $Z=z$ is Gaussian with the mean $-\sqrt{\rho_1}z$ and the variance $1-\rho_1$. Similarly under $H_1$, it is Gaussian with the mean $\sqrt{\rho_1}z$ and the variance $1-\rho_1$. We therefore obtain
\begin{align*}
D(p_{Y|Z}\|q_{Y|Z}|Z) &= \int_{z\in \mathbb{R}} p_Z(z) dz\int_{y\in \mathbb{R}} p_{Y|Z}(y|z) \log \frac{p_{Y|Z}(y|z)}{q_{Y|Z}(y|z)} dy \\
&= \int_{z\in \mathbb{R}} p_Z(z) dz\int_{y\in \mathbb{R}} p_{Y|Z}(y|z) \log \left[\exp\left(\frac{(y-\sqrt{\rho_1}z)^2}{2(1-\rho_1)}-\frac{(y+\sqrt{\rho_1}z)^2}{2(1-\rho_1)}\right) \right] dy\\
&= \int_{z\in \mathbb{R}} p_Z(z) dz\int_{y\in \mathbb{R}} p_{Y|Z}(y|z) \left[-\frac{2(\log e)\sqrt{\rho_1}yz}{1-\rho_1}\right] dy\\
&= -\frac{2(\log e)\sqrt{\rho_1}}{1-\rho_1} \int_{z\in \mathbb{R}} z p_Z(z) dz\int_{y\in \mathbb{R}} yp_{Y|Z}(y|z)  dy\\
&= -\frac{2(\log e)\sqrt{\rho_1}}{1-\rho_1} \int_{z\in \mathbb{R}} z p_Z(z) dz\left(-\sqrt{\rho_1}z\right)\\
&= \frac{2(\log e)\rho_1}{1-\rho_1} \int_{z\in \mathbb{R}} z^2 p_Z dz\\
&=\frac{2 (\log e) \rho_1}{1 - \rho_1}.
\end{align*}
Again, since $X_1, Y,$ and $Z$ are jointly Gaussian under $H_0$, we can write
\[
Y = {\rho} X_1 - \sqrt{\rho_1} (1 + {\rho}) Z + B,
\]
where $B$ is defined as before. The rest of the proof is identical to the region $\mathcal{D}_1$ case.
\end{proof}

\subsubsection{Ahlswede and Csisz\'{a}r's Inner Bound}
We next compare the outer bound with Ahlswede and Csisz\'{a}r's inner bound, which is obtained by using a Gaussian test channel to quantize $X_1$. One can use better inner bounds \cite{Han1, Han2}, but they are quite complicated and for the Gaussian case considered here, Ahlswede and Csisz\'{a}r's bound itself is quite close to our outer bound in some cases. Let
\begin{align*}
{\mathcal{R}}^{G}_i \triangleq & \Biggr \{(R_1,E) : E \le  \frac{1}{2} \log \lsb \frac{1 - \rho^2_1 \lsb 1 - 2^{-2 R_1} \rsb }{1 - \rho^2_0 \lsb 1 - 2^{-2 R_1} \rsb }\rsb - \frac{(\log e)\rho_1 \lsb \rho_0 - \rho_1 \rsb  \lsb 1 - 2^{-2 R_1} \rsb}{1 - \rho^2_1 \lsb 1 - 2^{-2 R_1} \rsb } \Biggr\}.
\end{align*}
\begin{Prop} \cite{Ahl}
${\mathcal{R}}^{G}_i \subseteq {\mathcal{R}}^{G}.$
\end{Prop}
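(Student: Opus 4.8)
The plan is to establish achievability by specializing Ahlswede and Csisz\'ar's single-encoder scheme~\cite{Ahl} to a Gaussian test channel and then evaluating the resulting exponent in closed form. Recall that in the $L=1$ general problem the encoder may simply quantize $X_1$ through a test channel $P_{U_1|X_1}$ with $U_1 \leftrightarrow X_1 \leftrightarrow Y$ and transmit the quantization index at any rate $R_1 \ge I(X_1;U_1)$; since no binning is used, the detector recovers $U_1^n$ exactly and performs a likelihood-ratio test on $(U_1^n,Y^n)$. Because $X_1$ has the same marginal under both hypotheses, the encoder is hypothesis-blind and the decoded pair is governed by $P_{U_1Y}$ under $H_0$ and by $Q_{U_1Y}$, the $(U_1,Y)$-marginal of $P_{U_1|X_1}Q_{X_1Y}$, under $H_1$. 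The achievable type 2 exponent is the ``correct-decoding'' exponent $\rho_1^{*}(U_1)$, which in the general case is a constrained minimization of $D\bigl(\tilde P_{U_1X_1Y}\,\|\,P_{U_1|X_1}Q_{X_1Y}\bigr)$ over $\tilde P$ with $\tilde P_{U_1X_1}=P_{U_1X_1}$ and $\tilde P_{U_1Y}=P_{U_1Y}$; marginalizing the minimizer onto $(U_1,Y)$ and applying the data-processing inequality for relative entropy gives $\rho_1^{*}(U_1)\ge D(P_{U_1Y}\|Q_{U_1Y})$. Hence, invoking Proposition 1 and its continuous extension, it suffices to produce, for each $R_1$, a $U_1$ with $I(X_1;U_1)=R_1$ for which $D(P_{U_1Y}\|Q_{U_1Y})$ equals the expression defining $\mathcal{R}^{G}_i$.

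First I would fix the Gaussian test channel $U_1=X_1+Q$ with $Q\sim\mathcal{N}(0,\sigma_Q^2)$ independent of $(X_1,Y)$ and $\sigma_Q^2=(2^{2R_1}-1)^{-1}$, so that $I(X_1;U_1)=R_1$ and the squared correlation between $X_1$ and $U_1$ is $1-2^{-2R_1}$. Under either hypothesis $(U_1,Y)$ is then zero-mean jointly Gaussian with the common variances $\sigma_{U_1}^2=(1-2^{-2R_1})^{-1}$ and $\sigma_Y^2=1$ and with $\mathrm{Cov}(U_1,Y)=\rho_i$, i.e.\ correlation coefficient $\rho_i\sqrt{1-2^{-2R_1}}$ under $H_i$ for $i\in\{0,1\}$; only this correlation changes between $P_{U_1Y}$ and $Q_{U_1Y}$.

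Next I would compute $D(P_{U_1Y}\|Q_{U_1Y})$ using the closed form for the relative entropy between two zero-mean bivariate Gaussians,
\[
D=\tfrac{1}{2}\log\frac{\det\Sigma_Q}{\det\Sigma_P}+\tfrac{1}{2}(\log e)\bigl(\mathrm{tr}(\Sigma_Q^{-1}\Sigma_P)-2\bigr),
\]
with $\Sigma_P$ and $\Sigma_Q$ the covariance matrices of $(U_1,Y)$ under $H_0$ and $H_1$. After substituting $\sigma_{U_1}^2$, the log-determinant term yields $\tfrac{1}{2}\log\frac{1-\rho_1^2(1-2^{-2R_1})}{1-\rho_0^2(1-2^{-2R_1})}$, and the trace term collapses to the single linear correction $-\frac{(\log e)\rho_1(\rho_0-\rho_1)(1-2^{-2R_1})}{1-\rho_1^2(1-2^{-2R_1})}$; together these reproduce exactly the exponent in the definition of $\mathcal{R}^{G}_i$. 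Thus every $(R_1,E)$ with $E$ no larger than this value is achievable, giving $\mathcal{R}^{G}_i\subseteq\mathcal{R}^{G}$. As consistency checks, the expression tends to the centralized exponent $E^{G}_C$ as $R_1\to\infty$ and to $0$ as $R_1\to0$.

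The main obstacle is not the algebra but the justification of the achievability mechanism in this continuous, non-independence setting: one must argue that quantizing with a Gaussian test channel and testing on $(U_1^n,Y^n)$ indeed attains an exponent at least $D(P_{U_1Y}\|Q_{U_1Y})$. This requires a discrete-to-Gaussian limiting argument (fine quantization of $U_1$, or a direct continuous version of~\cite{Ahl}) together with the observation that the codeword structure constrains the minimizing type only on its $(U_1,X_1)$-marginal, so that the data-processing lower bound on $\rho_1^{*}(U_1)$ applies and the stated exponent is valid, and in fact conservative. The closed-form Gaussian KL simplification, while routine, must be carried out carefully to land on the exact cross term.
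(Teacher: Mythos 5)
Your proposal is correct and takes essentially the same route as the paper: the identical Gaussian test channel $U_1 = X_1 + Q$ with $\sigma_Q^2 = (2^{2R_1}-1)^{-1}$ chosen so that $I(X_1;U_1)=R_1$, achievability of the exponent $D(p_{U_1Y}\|q_{U_1Y})$ via Ahlswede and Csisz\'ar's quantize-and-test scheme (the paper cites \cite[Theorem 5]{Ahl} directly rather than lower-bounding $\rho_1^{*}(U_1)$ through data processing as you do, a cosmetic difference in justification), and the same closed-form bivariate Gaussian relative-entropy evaluation, whose log-determinant term $\frac{1}{2}\log\frac{1-\rho_1^2(1-2^{-2R_1})}{1-\rho_0^2(1-2^{-2R_1})}$ and trace correction $-\frac{(\log e)\rho_1(\rho_0-\rho_1)(1-2^{-2R_1})}{1-\rho_1^2(1-2^{-2R_1})}$ you compute correctly. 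Your explicit flagging of the discrete-to-continuous limiting issue is more careful than the paper, which passes over it by citing \cite{Ahl} without comment.
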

\begin{proof}
Fix any $(R_1,E)$ in ${\mathcal{R}}^{G}_i$. Let $U_1 = X_1 + P$, where $P$ is a zero-mean Gaussian random variable independent of $(X_1,Y)$ such that
\[
I(X_1;U_1) = R_1,
\]
which implies that the variance of $P$
\[
\sigma^2_P = \frac{1}{2^{2R_1}-1}.
\]
The covariance matrix of $({U}_1, {Y})$ is
\[
\mb{K}_0 = {\left [ \begin{array}{cc} 1+\sigma^2_P & \rho_0 \\
\rho_0 & 1 \end{array} \right]}
\]
under $H_0$ and is
\[
\mb{K}_1 = {\left [ \begin{array}{cc} 1+\sigma^2_P & \rho_1 \\
\rho_1 & 1 \end{array} \right]}.
\]
under $H_1$. It now follows from Ahlswede and Csisz\'{a}r's scheme \cite[Theorem 5]{Ahl} that the achievable exponent is
\begin{align*}
E_{AC} &= D(p_{U_1Y}\| q_{U_1Y}) \\
&= \int_{\mb{z} \in \mathbb{R}^2} p_{U_1Y} (\mb{z}) \log \frac{p_{U_1Y}(\mb{z})}{q_{U_1Y}(\mb{z})} d \mb{z}\\
&= -\frac{1}{2} \log \left((2 \pi e)^2 \det(\mb{K}_0)\right) - \int_{\mb{z} \in \mathbb{R}^2} p_{U_1Y} (\mb{z}) \log {q_{U_1Y} (\mb{z})} d \mb{z}\\
&= -\frac{1}{2} \log \left((2 \pi e)^2 \det(\mb{K}_0)\right) - \int_{\mb{z} \in \mathbb{R}^2} p_{U_1Y} (\mb{z}) \left [-\frac{(\log e)}{2} \mb{z}^T \mb{K}_1^{-1} \mb{z} -\frac{1}{2} \log \left((2 \pi)^2 \det(\mb{K}_1)\right) \right ] d \mb{z}\\
&= \frac{1}{2} \log \frac{\det(\mb{K}_1)}{\det(\mb{K}_0)} - (\log e) + \frac{(\log e)}{2} \int_{\mb{z} \in \mathbb{R}^2} p_{U_1Y} (\mb{z}) \left(\mb{z}^T \mb{K}_1^{-1} \mb{z}\right) d \mb{z}\\
&= \frac{1}{2} \log \frac{\det(\mb{K}_1)}{\det(\mb{K}_0)} - (\log e) + \frac{(\log e)(1+\sigma^2_P-\rho_0 \rho_1)}{\det(\mb{K}_1)} \\
&= \frac{1}{2} \log \frac{(1+\sigma^2_P- \rho_1^2)}{(1+\sigma^2_P- \rho_0^2)} - \log e + \frac{(\log e)(1+\sigma^2_P-\rho_0 \rho_1)}{(1+\sigma^2_P- \rho_1^2)} \\
&=\frac{1}{2} \log \lsb \frac{1 - \rho^2_1 \lsb 1 - 2^{-2 R_1} \rsb }{1 - \rho^2_0 \lsb 1 - 2^{-2 R_1} \rsb }\rsb - \frac{(\log e)\rho_1 \lsb \rho_0 - \rho_1 \rsb  \lsb 1 - 2^{-2 R_1}\rsb}{1 - \rho^2_1 \lsb 1 - 2^{-2 R_1}\rsb }.
\end{align*}
This proves that $(R_1,E)$ is in $\mathcal{R}^G$.
\end{proof}
The inner and outer bounds coincide for the test against independence.
\begin{Cor} (Test against independence, \cite{Ahl,Oohama}) If ${X}_1$ and ${Y}$ are independent under $H_1$, i.e., $\rho_1 = 0$, then
\[
\mathcal{R}^{G} = {\mathcal{R}}^{G}_o =  {\mathcal{R}}^{G}_i.
\]
\end{Cor}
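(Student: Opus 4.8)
The plan is to trap the true region $\mathcal{R}^G$ between the two bounds already in hand and then verify that these bounds collapse onto a single set once $\rho_1 = 0$. Proposition 2 supplies $\mathcal{R}^G_i \subseteq \mathcal{R}^G$, while Theorem 8 supplies $\mathcal{R}^G \subseteq \mathcal{R}^G_o \cap \mathcal{R}^G_C \subseteq \mathcal{R}^G_o$, the latter being valid whenever $(\rho_0,\rho_1)$ lies in $\mathcal{D}_1 \cup \mathcal{D}_2 \cup \mathcal{D}_3$. It therefore suffices to prove that $\mathcal{R}^G_o$ and $\mathcal{R}^G_i$ are in fact the same set when $\rho_1 = 0$; the equality of all three regions then follows at once from the chain $\mathcal{R}^G_i \subseteq \mathcal{R}^G \subseteq \mathcal{R}^G_o = \mathcal{R}^G_i$.

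First I would confirm that Theorem 8 applies by locating the pair $(\rho_0,0)$ among the three regions. Since $\rho_0 \neq \rho_1 = 0$ by hypothesis we have $\rho_0 \neq 0$, and the standing assumption $\rho_0^2 < 1$ gives $-1 < \rho_0 < 1$. If $\rho_0 > 0$ then $(\rho_0,0) \in \mathcal{D}_1$, and if $\rho_0 < 0$ then the inequalities $2\rho_1 - 1 = -1 \le \rho_0 < 0 = \rho_1$ place $(\rho_0,0) \in \mathcal{D}_2$; the region $\mathcal{D}_3$ is vacuous here, as it would demand $\rho_0 \le 2\rho_1 - 1 = -1$ together with $\rho_0 > -1$. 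In either case $(\rho_0,0) \in \mathcal{D}_1 \cup \mathcal{D}_2$, so the defining constants of the outer bound specialize to $\rho = (\rho_0 - \rho_1)/(1-\rho_1) = \rho_0$ and $C = 0$.

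Next I would substitute these constants into the two bounding inequalities. The outer bound becomes $E \le \frac{1}{2}\log\left(1/(1 - \rho_0^2 + \rho_0^2\,2^{-2R_1})\right)$. Setting $\rho_1 = 0$ in the definition of $\mathcal{R}^G_i$, the correction term $-(\log e)\rho_1(\rho_0 - \rho_1)(1 - 2^{-2R_1})/(1 - \rho_1^2(1-2^{-2R_1}))$ vanishes because it is proportional to $\rho_1$, while the leading logarithm reduces to $\frac{1}{2}\log\left[1/(1 - \rho_0^2(1 - 2^{-2R_1}))\right]$. Since $1 - \rho_0^2(1 - 2^{-2R_1}) = 1 - \rho_0^2 + \rho_0^2\,2^{-2R_1}$, the two expressions are identical, so $\mathcal{R}^G_o = \mathcal{R}^G_i$, and the chain of inclusions above closes to give $\mathcal{R}^G = \mathcal{R}^G_o = \mathcal{R}^G_i$. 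There is no real obstacle here: all of the analytic content (the entropy-power-inequality outer bound of Theorem 8 and the Gaussian test-channel inner bound of Proposition 2) has already been established, and the remaining work is the routine algebraic substitution just described. I would only note in passing that both $\mathcal{R}^G_o$ and $\mathcal{R}^G_i$ are closed, so no closure operation intrudes in the final step.
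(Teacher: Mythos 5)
Your proposal is correct and is precisely the argument the paper intends: the paper states this corollary without proof (citing Ahlswede--Csisz\'{a}r and Oohama), the content being exactly the sandwich $\mathcal{R}^G_i \subseteq \mathcal{R}^G \subseteq \mathcal{R}^G_o \cap \mathcal{R}^G_C \subseteq \mathcal{R}^G_o$ from Proposition 2 and Theorem 8, plus the observation that at $\rho_1 = 0$ one has $\rho = \rho_0$, $C = 0$, the correction term in $\mathcal{R}^G_i$ vanishes, and $1 - \rho_0^2\left(1 - 2^{-2R_1}\right) = 1 - \rho_0^2 + \rho_0^2\, 2^{-2R_1}$, so the two bounds coincide. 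Your case check that $(\rho_0, 0)$ always lies in $\mathcal{D}_1 \cup \mathcal{D}_2$ (with $\mathcal{D}_3$ vacuous) correctly verifies the applicability of Theorem 8, so nothing is missing.
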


\subsubsection{Numerical Results}
\begin{figure}[htp]
\centering
  \includegraphics[width=6.5in]{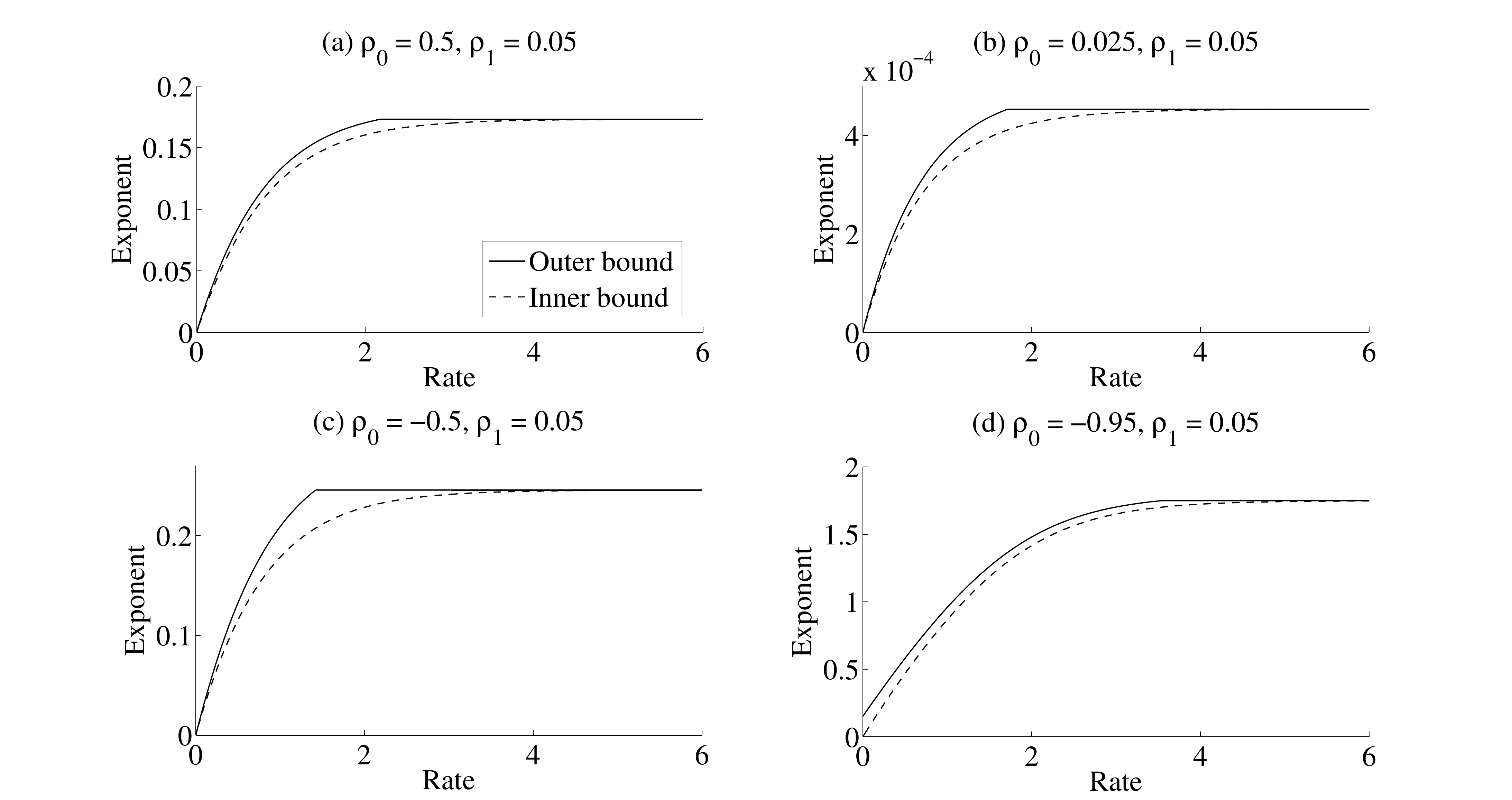}
\caption{Outer and inner bounds for four examples}\label{fig:Fig8}
\end{figure}
Fig. \ref{fig:Fig8} shows the inner and outer bounds for four examples. Fig. \ref{fig:Fig8}(a)-(c) are the examples when $(\rho_0,\rho_1)$ is in $\mathcal{D}_1 \cup \mathcal{D}_2$. Observe that the two bounds are quite close near zero and at all large rates. Fig. \ref{fig:Fig8}(d) is an example when $(\rho_0,\rho_1)$ is in $\mathcal{D}_3$. For this example, there is a gap between the inner and outer bounds at zero rate. This is due to the fact that in our outer bound, the joint densities of $(Y,Z)$ are different under the two hypotheses. Numerical results suggest that for a fixed $\rho_0$, the maximum gap between the inner and outer bounds decreases as we decrease $\rho_1$ and finally becomes zero at $\rho_1 = 0$, which is the test against independence.

\emph{Remark 3:} The outer bound can be extended to the vector Gaussian case. One can obtain a single letter outer bound similar to the one in Corollary 7. Then the outer bound can be optimized over all choices of $U_1$ by using an invertible transformation \cite{Chao, Globerson} and the scalar solution obtained above. It follows from our earlier work that the outer bound is tight for the test against independence \cite{Rahman}.

\section*{Acknowledgment}
This research was supported by the Air Force Office of Scientific Research (AFOSR) under grant FA9550-08-1-0060.\\ \\
\Large
{\textbf{Appendix A:}\hspace{0.05in} \textbf{Proof of Lemma 1}}\newline
\normalsize \\
The proof is rather well known and appears in source coding literature quite often. For instance, the similar proof can be found in \cite{Wagner3}. Let us define
\begin{align*}
\bar {\Lambda}_i \triangleq \Bigr \{ \lambda_i = (\mb{U},T) \in {\Lambda}_i : |\mathcal{U}_l| &\le |\mathcal{X}_l| + 2^{L} - 1 \hspace{0.05in}\textrm{for all} \hspace{0.05in} l \in \mathcal{L}, \hspace{0.05in}\textrm{and} \\
|\mathcal{T}| &\le 2^{L} \Bigr \},
\end{align*}
and
\begin{align*}
\bar {\mathcal{R}}^{CI}_i \triangleq \bigcup_{\lambda_i \in \bar {\Lambda}_i} \mathcal{R}_i^{CI}(\lambda_i).\nonumber
\end{align*}
We want to show that ${\mathcal{R}}_i^{CI}=\bar {\mathcal{R}}_i^{CI}$. We start with the deterministic $T$ case. Consider $\lambda_i = (\mb{U},T)$ in $\Lambda_i$, where $T$ is deterministic. For any $S \subseteq \mathcal{L}$ containing 1, we have
\[
I(\mb{X}_S;\mb{U}_S | \mb{U}_{S^c},X_{L+1},Z) = H(\mb{X}_S | \mb{U}_{S^c},X_{L+1},Z) - H(\mb{X}_S | \mb{U}_{1^c}, U_1, X_{L+1},Z),
\]
and for any nonempty $S$ not containing 1, we have
\[
I(\mb{X}_S;\mb{U}_S | \mb{U}_{S^c},X_{L+1},Z) = I(\mb{X}_S;\mb{U}_S | \mb{U}_{S^c \setminus \{1\}}, U_1 , X_{L+1},Z).
\]
Moreover,
\[
I(Y; \mb{U},X_{L+1}|Z) = H(Y|X_{L+1},Z) - H(Y | \mb{U}_{1^c}, U_1,X_{L+1},Z).
\]
It follows from the support lemma \cite[Lemma 3.4, pp. 310]{csiszar} that there exists $\bar {U}_1$ with $\bar {\mathcal{U}}_1 \subseteq {\mathcal{U}}_1$ such that
\[|\bar {\mathcal{U}}_1| \le |\mathcal{X}_1| + 2^{L} - 1,
\]
\[
\sum_{u_1 \in \bar {\mathcal{U}}_1} \textrm{Pr}(X_1=x_1| U_1 = u_1) \textrm{Pr}(\bar U_1 = u_1) = \textrm{Pr}(X_1=x_1) \hspace{0.05in} \textrm{for all} \hspace{0.05in} x_1 \hspace{0.05in}\textrm{in} \hspace{0.05in}\mathcal{X}_1\hspace{0.05in} \textrm{but one},
\]
\begin{align*}
 H(\mb{X}_S | \mb{U}_{1^c}, U_1, X_{L+1},Z) =  H(\mb{X}_S | \mb{U}_{1^c}, \bar U_1, X_{L+1},Z) \hspace{0.05in} \textrm{for all} \hspace{0.05in} S \hspace{0.05in}\textrm{containing 1},
\end{align*}
\begin{align*}
I(\mb{X}_S;\mb{U}_S | \mb{U}_{S^c \setminus \{1\}}, U_1 , X_{L+1},Z) = I(\mb{X}_S;\mb{U}_S | \mb{U}_{S^c \setminus \{1\}}, \bar U_1 , X_{L+1},Z) \hspace{0.05in} \textrm{for all nonempty} \hspace{0.05in} S \hspace{0.05in}\textrm{not containing 1},
\end{align*}
and
\[
H(Y | \mb{U}_{1^c}, U_1,X_{L+1},Z) = H(Y | \mb{U}_{1^c}, \bar U_1,X_{L+1},Z).
\]
Since
\[
U_1 \leftrightarrow X_1 \leftrightarrow (\mb{U}_{1^c},\mb{X}_{1^c},X_{L+1},Y,Z),
\]
if we replace $U_1$ by $\bar U_1$ then the resulting $\lambda_i$ is in $\Lambda_i$ and $\mathcal{R}_i^{CI}(\lambda_i)$ remains unchanged. By repeating this procedure for $U_2, \dots, U_L$, we conclude that there exists $\bar {\lambda}_i = (\bar {\mb{U}}, \bar T)$ in $\bar {\Lambda}_i$ such that $\bar T$ is deterministic and $\mathcal{R}_i^{CI}(\lambda_i) = {\mathcal{R}}_i^{CI}(\bar {\lambda}_i)$.

We now turn to general $T$. Consider $\lambda_i = (\mb{U},T)$ in $\Lambda_i$. Let $(\mb{U},t)$ denote the joint distribution of $(\mb{U},T)$ conditioned on $\{T=t\}$. It follows from the deterministic $T$ case that for each $t$ in $\mathcal{T}$, there exists $\bar {\mb{U}}$ such that $(\bar {\mb{U}}, t)$ is in $\bar {\Lambda}_i$ and $\mathcal{R}_i^{CI}(\mb{U},t) = {\mathcal{R}}_i^{CI}(\bar {\mb{U}}, t)$. Hence, on replacing $\mb{U}$ by $\bar {\mb{U}}$ for each $t$ in $\mathcal{T}$, we obtain $(\bar {\mb{U}}, T)$ in $\Lambda_i$ such that $|\bar {\mathcal{U}}_l| \le |\mathcal{X}_l| + 2^{L} - 1$ for all $l$ in $\mathcal{L}$
and $\mathcal{R}_i^{CI}(\mb{U},T) = {\mathcal{R}}_i^{CI}(\bar {\mb{U}}, T)$. Now ${\mathcal{R}}_i^{CI}(\bar {\mb{U}}, T)$ is the set of vectors $(\mb{R},E)$ such that
\begin{align*}
\sum_{l \in S} R_l &\ge I(\mb{X}_S;\mb{U}_S|\mb{U}_{S^c},X_{L+1},Z,T) \hspace{0.05in}\textrm{for all} \hspace{0.05in} S, \hspace{0.05in} \textrm{and}\\
E &\le I(Y;\mb{U},X_{L+1}|Z,T).
\end{align*}
It again follows from the support lemma that there exists $\bar {T}$ with $\bar {\mathcal{T}} \subseteq \mathcal{T}$ such that
\begin{align*}
|\bar {\mathcal{T}}| &\le 2^L, \\
I(\mb{X}_S;\mb{U}_S|\mb{U}_{S^c},X_{L+1},Z,T) &= I(\mb{X}_S;\mb{U}_S|\mb{U}_{S^c},X_{L+1},Z,\bar T), \hspace{0.05in} \textrm{and}\\
I(Y;\mb{U},X_{L+1}|Z,T) &= I(Y;\mb{U},X_{L+1}|Z,\bar T).
\end{align*}
We therefore have that $\bar {\lambda}_i = (\bar {\mb{U}}, \bar T)$ is in $\bar {\Lambda}_i$ and $\mathcal{R}_i^{CI}(\lambda_i) = \bar {\mathcal{R}}_i^{CI}(\bar {\lambda}_i).$ This proves $\mathcal{R}_i^{CI} \subseteq \bar {\mathcal{R}}_i^{CI}$, and hence $\mathcal{R}_i^{CI} = \bar {\mathcal{R}}_i^{CI}$ because the reverse containment trivially holds.

For part (b), it suffices to show that $\bar {\mathcal{R}}_i^{CI}$ is closed. Consider any sequence $\left(\mb{R}^{(n)},E^{(n)}\right)$ in $\bar {\mathcal{R}}_i^{CI}$ that converges to $(\mb{R},E)$. Since conditional mutual information is a continuous function, $\bar {\Lambda}_i$ is a compact set. Hence, there exists a sequence $\lambda_i^{(n)} = \left(\mb{U}^{(n)},T^{(n)}\right)$ in $\bar {\Lambda}_i$ that converges to $\lambda_i = \left(\mb{U},T\right)$ in $\bar {\Lambda}_i$ such that $\left(\mb{R}^{(n)},E^{(n)}\right)$ is in $\bar {\mathcal{R}}_i^{CI}\left (\lambda_i^{(n)}\right)$, i.e.,
\begin{align*}
\sum_{l \in S} R_l^{(n)} &\ge I\left(\mb{X}_S;\mb{U}^{(n)}_S \bigr|\mb{U}^{(n)}_{S^c},X_{L+1},Z,T^{(n)}\right) \hspace{0.05in}\textrm{for all} \hspace{0.05in} S, \hspace{0.05in} \textrm{and}\\
E^{(n)} &\le I\left(Y;\mb{U}^{(n)},X_{L+1}\bigr|Z,T^{(n)}\right).
\end{align*}
Again, by the continuity of conditional mutual information, this implies that
\begin{align*}
\sum_{l \in S} R_l &\ge I(\mb{X}_S;\mb{U}_S|\mb{U}_{S^c},X_{L+1},Z,T) \hspace{0.05in}\textrm{for all} \hspace{0.05in} S, \hspace{0.05in} \textrm{and}\\
E &\le I(Y;\mb{U},X_{L+1}|Z,T).
\end{align*}
We thus have that $(\mb{R},E)$ is in $\bar {\mathcal{R}}_i^{CI}$.\\ \\
\Large
{\textbf{Appendix B:}\hspace{0.05in} \textbf{Proof of Theorem 1}}\newline
\normalsize \\
We prove the deterministic $T$ case. The general case follows by time sharing.
Consider any $\lambda_i = (\mb{U},T)$ in $\Lambda_i$ with $T$ being
deterministic. Consider $(\mb{R},E)$ such that
\begin{align}
\sum_{l \in S} R_l &\ge I(\mb{X}_S;\mb{U}_S|\mb{U}_{S^c},X_{L+1},Z) \hspace{0.05in} \textrm{for all} \hspace{0.05in} S \subseteq \mathcal{L}, \hspace{0.05in} \textrm{and} \\
E &\le I(Y;\mb{U},X_{L+1}|Z).
\end{align}
It suffices to show that $(\mb{R},E)$ belongs to the rate-exponent region $\mathcal{R}^{CI}$.

Consider a sufficiently large block length $n$, $\epsilon > 0$, and $\mu > 0$. For each $l$ in $\mathcal{L}$, let $\bar R_l = I(X_l;U_l) + \alpha$, where $\alpha > 0$. To construct the codebook of encoder $l$, we first generate $2^{n \bar R_l}$ independent codewords $U_l^n$, each according to $\prod_{i=1}^n P_{U_l}(u_{li})$, and then distribute them uniformly into $2^{n (R_l+\epsilon)}$ bins. The codebooks and the bin assignments are revealed to the encoders and the detector. The encoding is done in two steps: quantization and binning. The encoder $l$ first quantizes $X_l^n$ by selecting a codeword $U_l^n$ that is jointly $\mu$-typical with it. We adopt the typicality notion of Han \cite{Han1}. If there is more than one such codeword, then the encoder $l$ selects one of them arbitrarily. If there is no such codeword, it selects an arbitrary codeword.
The encoder then sends to the detector the index of the bin to which the codeword $U_l^n$ belongs. In order to be consistent with our earlier notation, we denote this encoding function by $f_l^{(n)}$. It is clear that the rate constraints are satisfied, i.e.,
\begin{align}
\frac{1}{n} \log \left|f_l^{(n)}(X_l^n)\right| = R_l +\epsilon \hspace{0.05in} \textrm{for all} \hspace{0.05in}  l  \hspace{0.05in} \textrm{in}  \hspace{0.05in} \mathcal{L}.
\end{align}

The next lemma is a standard achievability result in distributed source coding.
\begin{Lem}
For any $\delta > 0, \epsilon >0, \mu > 0,$ and all sufficiently large $n$, there exists a function
$$
\varphi^{(n)} : \prod_{l=1}^L \left \{1,\dots, 2^{n(R_l+\epsilon)} \right \} \times \mathcal{X}^n_{L+1} \times \mathcal{Z}^n \mapsto \prod_{l=1}^L \mathcal{U}_l^n
$$
such that (a)
if
\[
V \triangleq \left \{\mb{U}^n, X_{L+1}^n,Y^n, Z^n \hspace{0.05in} \textrm{are jointly $\mu$-typical under $H_0$} \right \},
\]
then
$P(V) \ge 1 - \delta$; and (b)
\begin{align*}
p_e \triangleq P\left (\varphi^{(n)} \left ((f_l^{(n)}(X_l^n))_{l \in \mathcal{L}},
       X_{L+1}^n, Z^n \right) \neq \mb{U}^n \right) \le \delta.
\end{align*}
\end{Lem}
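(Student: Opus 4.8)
The plan is to recognize the statement as the generalized Berger--Tung achievability lemma, specialized to the present conditional-independence structure, and to prove it by first controlling the quantization (covering) step and then building a joint-typicality decoder whose two failure modes yield parts (a) and (b) respectively. Throughout, $T$ is deterministic (the general case is recovered by time-sharing, exactly as in the surrounding argument), so (C2) reads $U_l \leftrightarrow X_l \leftrightarrow (\mb{U}_{l^c},\mb{X}_{l^c},X_{L+1},Y,Z)$ for each $l$. I would begin by noting that, since $\bar R_l = I(X_l;U_l)+\alpha$ with $\alpha>0$, the covering lemma ensures that for every $l$ there is, with probability tending to one, at least one codeword in encoder $l$'s book that is jointly $\mu$-typical with $X_l^n$; on the vanishing complementary event an arbitrary index is sent, which perturbs every probability below by an asymptotically negligible amount.

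For part (a) I would invoke the Markov lemma under Han's typicality notion \cite{Han1}. By the AEP the tuple $(\mb{X}^n,X_{L+1}^n,Y^n,Z^n)$ is jointly typical under $H_0$ with high probability, and each quantizer output $U_l^n$ is jointly typical with its own $X_l^n$. The decisive structural fact, supplied by (C2), is that given the sources the codewords $U_1^n,\dots,U_L^n$ are conditionally independent, each behaving as though generated from $P_{U_l|X_l}$. A routine multi-terminal application of the Markov lemma then propagates typicality from the pairs $(U_l^n,X_l^n)$ to the full collection, giving that $(\mb{U}^n,X_{L+1}^n,Y^n,Z^n)$ is jointly $\mu$-typical under $H_0$ with probability at least $1-\delta$, i.e.\ $P(V)\ge 1-\delta$.

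For part (b) I would take $\varphi^{(n)}$ to be the joint-typicality decoder: given the received bin indices together with the side information $(X_{L+1}^n,Z^n)$, it searches for a tuple $(\hat U_1^n,\dots,\hat U_L^n)$ with $\hat U_l^n$ lying in the $l$th received bin and with $(\hat{\mb{U}}^n,X_{L+1}^n,Z^n)$ jointly $\mu$-typical under $H_0$, outputting this tuple if it is the unique such candidate and declaring an error otherwise. The error event decomposes into the event that the true $\mb{U}^n$ fails to be jointly typical with $(X_{L+1}^n,Z^n)$ --- already controlled by part (a) --- and, for each nonempty $S\subseteq\mathcal{L}$, the event that some rival tuple differing from the truth exactly on the coordinates in $S$ is also jointly typical. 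A union bound over $S$ and over the codewords sharing the received bins shows the $S$-contribution vanishes as long as $\sum_{l\in S}(\bar R_l-R_l-\epsilon)$ stays strictly below $\sum_{l\in S}H(U_l)-H(\mb{U}_S\mid \mb{U}_{S^c},X_{L+1},Z)$. Substituting $\bar R_l=I(X_l;U_l)+\alpha$ and using $\sum_{l\in S}H(U_l\mid X_l)=H(\mb{U}_S\mid \mb{X}_S,\mb{U}_{S^c},X_{L+1},Z)$, which is immediate from (C2), this requirement collapses to $\sum_{l\in S}R_l > I(\mb{X}_S;\mb{U}_S\mid \mb{U}_{S^c},X_{L+1},Z)+|S|(\alpha-\epsilon)$; choosing $\alpha<\epsilon$, it is implied by the hypothesis (36), whence $p_e\le\delta$.

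The main obstacle is part (a): forcing the joint typicality of the entire codeword tuple with the side information when each $U_l^n$ is merely selected to be typical with its own source rather than drawn i.i.d.\ conditionally. One must apply the Markov lemma simultaneously across all $L$ encoders, and the argument works only because (C2) makes the codewords conditionally independent given the sources. The remaining care, in part (b), is purely bookkeeping: tracking the competing-tuple exponent for every $S$ and verifying, via the entropy identity above, that it reduces exactly to the single-letter constraints (36). Once the Markov structure is in hand, both reduce to standard distributed-source-coding estimates.
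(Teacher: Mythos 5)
Your proposal is correct and follows essentially the same route as the paper, which does not spell out a proof of this lemma but instead invokes exactly the standard generalized Berger--Tung random-coding argument by citing \cite{Berger1,Tung,Gastper}: covering at rates $\bar R_l = I(X_l;U_l)+\alpha$ plus the (multi-encoder) Markov lemma under Han's typicality for part (a), and binning with a joint-typicality decoder, a union bound over nonempty $S \subseteq \mathcal{L}$, and the identity $\sum_{l\in S} H(U_l|X_l) = H(\mb{U}_S|\mb{X}_S,\mb{U}_{S^c},X_{L+1},Z)$ from (C2) for part (b). Your rate bookkeeping correctly reduces the decoding constraints to (36) upon choosing $\alpha < \epsilon$, and you rightly flag the Markov-lemma step (selected, rather than conditionally i.i.d., codewords across $L$ independent codebooks) as the only delicate point.
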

One can prove this lemma using standard random coding arguments. See \cite{Berger1,Tung,Gastper} for proofs of similar results. Applying this lemma to the hypothesis testing problem at hand, we have
\begin{align}
\frac{1}{n} I &\left(\left (f_l^{(n)} \left ({X}_l^n \right)\right)_{l \in \mathcal{L}},X_{L+1}^n;{Y}^n \Bigr|Z^n\right) \nonumber\\
&= \frac{1}{n}H\left (Y^n|Z^n \right) - \frac{1}{n}H \left({Y}^n \Bigr|\left (f_l^{(n)} \left ({X}_l^n \right)\right)_{l \in \mathcal{L}},X_{L+1}^n,Z^n\right)\nonumber\\
&= H\left (Y|Z \right) + \frac{1}{n} H\left (\left (f_l^{(n)} \left ({X}_l^n \right)\right)_{l \in \mathcal{L}} \Bigr|X_{L+1}^n,Z^n \right) - \frac{1}{n}H \left(\left (f_l^{(n)} \left ({X}_l^n \right)\right)_{l \in \mathcal{L}},{Y}^n \Bigr |X_{L+1}^n,Z^n\right).
\end{align}
We can lower bound the second term in (39) as
\begin{align}
\frac{1}{n} H \left (\left (f_l^{(n)} \left ({X}_l^n \right)\right)_{l \in \mathcal{L}} \Bigr|X_{L+1}^n,Z^n \right)\nonumber &=\frac{1}{n} I\left (\left (f_l^{(n)} \left ({X}_l^n \right)\right)_{l \in \mathcal{L}};\mb{U}^n \Bigr|X_{L+1}^n,Z^n \right) \nonumber\\
&=\frac{1}{n} H\left (\mb{U}^n|X_{L+1}^n,Z^n \right) - \frac{1}{n} H \left(\mb{U}^n \Bigr| \left (f_l^{(n)} \left ({X}_l^n \right)\right)_{l \in \mathcal{L}}, X_{L+1}^n,Z^n\right ) \nonumber\\
&\ge \frac{1}{n} H\left (\mb{U}^n|X_{L+1}^n,Z^n \right) - \frac{1}{n} H \left(\mb{U}^n \Bigr|\varphi^{(n)} \left ( \left (f_l^{(n)} \left ({X}_l^n \right)\right)_{l \in \mathcal{L}}, X_{L+1}^n,Z^n \right )\right ) \\
&\ge \frac{1}{n} H\left (\mb{U}^n|X_{L+1}^n,Z^n \right) - \frac{1}{n} H_b(p_e) - p_e \sum_{l=1}^L\log |\mathcal{U}_l| \\
&\ge \frac{1}{n} H\left (\mb{U}^n|X_{L+1}^n,Z^n \right) - \frac{1}{n}  - \delta \sum_{l=1}^L\log |\mathcal{U}_l|,
\end{align}
where
\begin{enumerate}
\item[(40)] follows from data processing inequality \cite[Theorem 2.8.1]{Cover},
\item[(41)] follows from Fano's inequality \cite[Theorem 2.10.1]{Cover}, and
\item[(42)] follows Lemma 6(b) and the fact that $H_b(p_e) \le 1$.
\end{enumerate}
The third term in (39) can be upper bounded as
\begin{align}
\frac{1}{n}H \left(\left (f_l^{(n)} \left ({X}_l^n \right)\right)_{l \in \mathcal{L}},{Y}^n \Bigr|X_{L+1}^n,Z^n\right) &\le \frac{1}{n}H \left(\mb{U}^n, \left (f_l^{(n)} \left ({X}_l^n \right)\right)_{l \in \mathcal{L}},{Y}^n \Bigr|X_{L+1}^n,Z^n\right) \nonumber\\
&= \frac{1}{n}H \left(\mb{U}^n,{Y}^n|X_{L+1}^n,Z^n\right).
\end{align}
On applying bounds (42) and (43) into (39), we obtain
\begin{align}
\frac{1}{n} I &\left(\left (f_l^{(n)} \left ({X}_l^n \right)\right)_{l \in \mathcal{L}},X_{L+1}^n;{Y}^n \Bigr|Z^n\right) \nonumber\\
&\ge H\left (Y|Z \right) + \frac{1}{n} H\left (\mb{U}^n|X_{L+1}^n,Z^n \right) - \frac{1}{n}H \left(\mb{U}^n,{Y}^n|X_{L+1}^n,Z^n\right) - \frac{1}{n}  - \delta \sum_{l=1}^L\log |\mathcal{U}_l|\nonumber\\
&= H\left (Y|Z \right) - \frac{1}{n}H \left({Y}^n|\mb{U}^n,X_{L+1}^n,Z^n\right) - \frac{1}{n}  - \delta \sum_{l=1}^L\log |\mathcal{U}_l| \nonumber\\
&= H\left (Y|Z \right) - \frac{1}{n}H \left({Y}^n,1_V|\mb{U}^n,X_{L+1}^n,Z^n\right) - \frac{1}{n}  - \delta \sum_{l=1}^L\log |\mathcal{U}_l| \nonumber\\
&= H\left (Y|Z \right) - \frac{1}{n}H \left(1_V|\mb{U}^n,X_{L+1}^n,Z^n\right) - \frac{1}{n}H \left(Y^n|\mb{U}^n,X_{L+1}^n,Z^n, 1_V\right) - \frac{1}{n}  - \delta \sum_{l=1}^L\log |\mathcal{U}_l| \nonumber\\
&\ge H\left (Y|Z \right) - \frac{1}{n} - \frac{1}{n}H \left(Y^n|\mb{U}^n,X_{L+1}^n,Z^n, 1_V=1\right) P(V) \nonumber\\
&\hspace{0.5in}- \frac{1}{n}H \left(Y^n|\mb{U}^n,X_{L+1}^n,Z^n, 1_V=0\right) P(V^c)  - \frac{1}{n}  - \delta \sum_{l=1}^L\log |\mathcal{U}_l|\\
&\ge H\left (Y|Z \right)  - \frac{1}{n}H \left(Y^n|\mb{U}^n,X_{L+1}^n,Z^n, 1_V=1\right) - \frac{2}{n} -\delta \log |\mathcal{Y}|-\delta \sum_{l=1}^L\log |\mathcal{U}_l|,
\end{align}
where
\begin{enumerate}
\item[(44)] follows from the fact that $H\left(1_V|\mb{U}^n,X_{L+1}^n,Z^n\right) \le 1$, and
\item[(45)] follows from Lemma 6(a) and the facts that
\begin{align*}
\frac{1}{n}H \left(Y^n|\mb{U}^n,X_{L+1}^n,Z^n, 1_V=0\right) &\le \log |\mathcal{Y}|, \\
P(V) &\le 1.
\end{align*}
\end{enumerate}
We now proceed to upper bound the second term in (45). Let ${T}^n_{\mu}(\mb{U}X_{L+1}YZ)$ be the set of all jointly $\mu$-typical $(\mb{u}^n,x_{L+1}^n,y^n,z^n)$ sequences. We need the following lemma.
\begin{Lem}
\cite[Lemma 1(d)]{Han1} If $n$ is sufficiently large, then for any $(\mb{u}^n,x_{L+1}^n,y^n,z^n)$ in ${T}^n_{\mu}(\mb{U}X_{L+1}YZ)$, we have
\begin{align*}
P_{Y^n|\mb{U}^n,X_{L+1}^n,Z^n}(y^n|\mb{u}^n,x_{L+1}^n,z^n) \ge \exp \left[ -n \left (H \left(Y|\mb{U},X_{L+1},Z\right) + 2\mu \right)\right] .
\end{align*}
\end{Lem}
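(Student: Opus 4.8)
The plan is to reduce this to the standard conditional-typicality estimate for a memoryless conditional law. The single-letter joint $P_{\mb{U}X_{L+1}YZ}$ induced by (C1)--(C2) extends i.i.d.\ over the $n$ symbols, so the relevant conditional law factorizes,
\begin{align*}
P_{Y^n|\mb{U}^n,X_{L+1}^n,Z^n}(y^n|\mb{u}^n,x_{L+1}^n,z^n) = \prod_{i=1}^n P_{Y|\mb{U},X_{L+1},Z}\bigl(y^n(i) \bigm| \mb{u}^n(i), x_{L+1}^n(i), z^n(i)\bigr),
\end{align*}
and taking $-\tfrac{1}{n}\log$ converts the right-hand side into an empirical average of $-\log P_{Y|\mb{U},X_{L+1},Z}$. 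The Markov chain $Y \leftrightarrow (\mb{X},X_{L+1},Z) \leftrightarrow \mb{U}$ from (C2) is what legitimizes writing this single-letter conditional in the first place.

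First I would group the resulting sum by the joint type of $(\mb{u}^n,x_{L+1}^n,y^n,z^n)$. Writing $N(\mb{u},x_{L+1},y,z)$ for the number of coordinates carrying the value $(\mb{u},x_{L+1},y,z)$, this gives
\begin{align*}
-\frac{1}{n} \log P_{Y^n|\mb{U}^n,X_{L+1}^n,Z^n}(y^n|\mb{u}^n,x_{L+1}^n,z^n) = -\sum_{\mb{u},x_{L+1},y,z} \frac{N(\mb{u},x_{L+1},y,z)}{n}\,\log P_{Y|\mb{U},X_{L+1},Z}(y|\mb{u},x_{L+1},z).
\end{align*}
Joint $\mu$-typicality under $H_0$ forces the empirical frequencies $N(\cdot)/n$ to lie within the prescribed tolerance of the true probabilities $P_{\mb{U}X_{L+1}YZ}(\cdot)$; in particular no symbol of zero probability occurs, so each surviving $\log P_{Y|\mb{U},X_{L+1},Z}$ factor is finite. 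Replacing the empirical frequencies by the true probabilities turns the sum into $H(Y|\mb{U},X_{L+1},Z)$ exactly, while the error committed in this replacement is absorbed into a slack of at most $2\mu$ for $n$ large. Thus $-\tfrac{1}{n}\log P_{Y^n|\cdots}\le H(Y|\mb{U},X_{L+1},Z)+2\mu$, which is the claimed inequality after exponentiating.

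I expect the main obstacle to be the final accounting, namely bounding $\bigl|\sum(N/n-P)\log P_{Y|\mb{U},X_{L+1},Z}\bigr|$ by $2\mu$. This requires invoking Han's specific typicality definition \cite{Han1} and checking that the per-symbol frequency deviations, weighted by the (uniformly bounded, since zero-probability symbols are excluded) log-likelihoods, accumulate to no more than $2\mu$ rather than some larger multiple; the value $2\mu$ is precisely what Han's calibration of the typical set is designed to deliver. Because the statement is quoted verbatim from \cite[Lemma 1(d)]{Han1}, the most economical option in the text is simply to cite it, the displays above being the substance of that reference.
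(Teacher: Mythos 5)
Your reduction fails at the very first display, and the failure is not a technicality. In the setting where Lemma 7 is applied (the bound (46) in Appendix B), $\mb{U}^n$ denotes the $L$-tuple of codewords selected by the Quantize-Bin-Test encoders: each codebook consists of $2^{n\bar R_l}$ sequences drawn i.i.d.\ according to $\prod_{i=1}^n P_{U_l}(u_{li})$, \emph{independently of the source}, and the correlation between $\mb{U}^n$ and $(\mb{X}^n,X_{L+1}^n,Y^n,Z^n)$ is created solely by the typicality-based selection rule. Under this code-induced joint law, $(\mb{U}^n,X_{L+1}^n,Y^n,Z^n)$ is emphatically not an i.i.d.\ extension of the single-letter $P_{\mb{U}X_{L+1}YZ}$ (for one thing, $\mb{U}^n$ is confined to a codebook of exponentially limited size), and the conditional law $P_{Y^n|\mb{U}^n,X_{L+1}^n,Z^n}$ does \emph{not} factorize as $\prod_{i=1}^n P_{Y|\mb{U},X_{L+1},Z}$. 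Condition (C2) is a single-letter Markov constraint used to \emph{define} the auxiliary distribution; it says nothing about the $n$-letter law induced by the code. Indeed, if your factorization held, essentially all of Appendix B --- Lemma 6, the Fano step (40)--(42), and Lemma 7 itself --- would be vacuous. What your type-counting argument proves is the classical conditional-typicality lower bound for a memoryless law, which is a different and much weaker statement than the one the paper needs.

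The actual content of the lemma is a statement about the code-induced conditional probability. The correct $n$-letter Markov structure is $Y^n \leftrightarrow (\mb{X}^n,X_{L+1}^n,Z^n) \leftrightarrow \mb{U}^n$, since the codewords are functions of $\mb{X}^n$ and of codebook randomness independent of everything else; hence
\begin{align*}
P_{Y^n|\mb{U}^n,X_{L+1}^n,Z^n}(y^n|\mb{u}^n,x_{L+1}^n,z^n) = \sum_{\mb{x}^n} P_{\mb{X}^n|\mb{U}^n,X_{L+1}^n,Z^n}(\mb{x}^n|\mb{u}^n,x_{L+1}^n,z^n) \prod_{i=1}^n P_{Y|\mb{X},X_{L+1},Z}\left(y^n(i)\middle|\mb{x}^n(i),x_{L+1}^n(i),z^n(i)\right),
\end{align*}
and the work consists in showing that, conditioned on the selected codeword and the side information, sufficient conditional mass of $\mb{X}^n$ is jointly typical with the \emph{fixed} $y^n$ --- a Markov-lemma-style counting argument over the preimage sets of the encoders, which is where the slack $2\mu$ in Han's calibration genuinely arises. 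So the obstacle you flag as the main one (bounding the empirical-versus-true deviation by $2\mu$) is in fact the easy part, while the missing idea is the treatment of the non-product, code-induced conditional law. Citing \cite[Lemma 1(d)]{Han1}, as the paper does, is a legitimate proof; but the derivation you display cannot serve as a substitute for that citation, because it silently replaces the distribution the lemma is about with a memoryless one.
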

Using this lemma, we obtain
\begin{align}
\frac{1}{n}H \left(Y^n|\mb{U}^n,X_{L+1}^n,Z^n, 1_V=1\right) &= - \frac{1}{n} \sum_{{T}^n_{\mu}(\mb{U}X_{L+1}YZ)}  P_{\mb{U}^n,X_{L+1}^n,Y^n,Z^n| 1_V=1} \log P_{Y^n|\mb{U}^n,X_{L+1}^n,Z^n, 1_V=1} \nonumber\\
&=- \frac{1}{n} \sum_{{T}^n_{\mu}(\mb{U}X_{L+1}YZ)}  P_{\mb{U}^n,X_{L+1}^n,Y^n,Z^n| 1_V=1} \log \frac{P_{Y^n|\mb{U}^n,X_{L+1}^n,Z^n}}{P_{1_V=1|\mb{U}^n,X_{L+1}^n,Z^n}} \nonumber\\
&\le \sum_{{T}^n_{\mu}(\mb{U}X_{L+1}YZ)} P_{\mb{U}^n,X_{L+1}^n,Y^n,Z^n| 1_V=1} \left (H \left(Y|\mb{U},X_{L+1},Z\right) + 2\mu \right) \nonumber\\
&= H \left(Y|\mb{U},X_{L+1},Z\right) + 2\mu.
\end{align}
Substituting (46) into (45) gives
\begin{align}
\frac{1}{n} I \left(\left (f_l^{(n)} \left ({X}_l^n \right) \right)_{l \in \mathcal{L}},X_{L+1}^n;{Y}^n \Bigr|Z^n\right)
&\ge I \left(Y;\mb{U},X_{L+1}|Z\right) - \frac{2}{n}  - 2 \mu - \delta \log |\mathcal{Y}|  - \delta \sum_{l=1}^L\log |\mathcal{U}_l| \nonumber\\
&\ge E - 3 \mu - \delta \log |\mathcal{Y}|  - \delta \sum_{l=1}^L\log |\mathcal{U}_l|,
\end{align}
where the last inequality follows from (37) and the fact that $n$ can be made arbitrarily large. We conclude from (38) and (47) that $$\left(R_1+\epsilon,\dots,R_L+\epsilon,E-3 \mu - \delta \log |\mathcal{Y}|  - \delta \sum_{l=1}^L\log |\mathcal{U}_l|\right)$$ is in ${\mathcal{R}}_{*}^{CI}$. Since this is true for any $\delta > 0, \epsilon >0,$ and $\mu > 0,$ we have that $(\mb{R},E)$ is in $\overline{{\mathcal{R}}_{*}^{CI}}$. This together with Corollary 1 implies that $(\mb{R},E)$ is in ${\mathcal{R}}^{CI}.$
\\ \\ \Large
{\textbf{Appendix C:}\hspace{0.05in} \textbf{Proof of Theorem 2}}\newline
\normalsize \\
Suppose $(\mb{R},E)$ is in $\mathcal{R}_{*}^{CI}$. Then there exists a block length $n$ and encoders $f_l^{(n)}$ such that
\begin{align}
R_l &\ge \frac{1}{n}\log \left|f_l^{(n)}\left({X}_l^n\right)\right| \hspace{0.05in} \textrm{for all \emph{l} in} \hspace{0.05in} \mathcal{L}, \hspace {0.05 in} \textrm{and} \\
E &\le  \frac{1}{n}I\biggr(\left (f_l^{(n)} \left ({X}_l^n \right)\right)_{l \in \mathcal{L}}, X_{L+1}^n;{Y}^n \Bigr|Z^n\biggr).
\end{align}
Consider any $X$ in $\chi$. Let $T$ be a time sharing random variable uniformly distributed over $\{1,\dots,n\}$ and independent of $(\mb{X}^n, X_{L+1}^n,X^n, Y^n, Z^n)$. Define
\begin{align*}
X_l &= X_l^n(T) \hspace {0.05 in} \textrm{for each} \hspace {0.05 in} l  \hspace {0.05 in}\textrm{in} \hspace {0.05 in} \mathcal{L} \cup \{L+1\},\\
X &= X^n(T), \\
Y &= Y^n(T), \\
Z &= Z^n(T), \\
U_l & = \left(f_l^{(n)}\left({X}_l^n\right), X^n(1 : T-1), X^n_{L+1}(T^c), Z^n(T^c) \right)  \hspace {0.05 in} \textrm{for each} \hspace {0.05 in} l  \hspace {0.05 in}\textrm{in} \hspace {0.05 in} \mathcal{L}, \hspace {0.05 in}\textrm{and} \\
W &= \left (X^n(T^c),X^n_{L+1}(T^c),Z^n(T^c) \right ).
\end{align*}
It is easy to verify that $\lambda_o = (\mb{U}, W, T)$ is in $\Lambda_o$ and
\[
X \leftrightarrow (\mb{X},X_{l+1}, Y,Z) \leftrightarrow \lambda_o.
\]
It suffices to show that $(\mb{R}, E)$ is in $\mathcal{R}^{CL}_o(X,\lambda_o)$. We obtain the following from (49)
\begin{align}
E &\le  \frac{1}{n} I\left( \left (f_l^{(n)} \left ({X}_l^n \right)\right)_{l \in \mathcal{L}}, X_{L+1}^n;{Y}^n \Bigr|Z^n\right) \nonumber\\
&= \frac{1}{n} \left [H(Y^n|Z^n) - H\left (Y^n\Bigr |\left (f_l^{(n)} \left ({X}_l^n \right)\right)_{l \in \mathcal{L}},  X_{L+1}^n,Z^n \right) \right ]\nonumber\\
&=\frac{1}{n}\sum_{i=1}^n \left [ H(Y^n(i)|Z^n(i)) - H\left (Y^n(i)\Bigr |\left (f_l^{(n)} \left ({X}_l^n \right)\right)_{l \in \mathcal{L}}, Y^n(1:i-1),  X_{L+1}^n,Z^n \right)\right ] \nonumber\\
&\le \frac{1}{n}\sum_{i=1}^n \left [ H(Y^n(i)|Z^n(i)) - H\left (Y^n(i)\Bigr |\left (f_l^{(n)} \left ({X}_l^n \right)\right)_{l \in \mathcal{L}}, Y^n(1:i-1), X^n(1:i-1), X_{L+1}^n,Z^n \right)\right ] \\
&= \frac{1}{n}\sum_{i=1}^n \left [ H(Y^n(i)|Z^n(i)) - H\left (Y^n(i)\Bigr |\left (f_l^{(n)} \left ({X}_l^n \right)\right)_{l \in \mathcal{L}}, X^n(1:i-1), X_{L+1}^n,Z^n \right)\right ] \\
&= \frac{1}{n}\sum_{i=1}^n  I \left (Y^n(i) ; \left (f_l^{(n)} \left ({X}_l^n \right)\right)_{l \in \mathcal{L}}, X^n(1:i-1),X^n_{L+1}(i^c),Z^n(i^c), X_{L+1}^n(i) \Bigr| Z^n(i) \right) \nonumber\\
&=  I \left (Y^n(T) ; \mb{U},X^n_{L+1}(T) | Z^n(T), T \right) \nonumber\\
&=  I \left (Y ; \mb{U} , X_{L+1}| Z, T \right),\nonumber
\end{align}
where
\begin{enumerate}
\item[(50)] follows from conditioning reduces entropy, and
\item[(51)] follows because of the Markov chain
\[
Y^n(1:i-1) \leftrightarrow \left (X^n(1:i-1),Z^n(1:i-1)\right) \leftrightarrow \left (\left (f_l^{(n)} \left ({X}_l^n \right)\right)_{l \in \mathcal{L}},X_{L+1}^n, Y^n(i), Z^n(i:n) \right).
\]
\end{enumerate}
Now let $S \subseteq \mathcal{L}$. Then (48) implies
\begin{align}
n \sum_{l \in S} R_l &\ge \sum_{l \in S} \log \left|f_l^{(n)}\left({X}_l^n\right)\right| \nonumber\\
&\ge \sum_{l \in S} H \left (f_l^{(n)}\left({X}_l^n\right)\right) \nonumber\\
&\ge H \left (\left (f_l^{(n)}\left({X}_l^n\right) \right )_{l \in S}\right) \nonumber\\
&\ge H \left (\left (f_l^{(n)}\left({X}_l^n\right) \right )_{l \in S} \Bigr | \left (f_l^{(n)}\left({X}_l^n\right) \right )_{l \in S^c}, X_{L+1}^n,Z^n\right) \\
&= I \left (X^n, \mb{X}^n_S ; \left (f_l^{(n)}\left({X}_l^n\right) \right )_{l \in S} \Bigr | \left (f_l^{(n)}\left({X}_l^n\right) \right )_{l \in S^c}, X_{L+1}^n,Z^n\right) \nonumber\\
&= I \left (X^n ; \left (f_l^{(n)}\left({X}_l^n\right) \right )_{l \in S} \Bigr | \left (f_l^{(n)}\left({X}_l^n\right) \right )_{l \in S^c}, X_{L+1}^n,Z^n\right) \nonumber\\
&\hspace{0.5in}+ I \left (\mb{X}^n_S ; \left (f_l^{(n)}\left({X}_l^n\right) \right )_{l \in S} \Bigr | \left (f_l^{(n)}\left({X}_l^n\right) \right )_{l \in S^c}, X^n, X_{L+1}^n,Z^n\right) \nonumber\\
&= \sum_{i=1}^n I \left (X^n(i) ; \left (f_l^{(n)}\left({X}_l^n\right) \right )_{l \in S} \Bigr | \left (f_l^{(n)}\left({X}_l^n\right) \right )_{l \in S^c}, X^n(1:i-1), X_{L+1}^n,Z^n\right) \nonumber\\
&\hspace{0.5in} + \sum_{l \in S}I \left ({X}^n_l ; f_l^{(n)}\left({X}_l^n\right) \Bigr | X^n, X_{L+1}^n,Z^n\right),
\end{align}
where
\begin{enumerate}
\item[(52)] follows from conditioning reduces entropy, and
\item[(53)] follows because $X$ is in $\chi$.
\end{enumerate}
We next lower bound the second sum in (53).
\begin{align}
I &\left ({X}^n_l ; f_l^{(n)}\left({X}_l^n\right) \Bigr | X^n, X_{L+1}^n,Z^n\right) \nonumber\\
&= \sum_{i=1}^n I \left ({X}^n_l(i) ; f_l^{(n)}\left({X}_l^n\right) \Bigr | X^n, {X}^n_l(1:i-1),X_{L+1}^n,Z^n\right)\nonumber \\
&=\sum_{i=1}^n \left [H\left ({X}^n_l(i) \Bigr | X^n, {X}^n_l(1:i-1),X_{L+1}^n,Z^n\right) - H\left ({X}^n_l(i) \Bigr | f_l^{(n)}\left({X}_l^n\right) , X^n, {X}^n_l(1:i-1),X_{L+1}^n,Z^n\right) \right ]\nonumber \\
&\ge \sum_{i=1}^n \left [H\left ({X}^n_l(i) \Bigr | X^n, X_{L+1}^n,Z^n\right) - H\left ({X}^n_l(i) \Bigr | f_l^{(n)}\left({X}_l^n\right) , X^n, X_{L+1}^n,Z^n\right) \right ] \\
&= \sum_{i=1}^n I\left ({X}^n_l(i) ; f_l^{(n)}\left({X}_l^n\right) \Bigr | X^n, X_{L+1}^n,Z^n\right),
\end{align}
where (54) again follows from conditioning reduces entropy. On applying (55) in (53), we obtain
\begin{align}
\sum_{l \in S} R_l &\ge\frac{1}{n}\sum_{i=1}^n \biggr [ I \left (X^n(i) ; \left (f_l^{(n)}\left({X}_l^n\right) \right )_{l \in S} \Bigr | \left (f_l^{(n)}\left({X}_l^n\right) \right )_{l \in S^c}, X^n(1:i-1), X_{L+1}^n,Z^n\right) \nonumber\\
 &\hspace{0.4in} + \sum_{l \in S}I \left ({X}^n_l(i) ; f_l^{(n)}\left({X}_l^n\right) \Bigr | X^n, X_{L+1}^n,Z^n\right) \biggr ].
\end{align}
If $S^c$ is nonempty, then continuing from (56) gives
\begin{align*}
\sum_{l \in S} R_l &\ge I \left (X^n(T) ; \mb{U}_S \bigr | \mb{U}_{S^c}, X_{L+1}^n(T),Z^n(T),T\right) \\
&\hspace{0.3in}+ \sum_{l \in S}I \left ({X}^n_l(T) ; U_l \bigr | X^n(T), X_{L+1}^n(T),Z^n(T), X^n(T^c),X_{L+1}^n(T^c), Z^n(T^c),T\right) \\
 &= I \left (X ; \mb{U}_S \bigr | \mb{U}_{S^c}, X_{L+1},Z,T\right) + \sum_{l \in S}I \left ({X}_l ; U_l \bigr | X, W, X_{L+1},Z,T\right).
\end{align*}
Finally if $S = \mathcal{L}$, then
\begin{align}
I &\left (X^n(i) ; \left (f_l^{(n)}\left({X}_l^n\right) \right )_{l \in S} \Bigr | \left (f_l^{(n)}\left({X}_l^n\right) \right )_{l \in S^c}, X^n(1:i-1), X_{L+1}^n,Z^n\right) \nonumber \\
&= I \left (X^n(i) ; \left (f_l^{(n)}\left({X}_l^n\right) \right )_{l \in S} \Bigr |  X^n(1:i-1), X_{L+1}^n,Z^n\right) \nonumber\\
&= I \left (X^n(i) ; \left (f_l^{(n)}\left({X}_l^n\right) \right )_{l \in S}, X^n(1:i-1), X_{L+1}^n(i^c),Z^n(i^c) \Bigr | X_{L+1}^n(i), Z^n(i)\right).
\end{align}
Substituting (57) into (56) yields
\begin{align*}
\sum_{l \in \mathcal{L}} R_l &\ge I \left (X ; \mb{U} \bigr | X_{L+1},Z,T\right) + \sum_{l \in \mathcal{L}}I \left ({X}_l ; U_l \bigr | X, W,X_{L+1},Z,T\right).
\end{align*}
This completes the proof of Theorem 2.\\ \\ \Large
{\textbf{Appendix D:}\hspace{0.05in} \textbf{Proof of Lemma 2}}\newline
\normalsize \\
It suffices to show that (C6) implies (C7). The other direction immediately follows by letting $\epsilon \rightarrow 0.$ We can assume without loss of generality that $|\mathcal{X}| \ge 2$ because the lemma trivially holds otherwise. Let $\mathcal{X} = \{1,2,\dots,|\mathcal{X}|\}$ be the alphabet set of $X$. Let $P_i$ be the $i$th row of the stochastic matrix $P_{Y|X}$ corresponding to $X = i$. We need the following lemma.
\begin{Lem}
If (C6) holds, then rows $P_i$ corresponding to positive $P_X(i)$ are distinct.
\end{Lem}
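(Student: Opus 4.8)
The plan is to prove the contrapositive. Suppose, toward a contradiction, that two rows coincide, say $P_i = P_j$ for some $i \neq j$ with $P_X(i) > 0$ and $P_X(j) > 0$. I will exhibit a finite-alphabet random variable $U$ satisfying both Markov conditions $Y \leftrightarrow X \leftrightarrow U$ and $Y \leftrightarrow U \leftrightarrow X$ yet with $H(X|U) > 0$, which contradicts (C6). The idea is that when two rows of $P_{Y|X}$ agree, the corresponding values of $X$ are statistically indistinguishable from the point of view of $Y$, so they can be merged without destroying any conditional independence structure.

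Concretely, I would define $U = \psi(X)$, where $\psi$ is the deterministic map that merges the symbols $i$ and $j$ into a single new symbol $\ast$ and fixes every other symbol, i.e.\ $\psi(i) = \psi(j) = \ast$ and $\psi(k) = k$ for $k \notin \{i,j\}$. Since $U$ is a function of $X$, the chain $Y \leftrightarrow X \leftrightarrow U$ holds trivially: $U$ is conditionally independent of $Y$ given $X$.

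The key step, and the crux of the whole argument, is to verify the second chain $Y \leftrightarrow U \leftrightarrow X$, that is, that $X$ and $Y$ are conditionally independent given $U$. For any realization $u = k$ with $k \notin \{i,j\}$ and $P_X(k) > 0$, the event $\{U = k\}$ forces $X = k$, so $X$ is constant given $U=k$ and the factorization is immediate. For $u = \ast$, the assumed equality $P_i = P_j$ gives $P_{Y|U}(y|\ast) = P_i(y)$, while $P_{Y|XU}(y|x,\ast) = P_i(y)$ for both $x = i$ and $x = j$; hence $P_{XY|U}(x,y|\ast) = P_{X|U}(x|\ast)\,P_{Y|U}(y|\ast)$, establishing conditional independence at $\ast$ as well. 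This is the only point at which the equality of the two rows is used, and it is where I expect the main (though mild) effort to lie.

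Finally, I would observe that $H(X|U) > 0$: conditioned on the positive-probability event $\{U = \ast\}$, which has probability $P_X(i) + P_X(j) > 0$, the variable $X$ takes the values $i$ and $j$ each with positive probability, so $H(X \mid U = \ast) > 0$. Therefore $U$ satisfies both Markov constraints in (C6) while $H(X|U) > 0$, contradicting (C6). This completes the contrapositive and hence shows that, under (C6), the rows $P_i$ corresponding to positive $P_X(i)$ must be distinct.
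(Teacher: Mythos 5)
Your proposal is correct and follows essentially the same route as the paper: both merge the two symbols with equal rows into a single symbol via a deterministic function $U$ of $X$, note that $Y \leftrightarrow X \leftrightarrow U$ holds trivially, establish $Y \leftrightarrow U \leftrightarrow X$ from the row equality, and conclude $H(X|U) > 0$ to contradict (C6). The only (cosmetic) difference is in verifying the second Markov chain: you check the factorization $P_{XY|U} = P_{X|U}P_{Y|U}$ pointwise for each value of $U$, while the paper instead computes $H(Y|U) = H(Y|X)$ and deduces $I(X;Y|U)=0$ --- both are valid and of comparable length.
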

\begin{proof}
The proof is by contradiction. Suppose that $P_X(1)$ and $P_X(2)$ are positive and $P_1 = P_2.$ Let us define a random variable $U$ as
\[
U \triangleq \left\{
\begin{array}{l l}
  2 & \quad \mbox{if $X = 1,2$}\\
  X & \quad \mbox{otherwise.}\\ \end{array} \right.
\]
The stochastic matrix $P_{X|U}$ has
\begin{align*}
P_{X|U}(1|2) &= \frac{P_X(1)}{P_X(1)+P_X(2)}, \\
P_{X|U}(2|2) &= \frac{P_X(2)}{P_X(1)+P_X(2)}, \hspace{0.05in} \textrm{and} \\
P_{X|U}(i|i) &= 1 \hspace{0.05in} \textrm{ for all} \hspace{0.05in} i \hspace{0.05in} \textrm{in} \hspace{0.05in} \left \{3,4,\dots,|\mathcal{X}| \right\}.
\end{align*}
It is easy to see that $Y, X,$ and $U$ form a Markov chain
\begin{align}
Y \leftrightarrow X \leftrightarrow U.
\end{align}
We now have
\begin{align}
H(Y|U) &= \sum_{i = 2}^{|\mathcal{X}|} H(Y|U=i) P_U(i) \nonumber\\
&= H(Y|U=2) P_U(2)+\sum_{i = 3}^{|\mathcal{X}|} H(Y|U=i) P_U(i) \nonumber\\
&= H \left (\sum_{j=1}^{|\mathcal{X}|} P_j P_{X|U}(j|2)\right) P_U(2)+\sum_{i = 3}^{|\mathcal{X}|} H \left (\sum_{j=1}^{|\mathcal{X}|} P_j P_{X|U}(j|i)\right) P_U(i) \nonumber\\
&= H \left (P_2\right) P_U(2)+\sum_{i = 3}^{|\mathcal{X}|} H \left (P_i\right) P_U(i) \nonumber\\
&= \sum_{i = 2}^{|\mathcal{X}|} H \left (P_i \right) P_U(i),
\end{align}
and
\begin{align}
H(Y|X) &= \sum_{j = 1}^{|\mathcal{X}|} H(P_j) P_X(j)   \nonumber\\
&= \sum_{j = 1}^{|\mathcal{X}|} H(P_j) \left(\sum_{i=2}^{|\mathcal{X}|}P_{X|U}(j|i)P_U(i)\right)   \nonumber\\
&= \sum_{i = 2}^{|\mathcal{X}|} P_U(i) \sum_{j = 1}^{|\mathcal{X}|} P_{X|U}(j|i) H \left (P_j \right) \nonumber\\
&= P_U(2) \sum_{j = 1}^{|\mathcal{X}|} P_{X|U}(j|2) H \left (P_j \right) + \sum_{i = 3}^{|\mathcal{X}|} P_U(i) \sum_{j = 1}^{|\mathcal{X}|} P_{X|U}(j|i) H \left (P_j \right) \nonumber\\
&= P_U(2) H \left (P_2 \right) + \sum_{i = 3}^{|\mathcal{X}|} P_U(i) H \left (P_i \right) \nonumber\\
&= \sum_{i = 2}^{|\mathcal{X}|}P_U(i) H \left (P_i \right).
\end{align}
Now (58) through (60) together imply that $I(X;Y|U) = 0$, and hence $Y \leftrightarrow U \leftrightarrow X.$ However,
\begin{align*}
H(X|U) &= \sum_{i = 2}^{|\mathcal{X}|} H(X|U=i) P_U(i) \nonumber\\
&=  H(X|U=2) P_U(2) \nonumber\\
&= H_b\left(\frac{P_X(1)}{P_X(1)+P_X(2)}\right) (P_X(1)+P_X(2)) \nonumber\\
&> 0,
\end{align*}
which contradicts our assumption that (C6) holds.
\end{proof}

Consider any $U$ that satisfies the Markov chain
\[
U \leftrightarrow X \leftrightarrow Y.
\]
We can assume without loss of generality that $P_U(u)$ is positive for all $u$ in $\mathcal{U}$ because only positive $P_U(u)$ contributes to $H(X|U)$ and $I(X;Y|U)$ in conditions (C6) and (C7). Then
\begin{align}
I(X;Y|U) &= H(Y|U) - H(Y|X) \nonumber\\
&= \sum_{u \in \mathcal{U}} H(Y|U=u) P_U(u) - \sum_{i = 1}^{|\mathcal{X}|} P_X(i) H(P_i) \nonumber\\
&= \sum_{u \in \mathcal{U}} H \left (\sum_{i=1}^{|\mathcal{X}|} P_i P_{X|U}(i|u) \right) P_U(u) - \sum_{i = 1}^{|\mathcal{X}|} \left(\sum_{u \in \mathcal{U}} P_{X|U}(i|u)P_U(u)\right) H(P_i) \nonumber\\
&= \sum_{u \in \mathcal{U}} P_U(u) \left [H \left (\sum_{i=1}^{|\mathcal{X}|} P_i P_{X|U}(i|u) \right)- \sum_{i = 1}^{|\mathcal{X}|} P_{X|U}(i|u)H(P_i) \right ] \nonumber\\
&= \sum_{u \in \mathcal{U}} P_U(u) T\left(P_{X|U}(.|u)\right),
\end{align}
where (61) follows by setting
\[
T\left(P_{X|U}(.|u)\right) \triangleq H \left (\sum_{i=1}^{|\mathcal{X}|} P_i P_{X|U}(i|u) \right)- \sum_{i = 1}^{|\mathcal{X}|} P_{X|U}(i|u)H(P_i).
\]
Since entropy is a strictly concave and continuous function, $T$ is a nonnegative continuous function of $P_{X|U}(.|u)$. Moreover, for any $u$ in $\mathcal{U}$, $P_{X|U}(i|u)=0$ for all $i$ in $\mathcal{X}$ such that $P_X(i)=0$. Let $\mathcal{P}$ denote the set of all such $P_{X|U}(.|u)$. Define
\[
\gamma(\delta) \triangleq \sup_{P \in \mathcal{P}} \{H(P) : T(P) \le \delta \}.
\]
It now follows from Lemma 8 that if $T(P)=0$ for some $P$ in $\mathcal{P}$, then $P$ must be a point mass and hence $H(P)=0$. Therefore, $\gamma(0)=0$. We next show that $\gamma$ is continuous at $0$. Consider a nonnegative sequence $\delta_n \rightarrow 0$. Then there exists a sequence of distributions $P_n$ in $\mathcal{P}$ such that
\begin{align}
T(P_n) &\le \delta_n \\
H(P_n) &\ge \frac{\gamma(\delta_n)}{2}.
\end{align}
Now, since the set of all distributions on $\mathcal{X}$ is a compact set, by considering a subsequence, we can assume without loss of generality that $P_n$ converges to $P$ in $\mathcal{P}$. By letting $n \rightarrow \infty$ in (62), we obtain that $T(P) = 0$, i.e., $P$ is a point mass. Therefore, $H(P)=0$. It now follows from (63) that $\gamma(\delta_n)\rightarrow 0=\gamma(0)$ as $n \rightarrow \infty$. Hence, $\gamma$ is continuous at $0$.

Fix $0 < \epsilon < \log |\mathcal{X}|$ (condition (C7) is always true for $\epsilon \ge \log |\mathcal{X}|$). Choose $\epsilon_1 > 0$ such that $\gamma\left (\epsilon_1/{\log |\mathcal{X}|}\right)+\epsilon_1 = \epsilon$. Set $\delta = \left(\epsilon_1/{\log |\mathcal{X}|}\right)^2$. Let $I(X;Y|U) \le \delta$. Define the sets
\begin{align*}
\mathcal{U}_1 &\triangleq\left \{u \in \mathcal{U} : T(u) \le \sqrt{\delta}\right\} \hspace{0.05in} \textrm{and} \hspace{0.05in} \mathcal{U}_2 \triangleq \mathcal{U} \setminus \mathcal{U}_1.
\end{align*}
Note that $\mathcal{U}_1$ is nonempty because $\delta < 1$.  We now have
\begin{align*}
\delta &\ge I(X;Y|U) \\
&= \sum_{\mathcal{U}} P_U(u) T(u) \\
&\ge \sum_{\mathcal{U}_2} P_U(u) T(u) \\
&> \sqrt{\delta} \sum_{\mathcal{U}_2} P_U(u),
\end{align*}
which implies $$\sum_{\mathcal{U}_2} P_U(u) < \sqrt{\delta}.$$ Hence,
\begin{align*}
H(X|U) &= \sum_{\mathcal{U}_1} H(X|U=u) P_U(u) + \sum_{\mathcal{U}_2} H(X|U=u) P_U(u) \nonumber\\
&< \gamma \bigr(\sqrt{\delta} \bigr)  +  \sqrt{\delta} \log |\mathcal{X}|\\
&= \gamma\left (\epsilon_1/{\log |\mathcal{X}|}\right)+\epsilon_1\\
&= \epsilon.
\end{align*}

\end{document}